\pgfplotsset{compat=1.18}
\newcommand{\xhdr}[1]{\vspace{2mm}\noindent{\bf {#1}\ }}
\crefname{enumi}{part}{parts}
\newcommand{\colorfootnote}[1]{\textcolor{blue}{\footnote{#1}}}
\theoremstyle{plain}
\newtheorem{theorem}{Theorem}[section]
\newtheorem{lemma}[theorem]{Lemma}
\newtheorem{corollary}[theorem]{Corollary}
\theoremstyle{plain}
\newtheorem{definition}{Definition}[section] \newtheorem{example}[definition]{Example}
\newtheorem{assumption}[definition]{Assumption} 
\newtheorem{remark}[definition]{Remark}
\preto\part{\setcounter{section}{0}}
\newcommand{\supp}[1]{\mathtt{supp}(#1)}
\newcommand{\pert}{\delta}
\newcommand{\temp}{z}
\newcommand{\tempfunc}{Z}
\newcommand{\hr}{h}
\newcommand{\HR}{H}
\newcommand{\ratiofb}{\mathrm{PoA}}
\newcommand{\ratiosb}{\mathrm{PoDM}}
\newcommand{\virval}{\phi}
\newcommand{\mech}{\mathcal{M}}
\newcommand{\allocrule}{\boldsymbol{\alloc}}
\newcommand{\paymentrule}{\boldsymbol{\payment}}
\newcommand{\alloc}{x}
\newcommand{\payment}{p}
\newcommand{\utip}{U} \newcommand{\utipfb}{\utip_{\textsc{fb}}}
\newcommand{\utipsb}{\utip_{\textsc{sb}}}
\newcommand{\utippost}{\utip_{\mathrm{posted}}}
   \newcommand{\agents}{[n]}
\newcommand{\agentnum}{n}
\newcommand{\tempnum}{k}
\newcommand{\costprf}{\boldsymbol{\cost}}
\newcommand{\outcome}{\lambda}
\newcommand{\contract}{\alpha}
\newcommand{\rewardfun}{r} \newcommand{\quantile}{q}
\newcommand{\monoquan}{q^{*}}
\newcommand{\revenuecurve}{R}
\newcommand{\ratiooa}{\eta}
\newcommand{\ratioma}{\tau}
\newcommand{\succpr}{\rho}
\newcommand{\reserve}{\theta}
\newcommand{\price}{p}
\newcommand{\poa}{PoA\xspace}
\newcommand{\podm}{PoDM\xspace}
\newcommand{\cost}{c}
\newcommand{\costdis}{G}
\newcommand{\costdisprf}{\boldsymbol{\costdis}}
\newcommand{\costdens}{\costdis'}
\newcommand{\costfun}{\upsilon}
\newcommand{\costdisfun}{\varphi}
\newcommand{\val}{v}
\newcommand{\valdis}{F}
\newcommand{\valdisprf}{\boldsymbol{\valdis}}
\newcommand{\valdens}{\valdis'}
\newcommand{\valreserve}{\theta}
\newcommand{\valprf}{\boldsymbol{\val}}
\newcommand{\valmax}{\val_{\max}}
\newcommand{\welfare}{w}
\newcommand{\welfareprf}{\boldsymbol{\welfare}}
\newcommand{\welfaremax}{\welfare_{\max}}
\newcommand{\welfaredis}{F}
\newcommand{\welfaredisprf}{\boldsymbol{\welfaredis}}
\newcommand{\welfaredens}{\welfaredis'}
 \newcommand{\cwelfare}{contracted contribution\xspace}
\newcommand{\cwelfares}{contracted contributions\xspace}
\newcommand{\cwf}{\welfare}
\newcommand{\cwfprf}{\boldsymbol{\welfare}}
\newcommand{\contribution}{contribution\xspace}
\newcommand{\contributions}{contributions\xspace}
\newcommand{\Contribution}{Contribution\xspace}
\newcommand{\upsupp}{\bar{w}}
\newcommand{\lowsupp}{\underline{w}}
\newcommand{\ratiolu}{\kappa}
\newcommand{\harmonic}{\mathcal{H}}
\newcommand{\zjs}{intermediaries\xspace}
\newcommand{\zj}{intermediary\xspace}
\newcommand{\Zj}{Intermediary\xspace}
\newcommand{\Zjs}{Intermediaries\xspace}
\newcommand{\lownum}{s}
\newcommand{\upnum}{\ell}
\newcommand{\ctrctone}{\boldsymbol{1}}
\newcommand{\ctrct}{t}
\newcommand{\ctrctprf}{\boldsymbol{\ctrct}}
\newcommand{\ctrctlnr}{\contract}
\newcommand{\ctrctlnropt}{\ctrctlnr^{\star}}
\newcommand{\ctrctprfvar}{\ctrctprf^{\prime}}
\newcommand{\ctrctprfopt}{\ctrctprf^{\star}}
\newcommand{\otcnum}{m}
\newcommand{\otc}{\lambda}
\newcommand{\otcs}{[\otcnum]}
\newcommand{\otcdis}{\rho}
\newcommand{\otcdisprf}{\boldsymbol{\otcdis}}
\newcommand{\inner}[2]{\left\langle #1, #2 \right\rangle}
\newcommand{\exrwd}{\rewardfun}
\newcommand{\rwd}{\gamma}
\newcommand{\rwdprf}{\boldsymbol{\rwd}}
\newcommand{\ommhrupbound}{3.033}
\newcommand{\apmhrupbound}{3.448}
\newcommand{\set}[1]{\left\{#1\right\}}
\newcommand{\parent}[1]{\left(#1\right)}
\newcommand{\parents}[1]{\left[#1\right]}
\newcommand{\suchthat}{\,:\,}
\newcommand{\deq}{\triangleq}
\DeclareMathOperator*{\argmax}{arg\,max}
\newcommand{\reals}{\mathbb{R}}
\newcommand{\nnreals}{\mathbb{R}_+}
\newcommand{\naturals}{\mathbb{N}}
\DeclareMathOperator*{\prb}{\mathrm{Pr}}
\newcommand{\prob}[2][]{\prb\ifthenelse{\not\equal{}{#1}}{\nolimits_{#1}}{}\!\left[{\def\givenn{\middle|}#2}\right]}
\newcommand{\expect}[2][]{\mathbb{E}\ifthenelse{\not\equal{}{#1}}{_{#1}}{}\!\left[{\def\givenn{\middle|}#2}\right]}
\newcommand{\tparen}{\big}
\newcommand{\tprob}[2][]{\text{Pr}\ifthenelse{\not\equal{}{#1}}{_{#1}}{}\tparen[{\def\given{\tparen|}#2}\tparen]}
\newcommand{\texpect}[2][]{\mathbb{E}\ifthenelse{\not\equal{}{#1}}{_{#1}}{}\tparen[{\def\given{\tparen|}#2}\tparen]}
\newcommand{\sprob}[2][]{\text{Pr}\ifthenelse{\not\equal{}{#1}}{_{#1}}{}[#2]}
\newcommand{\sexpect}[2][]{\mathbb{E}\ifthenelse{\not\equal{}{#1}}{_{#1}}{}[#2]}
\newcommand{\dd}{{\mathrm d}}
\newcommand{\ee}{{\mathrm e}}
\newcommand{\DDX}[2][]{{\frac{\dd #1}{\dd #2}}}
\newcommand{\bigO}[2]{\mathcal{O}_{#1}\parent{#2}}
\newcommand{\smallO}[2]{o_{#1}\parent{#2}}
\newcommand{\bigtheta}[2]{\Theta_{#1}\parent{#2}}
\newcommand{\bigomega}[2]{\Omega_{#1}\parent{#2}}
\newcommand{\bO}[1]{\mathcal{O}\parent{#1}}
\newcommand{\ceil}[1]{\lceil #1 \rceil}
\newcommand{\floor}[1]{\lfloor #1 \rfloor}
\newcommand{\bceil}[1]{\left\lceil #1 \right\rceil}
\newcommand{\plus}[1]{{\left( #1 \right)^+}}
\title{Contracting with a Mechanism Designer}
\author{
Tian Bai\thanks{University of Bergen. Email: {\tt tian.bai@uib.no}}
\and
Yiding Feng\thanks{Hong Kong University of Science and Technology. Email: {\tt ydfeng@ust.hk}}
\and
Yaohao Liu\thanks{University of Electronic Science and Technology of China. Email: {\tt yaohao.liu594@gmail.com}}
\and
Mengfan Ma\thanks{Central China Normal University. Email: {\tt mengfanma1@gmail.com}}
\and
Mingyu Xiao\thanks{University of Electronic Science and Technology of China. Email: {\tt myxiao@uestc.edu.cn}}
}
\date{}
\begin{document}

\maketitle
\begin{abstract}
    This paper explores the economic interactions within modern crowdsourcing markets. In these markets, employers issue requests for tasks, platforms facilitate the recruitment of crowd workers, and workers complete tasks for monetary rewards. Recognizing that these roles serve distinct functions within the ecosystem, we introduce a three-party model that distinguishes among the principal (the requester), the intermediary (the platform), and the pool of agents (the workers). The principal, unable to directly engage with agents, relies on the intermediary to recruit and incentivize them. This interaction unfolds in two stages: first, the principal designs a profit-sharing contract with the intermediary; second, the intermediary implements a mechanism to select an agent to complete the delegated task.

We analyze the proposed model as an extensive-form Stackelberg game. Our contributions are threefold:  
First, we fully characterize the subgame perfect equilibrium of our model.
In particular, the principal's contract design problem can be represented as \emph{virtual value pricing}, a novel auction-theoretic formulation. We identify the \emph{optimality of linear contracts}, even when the task has multiple outcomes and agents' cost distributions are asymmetric.
Second, to quantify the principal's utility loss from delegation and information asymmetry, we introduce the \emph{price of double marginalization} (PoDM) and the classical \emph{price of anarchy} (PoA). We derive tight or nearly tight bounds on both ratios under regular and monotone hazard rate distributions.  
Finally, we extend our analysis to two natural variants of the base model: (i) the intermediary is restricted to anonymous pricing mechanisms, and (ii) the principal lacks precise information about the market size.

 \end{abstract}

\thispagestyle{empty}
\newpage
\setcounter{page}{1}

\section{Introduction}
\label{sec:intro}
Over the past two decades, crowdsourcing has become an integral part of the modern economy \citep{How-06}, enabling companies and institutions to connect with a globally distributed workforce through an open call.
According to a report by \citet{link-dataintelo}, the global crowdsourcing market was valued at USD 2.8 billion in 2023 and is projected to reach USD 7.5 billion by 2032, growing at a compound annual growth rate of 11\% during the forecast period. 
This growth is primarily driven by the increasing adoption of social and digital platforms, which enable more efficient and large-scale crowdsourcing efforts.

Prominent platforms such as Freelancer, 99designs, and Upwork demonstrate how crowdsourcing has expanded across various industries, including product development, marketing, data collection, innovation, and others. 
One of the most attractive features of these platforms is their ability to provide flexible and quick access to a diverse talent pool, transcending geographical barriers to enable seamless collaboration. 
This capacity is particularly appealing to employers who (i) face high costs when hiring local workers, (ii) have difficulty finding skilled professionals, or (iii) require the flexibility to scale projects up or down based on demand, including one-time hires.
Acting as intermediaries, these platforms allow employers to function as task requesters while relieving them of project management duties, facilitating efficient collaboration with crowd workers worldwide.

To better understand the dynamics of crowdsourcing markets, it is essential to identify three primary parties: employers (also known as requesters)\colorfootnote{For instance, in \href{https://requester.mturk.com/}{Amazon Mechanical Turk}, employers are referred to as ``requesters''.}, platforms, and crowd workers.
The general workflow operates as follows: employers issue requests for tasks or projects, platforms facilitate the recruitment of crowd workers, and crowd workers complete the assigned tasks.
To motivate workers, platforms typically use monetary incentives to compensate for their efforts.\colorfootnote{There are crowdsourcing systems that offer workers non-monetary incentives, such as altruistic fulfillment \citep{MR-22,SLS-22}, access to information \citep{JCP-09,PBS-18,GJ-21}, and entertainment \citep{VD-08,DVARPKM-24}.
However, in this paper, we focus on monetary incentives and emphasize that the overwhelming majority of crowdsourcing tasks are completed for financial compensation.}
In most cases, however, the types of crowd workers remain private, motivating platforms to adopt mechanisms that extract this information while maximizing the platforms' profit.
In these mechanisms, crowd workers are incentivized to report their types (costs for task completion), after which the platforms allocate tasks and determine payments based on the revealed types.
This platform-worker interaction, modeled as a mechanism, is well understood and has been extensively studied in the literature \citep{Mye-81, Sin-10, SM-13, SW-15, CHS-19, HWHC-23, HAKA_2023, NPS-24}.
However, most existing crowdsourcing models focus solely on this mechanism-based interaction between platforms and workers, overlooking the employers' role in the three-party ecosystem.
In reality, employers play a crucial role in the overall system, as they are the ones who provide payment to the platform, which then compensates the workers and retains the difference between the two payments as its profit.
Unlike employers, platforms have no intrinsic interest in the task itself but are primarily concerned with profit maximization. 
Thus, the way employers compensate the platform directly influences how platforms design their mechanisms.
We provide two concrete motivating examples.

\begin{example}
\href{https://www.freelancer.com,}{Freelancer}
    is a leading online crowdsourcing platform that connects clients with freelancers across various industries.
    As one of the largest freelance marketplaces globally, it has a network of over 78 million professionals from 247 countries, specializing in fields such as web development, data collection, writing, and marketing.
    Clients post job listings with requirements and budgets, while the platform ensures compliance with payment terms and project expectations.
    The platform selects freelancers through bidding, algorithmic recommendations, or direct hiring.
    Upon completion of the task, clients submit payment to the platform, which takes a service fee before transferring the remainder to freelancers.
\end{example}

\begin{example}
\label[example]{eg:99designs}
    The online freelancer platform, \href{https://www.99designs.com}{99designs}, connects clients with a global community of professional graphic designers, specializing in custom designs for logos, branding, websites, and packaging.
    As one of the largest design marketplaces, it has facilitated over 1 million creative projects, with a network of designers from more than 100 countries.
    Clients initiate a design contest or project on the platform by providing a design brief and offering a reward. 
    The platform recommends suitable designers based on the client's needs and the designers' expertise.
    Designers submit proposals in order to be selected for the projects.
    Upon selection, the platform collects payment from the client, subtracts its commission, and transfers the remaining amount to the winning designer.
\end{example}

Motivated by the applications discussed above, we introduce a \emph{principal-\zj-agents} model that captures the economic interactions among the employer (the principal), the platform (the \zj), and the crowd workers (the agents) in crowdsourcing markets. 
In this model, the employer interacts with the platform through a \emph{contract}.
Under this contract, the employer delegates a task to the platform and commits to paying a service fee contingent on the task's realized outcome, which represents the quality level of task completion.
To fulfill this contract, the platform must recruit a worker from a pool of available workers to execute the task.
We model this recruitment process as a Bayesian mechanism, where the workers' private types correspond to their costs of completing the task and are assumed to be independently distributed. 
Each worker, once recruited, completes the task and induces a probability distribution over outcomes, each providing a fixed reward to the principal.
We assume these outcome distributions are publicly known to both the principal and the \zj (see \Cref{subsec:model discussion} for further details on this assumption). 
A graphical illustration of our model can be found in \Cref{fig:model illustration}.

To justify modeling the employer-platform interaction as a contract, we note that the mechanisms employed by online platforms are typically not disclosed. Even when some details are available, they often consist only of high-level descriptions---such as the auction format---while critical parameters, like reserve prices or ranking algorithms, remain hidden.\colorfootnote{For example, the reserve price on sites like \href{https://www.ebay.com/help/buying/bidding/reserve-prices-work?id=4018}{eBay} is hidden, and until the reserve is met, the system displays ``Reserve Not Met.''}
Consequently, the employer observes only the realized outcome of these mechanisms, namely the quality level at which the task is completed.
This setting aligns with the classic principal–agent model in contract theory \citep{Hol-79, GH-83}, where: (i) the employer acts as the principal and the platform as the agent; (ii) the platform's choice of mechanism constitutes the agent's hidden action; and (iii) the payment associated with each mechanism corresponds to the action's cost.
Accordingly, the platform's use of unrevealed mechanisms reflects the hidden-action assumption of the classical model.
We thus refer to our modeling paradigm as \emph{contracting with a mechanism designer}.

Beyond crowdsourcing applications, our model naturally aligns with the broader economic framework of \emph{double marginalization} \citep{Spe-50}, which describes inefficiencies resulting from sequential pricing decisions made by two monopolistic entities within a supply chain, such as a manufacturer and a retailer.
In our context, the employer acts analogously to the manufacturer by setting a contract (comparable to wholesale pricing), while the platform acts as the retailer by subsequently selecting mechanisms (comparable to retail pricing) to engage workers.
As in classic double marginalization scenarios, this decentralized decision-making by both parties can generate inefficiencies not present under an integrated decision-maker.

Our proposed three-party model raises several key questions:
\begin{quote}
    \emph{How can the principal design a contract with the \zj when the exact mechanism employed by the \zj is unknown?
    How should the \zj select her mechanism to recruit agents profitably given the contract offered by the principal? 
    What impact does the \zj's presence have on the principal's utility?}
\end{quote}
The primary goal of this paper is to address these questions by 
(i) characterizing the optimal contract and mechanism for the principal and \zj, respectively, 
and (ii) providing quantitative insights into the impact of the \zj and the market size (i.e., number of agents).

\subsection{Our Contributions and Techniques} \label[section]{sec: contributions}

In this paper, we propose a game-theoretical model that captures the key features of crowdsourcing markets, including the principal-\zj-agents interaction, the \zj's hidden mechanisms, and workers' private costs. In our extensive-form Stackelberg game, the principal offers a contract, the platform selects a mechanism, and workers submit bids to the mechanism.
We first characterize the subgame perfect equilibrium of the game.
In particular, we provide a closed-form characterization for the \zj's optimal mechanism. 
This, in turn, enables a characterization of the principal's optimal contract design.
Second, building on the equilibrium characterization, we derive tight or nearly tight bounds on the principal's utility compared to benchmark scenarios where the principal can directly access agents without the \zj. 
Finally, we extend the analysis to two natural variants: (i) the \zj is restricted to anonymous pricing mechanisms, and (ii) the principal does not have precise information about the market size.

\xhdr{Optimal contract characterization.}
The principal's goal is to determine the optimal contract that maximizes her expected utility.
To derive the subgame perfect equilibrium of our Stackelberg game, we use backward induction.
We first characterize the optimal mechanism that maximizes the \zj's utility for any given contract.
Leveraging this optimal mechanism characterization, we subsequently characterize the contract design problem under three scenarios in \Cref{thm:opt contract}, which range from general to more specialized cases: (i) agents have asymmetric outcome distributions and cost distributions; (ii) agents have identical outcome distributions and asymmetric cost distributions; and (iii) agents have identical outcome distributions and cost distributions.

Optimal contracts are generally complex, and linear contracts are often suboptimal---achieving optimality only in highly restricted settings \citep[e.g.,][]{HM-87,Car-15,DRT-19}. 
In contrast, our results establish the optimality of linear contracts even in settings with multiple outcomes, possibly asymmetric cost distributions, but identical outcome distributions across agents (i.e., scenarios (ii) and (iii)).
This substantially simplifies the contract design space.
Intuitively, when all agents share the same outcome distribution, the principal's payoff depends only on whether an agent is selected, rather than on the identity of the selected agent.
The optimal mechanism, which determines the selection probability, depends only on the expected payment from the contract to the \zj.
Therefore, if there exists an optimal (possibly nonlinear) contract that induces a certain expected payment, we can construct a linear contract with the same expected payment.
By the above reasoning, such a linear contract must also be optimal.

Interestingly, when the cost distributions are also identical (i.e., scenario (iii)), our characterization reveals that the optimal contract design problem can be reformulated as a novel \emph{virtual value pricing problem} in a single-item auction: the seller offers buyers a take-it-or-leave-it price for an item but her received payment is equal to the virtual value of the price rather than the price itself if the item is sold (see \Cref{sec:myerson revisited} for the definition of virtual value). 
Intuitively, the fact that the principal receives the virtual value of the price, rather than the price itself, stems directly from the double marginalization effect.
(The virtual value of a price is always less than or equal to the price under standard regularity conditions, implying a revenue loss due to the presence of the \zj.)
In \Cref{thm:opt contract}, we show that the seller's optimal revenue under the virtual value pricing equals the principal's optimal utility under a contract.
Additionally, there exists a one-to-one correspondence between the seller's price and the principal's contract.  
This virtual value pricing formulation establishes a close connection between contract design in our model and pricing in the classic single-item auction and may be of independent interest. See \Cref{example:running} for an illustration.

As discussed above, the characterization of the optimal contract relies on the analysis of the optimal downstream mechanism implemented by the {\zj}. 
We show that the \zj's mechanism design problem is equivalent to a single-item auction, where the \zj and agents play the roles of the seller and buyers, respectively, and the task is the auctioned item. 
Via a bijection between agents' costs and buyers' values (\Cref{lemma:convert mechanism}), the \zj's utility coincides with the seller's revenue, allowing us to apply standard auction theory. 
In particular, the \zj's optimal mechanism is the \emph{Bayesian revenue-optimal mechanism} \citep{Mye-81} parameterized by the contract offered by the principal (\Cref{thm:opt mech}).
Essentially, we establish a one-to-one correspondence between instances of our mechanism problem and single-item auctions.
We illustrate the principal's optimal contract and \zj's optimal downstream mechanism in the following example:

\begin{example} \label{example:running}
Consider a task with a binary outcome (i.e., success and failure).
The reward is $4$ under success and $0$ under failure.
There are two agents with identical success probability $0.5$ and i.i.d.\ uniform cost distributions supported on $[0, 1]$.

First, suppose the principal and the \zj share the reward equally (50-50 split). Specifically, if the task is successfully completed, the \zj receives $4\times 0.5 = 2$.
The optimal mechanism selected by the \zj (given this 50-50 split contract) works as follows: if both agents' costs exceed $0.5$, neither agent is assigned the task; otherwise, the task is assigned to the agent with the lower cost and she is paid---regardless of the task outcome---the minimum value between the reserve $0.5$ and the other agent's cost.

However, the 50-50 split contract is not optimal for the principal. Suppose the principal chooses the contract that shares $\contract$ fraction of the reward to the \zj. The principal's expected utility (under the \zj's subsequent optimal mechanism) is 
\begin{align*}
    2(1 - \contract)(1-(1 - \contract)^{2}),
\end{align*}
which is optimized at $\contract^{\star} = \frac{3 - \sqrt{3}}{3} \approx 0.423$ yielding an optimal payoff of $\frac{4\sqrt{3}}{9} \approx 0.77$. 
By contrast, the 50-50 split contract ($\contract=0.5$) only yields a payoff of $0.75$.
    
The principal's contract design can be viewed as the following virtual value pricing problem: the contract (reward share) $\contract$ corresponds to posting a price $\price = 2 - \contract$ to two buyers with values independently and uniformly distributed over $[1, 2]$.
A sale occurs only if at least one buyer's value exceeds the price $\price$.
Under the uniform distribution over $[1, 2]$, the virtual value of price $\price$ is $2\price - 2$ and the probability of the item being sold is $1 - (\price - 1)^{2}$.
The corresponding optimal price is $\price^{\star} = 2 - \contract^{\star} = \frac{3 + \sqrt{3}}{3} \approx 1.577$.
\end{example}

\xhdr{Impact of the \zj.} 
Understanding the \zj's impact on the principal's utility is crucial for both developing theoretical insights and informing practical decisions in the crowdsourcing markets.
If the principal's utility when working with the \zj nearly matches that achieved through direct access to workers, then employing the \zj is justified as a cost-effective alternative; conversely, a substantial gap would indicate that the platform's fee is prohibitively high, potentially deterring the principal from employing the \zj.
To quantify the efficiency impact of the \zj, we introduce two performance measures. The \emph{price of double marginalization} (\podm) is defined as the ratio of the principal's utility when she directly engages with agents through a dominant strategy incentive compatible (DSIC) mechanism to her utility in our model with an \zj.
This captures the utility loss due solely to delegation. The \emph{price of anarchy} (\poa) compares the principal's utility in the ideal full-information setting---where agents' private costs are observable and allocations can be optimized directly---with her utility in our model, thereby quantifying the total efficiency loss, including that due to information asymmetry.
For clarity, we refer to the utility under the DSIC mechanism as the \emph{second-best benchmark}, and the utility under the full-information setting as the \emph{first-best benchmark}.\colorfootnote{Our definition of the second-best benchmark slightly deviates from the standard in the auction theory literature \citep{MS-83,DMSW-22}.}

In our analysis, we study these two ratios under the assumption that agents have identical success probabilities and cost distributions.
Under this assumption, as we have established, the principal's maximum utility in our model equals the optimal revenue achieved under the virtual value pricing.
By the correspondence established in our characterization of the \zj's optimal mechanism, the first-best benchmark corresponds to the maximum welfare in the single-item auction instances, whereas the second-best benchmark corresponds to the maximum revenue in the single-item auction.
Thus, bounding the two ratios in our model is equivalent to bounding their analogues in the auction framework---namely, the ratio of optimal revenue under virtual value pricing to the maximum welfare, and to the maximum revenue under a standard auction, respectively.
This connection allows us to leverage established auction theory techniques---such as the revenue curve and hazard rate function---to bound the two ratios.
Our main results on the two ratios are summarized in \Cref{tab:results}.

\begin{table}[t]
    \centering
    \clearpage{}\renewcommand{\arraystretch}{2}

\begin{tabular}{c|cc|cc|}
\cline{2-5}
                                           & \multicolumn{2}{c|}{Optimal mechanism}                                                                                          & \multicolumn{2}{c|}{Anonymous pricing}                                                                                           \\ \cline{2-5} 
                                           & \multicolumn{1}{c|}{\renewcommand{\arraystretch}{1}\begin{tabular}[c]{@{}c@{}}Regular\\\small [\Cref{thm:regular bounds}]\end{tabular}}                            
                                           & \renewcommand{\arraystretch}{1}\begin{tabular}[c]{@{}c@{}}MHR\\\small [\Cref{thm:mhr ratio bounds for n in naturals,thm:r_fb asymp lb}]\end{tabular}                                                                   
                                           & \multicolumn{1}{c|}{\renewcommand{\arraystretch}{1}\begin{tabular}[c]{@{}c@{}}Regular\\\small [\Cref{thm:regular bounds under ap}]\end{tabular}}                            
                                           & \renewcommand{\arraystretch}{1}\begin{tabular}[c]{@{}c@{}}MHR\\\small [\Cref{thm:mhr bounds under ap}]\end{tabular}                                                                    \\ \hline
\multicolumn{1}{|c|}{\podm} & \multicolumn{1}{c|}{$\bigtheta{}{\log \ratiolu}$}       & $[\ee, \ommhrupbound] \overset{\agentnum\rightarrow\infty}{\longrightarrow}1$ & \multicolumn{1}{c|}{$\bigtheta{}{\log \ratiolu}$}       & $[\ee, \apmhrupbound]$ \\ \hline
\multicolumn{1}{|c|}{\poa}  & \multicolumn{1}{c|}{$\bigtheta{}{(\log \ratiolu)^{2}}$} & $\ee^{2}\overset{\agentnum\rightarrow\infty}{\longrightarrow}1$       & \multicolumn{1}{c|}{$\bigtheta{}{(\log \ratiolu)^{2}}$} & $\ee^{2}$        \\ \hline
\end{tabular}

\clearpage{}
    \caption{Bounds of the price of double marginalization (\podm) and the price of anarchy (\poa). Here $\agentnum$ is the number of agents, $\ratiolu$ is the ratio of 
    the largest to the smallest value
in the support of the \contribution distribution (See \Cref{def:welfare and cwelfare} for the formal definition).}
\label{tab:results}
\end{table}

\addtocounter{definition}{-1}
\begin{example}[cont.]
    The first-best benchmark is $\frac{5}{3}$, achieved by selecting the agent with the lower cost and paying her exactly her cost. 
    The second-best benchmark is $\frac{4}{3}$, achieved by assigning the task as follows: if both agents' costs exceed $1$, neither agent is assigned the task; otherwise, the task is assigned to the agent with the lower cost, and she is paid the maximum value between $2$ minus the other agent's cost and $1$.
    Consequently, the price of anarchy is $\frac{5\sqrt{3}}{4} \approx 2.165$ and the price of double marginalization is $\sqrt{3} \approx 1.732$. 
    Namely, the principal's utility loss from intermediation is at most $42.26\%$ when compared to the scenario where she has direct interaction with agents, and at most $53.81\%$ when compared to the scenario where she can even fully observe agents' cost profile.
\end{example}

First, note that without any assumptions on the cost distributions, both ratios are unbounded.
In auction theory, a common assumption is that the value distribution is regular (see \Cref{sec:myerson revisited} for its definition).
Since there is a bijection between the agents' costs in our problem and the buyers' values in the corresponding auction, we impose the analogous regularity assumption on the cost distribution.
Under this assumption, we show that if the corresponding maximum possible value of the buyers is at most a constant multiple of their minimum possible value, then both ratios are tightly bounded by the logarithm of this constant (see the second column of \Cref{tab:results}).
To establish this result, we utilize the fact that the slope of the revenue curve equals the corresponding virtual value and the revenue curve is concave under the regularity assumption.
See \Cref{thm:regular bounds} for details.

Then, we consider the case where the corresponding value distribution satisfies the monotone hazard rate (MHR) condition, which is a strictly stronger assumption than regularity.
Our first main result shows that under MHR distributions, both ratios are constant (see the third column of \Cref{tab:results}), which is a significant improvement over the results obtained under the regularity assumption.
Our second main result provides a bound on the price of anarchy that depends on the market size and shows that for a fixed distribution, the two ratios asymptotically approach $1$ as the market size tends to infinity.
This implies that the impact of the \zj's participation and the presence of agents' private information on the principal's utility approaches zero as the market expands.
Formal statements are provided in \Cref{thm:mhr ratio bounds for n in naturals,thm:r_fb asymp lb}.

We observe that the \poa is approximately the square of the \podm.
As discussed above, given the established connection between our model and the single-item auction setting, the first-best benchmark in \poa and the second-best benchmark in \podm can be interpreted as the expected welfare and expected revenue, respectively.
Meanwhile, the principal's optimal expected utility corresponds to virtual value pricing, i.e., the product of the virtual value of the price and the purchase probability.
Since the expected revenue equals the expected virtual welfare, \podm effectively measures the ratio between welfare and revenue in a modified instance where the buyer's value distribution is replaced by the virtual value distribution from the original instance.
Notably, this transformation does not preserve key structural properties of the original distribution: for example, a regular (resp., MHR) distribution may induce a virtual value distribution that is not regular (resp., MHR). Nevertheless, our analysis suggests that certain properties (e.g., probability tail bounds) implied by regularity or MHR conditions still hold approximately under this transformation.

\xhdr{\Zj with anonymous pricing.} 
We also consider an extended scenario in which the \zj's mechanism space is restricted to anonymous pricing mechanisms rather than the full set of mechanisms, and the agents have identical outcome distributions and cost distributions.
\colorfootnote{A natural question is whether there is any monotonicity in the principal's utility regarding two \zjs with an inclusion relationship between their mechanism spaces? We give a negative answer to this question in \Cref{apx:diffrent mech space}.}

Our result shows that the principal's problem of designing an optimal contract remains a virtual value pricing involving a single virtual agent, whose value equals the first-order statistic of the real agents' values (see \Cref{thm:prin's utility under ap}).
We then quantify the \zj's impact measured by \podm and \poa as we did in the basic scenario.
Interestingly, we find that in the worst-case analysis, restricting the \zj's mechanism space to pricing mechanisms has nearly the same impact on the principal's utility as allowing unrestricted mechanisms (see the last two columns of \Cref{tab:results}).
Formal statements are provided in \Cref{thm:regular bounds under ap,thm:mhr bounds under ap}.

\begin{table}[t]
    \centering
    \clearpage{}\renewcommand{\arraystretch}{2}
\begin{tabular}{c|cc|}
\cline{2-3} 
                                           & \multicolumn{1}{c|}{\renewcommand{\arraystretch}{1}\begin{tabular}[c]{@{}c@{}}Regular~\small [\Cref{thm:robust regular bounds}]\end{tabular}} 
                                           & \renewcommand{\arraystretch}{1}\begin{tabular}[c]{@{}c@{}}MHR~\small [\Cref{thm:robust MHR bounds}]\end{tabular} \\ \hline
\multicolumn{1}{|c|}{\podm} & \multicolumn{1}{c|}{$\frac{\upnum}{\lownum} \cdot \bO{\log \ratiolu}$}                                                                                                                                            & \multirow{2}{*}{$\bO{\frac{\log \upnum}{\log \lownum}}$}                                                                                                                             \\ \cline{1-2}
\multicolumn{1}{|c|}{\poa}  & \multicolumn{1}{c|}{$\frac{\upnum}{\lownum} \cdot \bO{(\log \ratiolu)^{2}}$}                                                                                                                                      &                                                                                                                                                                                          \\ \hline
\end{tabular}\clearpage{}
    \caption{Bounds of \podm and \poa for the unknown market size setting. Here $\ratiolu$ is the ratio of 
    the largest to the smallest value in the support of the \contribution distribution, and $\upnum$ and $\lownum$ are the largest and smallest values of the range of the number of agents, respectively.}
    \vspace{-0.5cm} 
    \label{tab:results of unknown n}
\end{table}

\xhdr{Contract design for the unknown market size.} 
Considering that in practice it is difficult for principals to know the exact number of agents on the platform, we extend our model to a scenario in which the principal only knows that the market size lies in the interval $[\lownum,\upnum]$.

Our results show that when the market size is unknown, both ratios suffer a multiplicative loss compared to the bounds established for known market sizes.
Under regular distributions, the loss is a factor of $\frac{\upnum}{\lownum}$, while under MHR distributions, it reduces to $\bO{\frac{\log \upnum}{\log \lownum}}$ (see \Cref{tab:results of unknown n}).
The former bound arises from optimizing the contract for any market size in $[\lownum:\upnum]$, while the latter is uniquely attained by optimizing for the minimal size $\lownum$---optimizing for any other size fails to provide this guarantee.
Moreover, we provide an example illustrating that if the principal has no knowledge of the market size---i.e., if the number of agents ranges from 1 to infinity---then, regardless of the contract design, the gap between her utility and these benchmarks cannot be upper-bounded.
Formal statements are provided in \Cref{thm:robust regular bounds,thm:robust MHR bounds,thm:robust negative}.

\subsection{Related Work}  

Our work lies broadly within the fields of contract design and mechanism design, both of which have been extensively studied and are central to the Econ-CS community.
For an overview, we refer interested readers to a recent survey on algorithmic contract design \citep{DFT-24-1} and a textbook on algorithmic mechanism design \citep{NRTV-07}.
More specifically, our work connects to three key areas: market design for crowdsourcing, intermediation in online advertising, and Bayesian auction design.
Below, we outline these connections in detail.

\xhdr{Market design in crowdsourcing.}
Designing effective mechanisms is a key challenge in crowdsourcing systems, where requesters seek to purchase services from crowd workers.
This is often studied as a \emph{procurement} problem, focusing on mechanisms that solicit bids, determine worker selection, and set payments based on reported costs.
\citet{Sin-10}, \citet{CGL-11}, and \citet{NPS-24} explore incentive-compatible mechanisms under \emph{budget-feasible mechanism design}, ensuring utility maximization within a budget, while \citet{SW-15} extend this framework to large-scale crowdsourcing.
\citet{SM-13} examines pricing mechanisms where workers accept tasks at pre-specified rates. 
Additionally, \citet{SSYJ-24} examines how entry restrictions influence effort and revenue in all-pay contests, while \citet{CHS-19} analyzes reward structures in crowdsourcing contests to maximize the principal's utility, providing insights into optimal participation policies and prize allocations.
Additionally, \citet{CHS-19} and \citet{SSYJ-24} model crowdsourcing as \emph{all-pay auctions}, as some platforms require entrants to exert effort upfront to participate (see \Cref{eg:99designs}).
However, posted pricing alone fails to incentivize high-quality work, as workers can exert minimal effort without penalty.
To address this, requesters may use contracts where payments depend on output quality. Since effort levels are unobservable, mechanisms must dynamically adjust contracts as they learn about worker performance.
This motivates studying crowdsourcing under the \emph{repeated principal-agent} framework, where workers arrive online and requesters continuously update quality-contingent payments \citep[e.g.,][]{HSV-14,ZBYWJJ-23,CGR-24}.
Previous work typically models crowdsourcing markets as a two-party interaction.
In contrast, our model distinguishes the requester, platform, and workers as three separate entities and examines all of their economic interactions.

\xhdr{Intermediation in online advertising.} 
In online advertising markets, the \emph{ad exchange} functions as an \zj, as it has exclusive access to advertisers (i.e., buyers) seeking to purchase impressions (i.e., items) from the publisher (i.e., seller).
This role conceptually aligns with the \zj in our model.
\citet{FMMP-10} analyze equilibrium in a setting where advertisers compete for a unique impression from a central seller through intermediaries, with the flexibility to choose which \zj to engage.
\citet{SGP-13, SGP-14} focus on advertisers who are not liquidity-constrained and bid through intermediaries but restrict attention to variants of the Vickrey auction. 
\citet{BC-17} examine the optimal contract an \zj offers to advertisers when their budgets and targeting criteria are private. 
\citet{BCG-21} explore multistage intermediaries in display advertising under second-price rules and characterizes a subgame perfect equilibrium among \zjs and the seller.
\citet{ZC-24} study multi-layer auction markets under first-price auction rules.
As noted in \Cref{sec: contributions}, when agents have identical success probabilities and cost distributions, the optimal contract design problem simplifies to a virtual value pricing problem, closely related to the publisher's revenue optimization in online advertising.
Our results on \podm and \poa are also relevant, as they essentially compare publisher revenues in markets with and without ad exchanges.

\xhdr{Bayesian auction design.}
To bound the price of double marginalization (PoDM) and price of anarchy (PoA) in our model, we express the principal's optimal utility through virtual value pricing, thereby connecting it to the literature on Bayesian auction design, which studies revenue-optimal and efficiency-maximizing mechanisms under Bayesian priors over bidders' valuations.
\citet{Mye-81} introduces virtual values and characterizes the revenue-optimal auction as a virtual welfare maximizer.
\citet{AGM-09} examine the trade-off between revenue and efficiency, showing that optimal mechanisms often introduce inefficiencies.
Several works explore simple yet near-optimal mechanisms: \citet{DRY-15} show how a single value sample can approximate optimal pricing, while \citet{JLQTX-19} analyze the effectiveness of uniform pricing.
Auctions with MHR distributions have been widely studied: \citet{GYZK-18}, \citet{JLQ-19}, and \citet{GPZ-21} evaluate how simple mechanisms perform relative to optimal auctions under MHR assumptions.
\citet{Yan-11} studies the impact of value correlation on auction revenue, while \citet{FJ-24} investigate when simple mechanisms can approximate optimal auctions even in irregular settings.

\xhdr{Double marginalization and efficiency loss.}
Double marginalization arises when upstream and downstream firms with market power set markups independently, leading to inflated prices and reduced welfare.
\citet{Spe-50} shows that vertical integration can eliminate this inefficiency by internalizing markups---a structure conceptually similar to our benchmarks without an \zj.
\citet{RT-86} further demonstrate that vertical contracts, such as two-part tariffs, can replicate the integrated outcome without requiring ownership consolidation.
\citet{PR-07} introduce the \emph{price of anarchy} to quantify efficiency loss in decentralized supply chains under price-only contracts, a setting analogous to our anonymous pricing restriction.
While their metric focuses on total system profit, our measures center on the principal's utility. 
\citet{CL-05} show that revenue-sharing contracts can coordinate decentralized supply chains and flexibly divide surplus, though practical concerns such as administrative burden and incentive distortion may limit their effectiveness.
\citet{BF-05} analyze decentralized supply chains with competing retailers under demand uncertainty and demonstrate that price-discount sharing and buy-back contracts can restore efficiency, yet neither study quantifies inefficiency through formal ratios.
More recent work by \citet{CLV-24} investigates vertical relationships under asymmetric information and strategic foreclosure, focusing on welfare impacts without defining explicit efficiency metrics.
\citet{LM-22} further cautions that eliminating double marginalization through vertical integration may distort bargaining and harm competition.
In contrast to these studies, we define and bound two formal efficiency measures---the \emph{price of anarchy} (\poa) and the \emph{price of double marginalization} (\podm)---to quantify the \zj's impact from the principal's perspective in environments with informational and contractual frictions. 
\section{Preliminaries}
\label{sec:prelim}

In this section, we introduce our three-party model and revisit key concepts for revenue maximization in the Bayesian single-item auctions.
In addition, we outline the assumptions and notations that are used throughout the paper.

\subsection{The Three-Party Model}\label{sec:3p model}
We consider a model involving three parties: a \emph{principal}, an \emph{\zj}, and $\agentnum$ \emph{agents}.
The principal has a task that can be completed by any one of the agents. 
However, the principal cannot directly access the agents.
Instead, she can only interact with an \zj who has access to the agents. 
The interaction unfolds over two periods.
At time $T = 1$, the principal delegates the task to the \zj via a \emph{contract}: a mapping from outcomes to transfer amounts with the promise that if a particular outcome occurs, the principal will transfer the corresponding amount to the \zj.
At time $T = 2$, the \zj recruits an agent via a \emph{mechanism} consisting of an allocation rule that maps the agents' reported costs to the selected agent and a payment rule that specifies how much agents should be paid.
The selected agent (if any) then completes the task at some cost.
Details about the contract and mechanism designed by the principal and \zj are discussed below.

We assume that the agents' completion of the task can result in $\otcnum$ possible outcomes, where each outcome $j \in \otcs$ is associated with a value $\otc_{j} \geq 0$ representing the reward received by the principal if outcome $j$ occurs.
Each agent $i \in \agents$ has a private cost $\cost_{i}$ for completing the task, independently drawn from a publicly known distribution $\costdis_{i}$.
If selected, agent $i$ induces a probability distribution $\otcdisprf_{i} = (\otcdis_{i, 1}, \otcdis_{i, 2}, \ldots, \otcdis_{i, \otcnum}) \in [0, 1]^{\otcnum}$ over the outcome space $\otcs$, where the probability of outcome $j$ is $\otcdis_{i,j}$ and $\sum_{j \in \otcs} \otcdis_{i, j} = 1$.
The expected reward profile \yhedit{generated by} agent $i$ is defined as $\rwdprf_{i} \deq (\otc_{1} \cdot  \otcdis_{i,1}, \otc_{2} \cdot \otcdis_{i,2}, \ldots, \otc_{\otcnum} \cdot \otcdis_{i,\otcnum})$, and the total expected reward generated by agent $i$ is $\exrwd_{i} \deq \sum_{j \in \otcs} \otc_{j} \cdot \otcdis_{i, j}$.
Both the outcome distribution $\otcdisprf_{i}$ and the expected reward $\exrwd_{i}$ are assumed to be public information for all parties.\colorfootnote{All our results hold even when the reward profile $\boldsymbol{\otc} = (\otc_{1}, \otc_{2}, \ldots, \otc_{\otcnum})$ is unobservable to the intermediary. See \Cref{apx:unobservable reward} for more explanation.}

\xhdr{Contract design between the principal and \zj.}
The principal cannot directly observe the actions of the \zj, that is, the mechanism selected by the \zj or the agent subsequently chosen to complete the task.
Instead, the principal observes only the stochastic outcome of the task resulting from this mechanism.\colorfootnote{This setting corresponds to the notion of \emph{moral hazard} (hidden actions) in the classic principal-agent model.} Consequently, the principal designs a contract that specifies the payment to the \zj based on the observed outcome of the task. In our model, a contract is represented by a vector $\ctrctprf = (\ctrct_{1}, \ctrct_{2}, \ldots, \ctrct_{\otcnum}) \in \nnreals^{\otcnum}$, where $\ctrct_{j}$ ($j \in [\otcnum]$) denotes the fraction of the reward transferred from the principal to the \zj if outcome $j$ is realized.
By requiring that transfers are always non-negative, we enforce the standard \emph{limited liability} property in the contract theory. Note that $\ctrct_{j}$ may exceed $1$.
As a sanity check, if agent $i$ completes the task, the expected payment from the principal to the \zj is $\inner{\ctrctprf}{\rwdprf_{i}}$, and the principal's expected utility is $\inner{\ctrctone - \ctrctprf}{\rwdprf_{i}}$, where $\ctrctone = (1, 1, \ldots, 1)$ denotes the $\otcnum$-dimensional all-ones vector.

By the above definition, the optimal contract that maximizes the principal's utility may be impractical in real-world applications\citep[see, e.g.,][]{Car-15,DRT-19}.
Therefore, we also consider a simple yet widely studied alternative: the \emph{linear} contract.
In a linear contract, the principal commits to paying the \zj a fixed fraction of the realized outcome (i.e., payments are linear in the rewards of the outcomes).
Formally, a linear contract is represented by a scalar $\ctrctlnr \in [0, 1]$, so that the payment to the \zj is $\ctrctlnr \cdot \outcome_j$ if outcome $j$ occurs. A linear contract $\ctrctlnr$ is equivalent to a contract vector $\ctrctprf = \ctrctlnr \cdot \ctrctone$. 

\xhdr{Mechanism design between the \zj and agents.}
After the principal issues a contract $\ctrctprf \in \nnreals^{\otcnum}$ to the \zj in time $T = 1$, the \zj chooses a mechanism $\mech_{\ctrctprf} = (\allocrule_{\ctrctprf}, \paymentrule_{\ctrctprf})$ consisting of the allocation rule $\allocrule_{\ctrctprf}$ and the payment rule $\paymentrule_{\ctrctprf}$ in time $T = 2$.
Given any reported cost profile $\costprf = (\cost_{1}, \cost_{2}, \ldots, \cost_{\agentnum})$, the allocation rule $\allocrule_{\ctrctprf} \colon \reals_{\geq 0}^{\agentnum} \to [0, 1]^{\agentnum}$ maps the cost profile to the probability of agents being assigned the task.
The allocation rule $\allocrule_{\ctrctprf}$ satisfies the feasibility constraint, i.e., $\sum_{i\in\agents} \alloc_{\ctrctprf, i}(\costprf) \leq 1$ for every reported cost profile $\costprf$.
The payment rule $\paymentrule_{\ctrctprf} \colon \reals_{\geq 0}^{\agentnum} \to \reals^{\agentnum}$ maps the cost profile to the payments from \zj to each agent, i.e., $\paymentrule_{\ctrctprf} = (\payment_{\ctrctprf, 1}, \payment_{\ctrctprf, 2}, \ldots, \payment_{\ctrctprf, \agentnum})$, where $\payment_{\ctrctprf, i}$ is the payment function to agent $i$.
We assume the mechanism is dominant strategy incentive compatible (DSIC), i.e., every agent maximizes her utility by reporting her cost truthfully regardless of other agents' reports,\colorfootnote{Due to the standard revelation principle \citep{Mye-81}, this is without loss of generality.} and ex post individual rational (ex-post IR), i.e., every agent receives non-negative utilities in the equilibrium regardless of other agents' costs.

We assume that all parties are risk-neutral and have quasi-linear utilities.
The principal, the \zj, and agents aim to maximize their own utilities.
The expected utility of the principal is $\sum_{i\in\agents}\inner{\ctrctone - \ctrctprf}{\rwdprf_{i}} \cdot \alloc_{\ctrctprf, i}(\costprf)$.
The expected utilities of the \zj and agent $i$ are $\sum_{i\in\agents} \parent{\inner{\ctrctprf}{\rwdprf_{i}} \cdot \alloc_{\ctrctprf, i}(\costprf) - \payment_{\ctrctprf, i}(\costprf)}$ and $\payment_{\ctrctprf, i}(\costprf) - \cost_{i} \cdot \alloc_{\ctrctprf, i}(\costprf)$.\colorfootnote{Our model assumes that there is no moral hazard among agents, i.e., agents have no hidden action.}
A graphical illustration of this model is provided in \Cref{fig:model illustration}. 

We model the interaction among the three parties as an extensive-form Stackelberg game and analyze its subgame perfect equilibrium (SPE).
An SPE is a behavioral strategy profile where, in every proper subgame, the restricted strategy profile forms a Nash equilibrium.
In our model, an SPE ensures that the principal offers a contract maximizing her utility, then the \zj implements a DSIC and ex-post IR mechanism that maximizes her utility according to the offered contract, and the agents report their costs truthfully.

\begin{figure}[t]
    \centering
    \clearpage{}\resizebox{1\textwidth}{!}{
\begin{tikzpicture}[node distance=1.5cm, auto]
\node (principal) [draw, rectangle, minimum height=2em] {principal};
\node (utility_principal) [below right= 1cm and -3.2cm of principal] {
        \begin{tabular}{c}
            principal's utility:\\
            $\sum_{i\in\agents} \inner{\ctrctone - \ctrctprf}{\rwdprf_{i}} \cdot \alloc_{\ctrctprf, i}(\costprf)$
        \end{tabular}
        };
      
\node (intermediary) [draw, rectangle, minimum height=2em, right=4cm of principal] {\zj};
        \node (utility_intermediary) [below = 1cm of intermediary] {
        \begin{tabular}{c}
             \zj's utility:  \\
             $\sum_{i\in\agents} \parent{\inner{\ctrctprf}{\rwdprf_{i}} \cdot \alloc_{\ctrctprf, i}(\costprf) - \payment_{\ctrctprf, i}(\costprf)}$ 
        \end{tabular}
        };  
      
\node (agent3) [draw, circle, right=3.5cm of intermediary.east,opacity=0, text opacity=0]{};
        \node (agent3tag) [right=.2cm of agent3, opacity=0, text opacity=0] {agent $3$};
        \node[opacity=1] at ($(agent3.mid)$) (agentdots) {$\vdots$};
        \node (agenttagdots) [right=.6cm of agentdots] {$\vdots$};
        \node (agent1) [draw, circle, above =.7cm of agent3]{};
        \node (agent1tag) [right=.1cm of agent1] {agent $1$};
        \node (agent2) [draw, circle, above =.2cm of agent3]{};
        \node (agent2tag) [right=.1cm of agent2] {agent $2$};
        \node (agentn) [draw, circle, below =.4 of agent3]{};
        \node (agentntag) [right=.1cm of agentn] {agent $\agentnum$};
\node (utility_agent) [below right = 1cm and 3cm of intermediary] {
        \begin{tabular}{c}
            agent $i$'s utility: \\
            $\payment_{\ctrctprf, i}(\costprf) - \cost_{i} \cdot \alloc_{\ctrctprf, i}(\costprf)$
        \end{tabular}
        };
        
\draw[->] (principal.east) -- (intermediary.west);
        \draw[->] (intermediary.east) -- (agent1.west);
        \draw[->] (intermediary.east) -- (agent2.west);
        \draw[->] (intermediary.east) -- (agentn.west);
      
\draw [dashed] ($(principal.west)+(-0.2cm,-1.2cm)$) -- ++(0,3.2cm);
        \draw [dashed] (intermediary) -- ++(0,2cm);
        \draw [dashed] (intermediary) -- ++(0,-1.2cm);
        \draw [dashed] ($(agent3tag.east)+(0,-1.2cm)$) -- ++(0,3.2cm);
\node (contract) [above= 1.5cm of $(principal.mid)!0.5!(intermediary.mid)$] {contract $\ctrctprf \in \nnreals^{\otcnum}$};
\node (mechanism) [above= 1.5cm of $(intermediary.mid)!0.5!(agent3tag.east)$] {mechanism $\mech_{\ctrctprf} = \langle \allocrule_{\ctrctprf}, \paymentrule_{\ctrctprf} \rangle$};
\end{tikzpicture}
}
\vspace{-1.3cm}

\clearpage{}
    \caption{
    Graphical illustration of the three-party model.
    In time $T = 1$, the principal designs contract~$\ctrctprf$ for the \zj.
    In time $T = 2$, the \zj designs mechanism $\mech_{\ctrctprf}$ with allocation rule $\allocrule_{\ctrctprf}$ and payment rule $\paymentrule_{\ctrctprf}$ for agents, where $\cost_{i}$ is the private cost of the agent $i\in\agents$.
    }
    \label{fig:model illustration}
    \vspace{-0.3cm}
\end{figure}

\begin{remark}[Connection to classic principal-agent model]
    Our three-party model can be viewed as the classic contract design in the (two-party) principal-agent models.
    Specifically, the \zj (in the three-party model) can be considered as an agent (in the classic two-party contract design model), where all possible mechanisms \tbedit{form} her hidden action space.
\end{remark}

We conclude this subsection by introducing two key concepts that are crucial in our analysis.

\begin{definition}[\Contribution and \cwelfare of agents]
\label{def:welfare and cwelfare}
    For an agent $i \in \agents$ with\footnote{For the sake of brevity, we will henceforth abuse notation and refer to an agent \emph{with} expected reward rather than expected reward \emph{generated by} an agent.}
    an expected reward $\rewardfun_{i}$ and cost $\cost_{i}$, we define the following terms:
    \begin{enumerate}
        \item The \emph{\contribution of agent $i$} is defined as $\welfare_{i} \deq \exrwd_{i} - \cost_{i}$. We denote its distribution as $\welfaredis_{i}$, which is due to the randomness of the cost $\cost_{i}$.
        \item Given a contract $\ctrctprf \in \nnreals^{\otcnum}$, the \emph{\cwelfare of agent $i$} is defined as $\cwf_{\ctrctprf, i} \deq \inner{\ctrctprf}{\rwdprf_{i}} - \cost_{i}$.
        We denote its distribution as $\valdis_{\ctrctprf, i}$, which is also due to the randomness of $\cost_{i}$.
    \end{enumerate}
\end{definition}

\noindent
The intuition behind these concepts is as follows.
Suppose an agent $i \in \agents$ is selected to complete the task.
Given the realized cost $\cost_{i}$, the expected \contribution, i.e., the total utility among all three parties, over the randomness of the task outcome is $\rewardfun_{i} - \cost_{i}$.
Since there is a one-to-one mapping between an agent's cost $\cost_{i}$ and \contribution $\welfare_{i}$, we use them interchangeably throughout the paper, along with their corresponding distributions.
Moreover, given a contract $\ctrctprf \in \nnreals^{\otcnum}$ issued by the principal, the \cwelfare $\cwf_{\ctrctprf, i}$ represents the expected total utility among the \zj and the agents, excluding the principal.
Notably, an agent's \contribution $\welfare_{i}$ remains independent of the contract $\ctrctprf$, whereas the contracted \contribution $\cwf_{\ctrctprf, i}$ explicitly depends on it.

\subsection{Model Discussion} 
\label{subsec:model discussion}

In this subsection, we discuss key assumptions underlying modeling choices.

To justify the assumption that the principal initiates the contract, we note that in real-world crowdsourcing markets, requesters typically hold greater bargaining power because multiple intermediaries compete for their business.
For example, both Upwork and Freelancer charge clients a similar fee of around 4\% \citep{link-upworkfreelance}.
This competitive dynamic is further illustrated by shifts in the microtask market: on Amazon Mechanical Turk, requesters historically benefited from low fees, but after Amazon raised its commission to 20\% in 2015, many migrated to alternatives such as Prolific, which offered better user terms and higher data quality \citep{link-turkmigrate}.
These observations align with standard contract theory, where principals with strong outside options propose contracts that agents accept.

To justify the hidden mechanism assumption, we observe that in many real-world crowdsourcing platforms, it is difficult for requesters to observe the intermediary's inner workings.
Platforms like \href{https://support.upwork.com/hc/en-us/articles/1500007918681}{Upwork} deliberately design their talent rankings, proposal sorting, and payment structures to be proprietary and opaque. Requesters typically see only surface-level outcomes, such as sorted freelancer lists and basic performance metrics.
This opacity motivates our assumption that the principal does not observe the intermediary's mechanism. Modeling full opacity isolates the core contracting problem and reflects a standard and realistic assumption for such markets.

Contract design with private information has been studied in previous work \citep{ADT-21, GSW-21, CMG-21, ADLT-23}. \citet{CMG-21} study a multi-dimensional private-type setting where both the cost and the outcome distribution are private information.
\citet{GSW-21} consider a setting where only the outcome distribution is private.  
In contrast, our model assumes that the only uncertainty lies in agents' skill levels, which determine their costs to complete the task.
This single-parameter type is also studied in \citet{ADT-21, ADLT-23}, where an agent's cost corresponds to the cost per unit of effort in their model, as the outcome space in our model is binary.  

To justify the assumption in our model that the principal and platform know workers' outcome distributions, we note that crowdsourcing platforms systematically track task completion rates and historical outcomes. This enables them to estimate outcome distributions based on accumulated data.
While the principal lacks direct access to this data, she can still infer outcome distributions from publicly available information: Platforms display ratings, completion rates, and certifications, and workers often showcase their past successes in their public profiles to attract employers. These signals provide the principal with a reasonable estimate of success likelihood.
It is worth emphasizing that all our results continue to hold when the outcome distributions are private (see \Cref{apx:unobservable reward} for a detailed discussion of this assumption).

Now we clarify the information structure in our model. The principal and the \zj know the market size $\agentnum$, as well as the success probability $\succpr_i$ and cost distribution $\costdis_i$ of each agent $i \in \agents$.
However, agent $i$'s realized cost $\cost_i$ is privately known only to the agent herself.  
Furthermore, the mechanism implemented by the \zj is hidden from the principal, who observes only the task's stochastic reward $\outcome$, or equivalently, the task's outcome. The mechanism, however, is known to the agents before their costs are realized.  
Finally, in \Cref{sec:robust}, we consider a robust setting where the exact value of $\agentnum$ is known only to the \zj, while the principal knows only a range.

\subsection{Revenue-Maximization Mechanism Design} \label{sec:myerson revisited}
In this subsection, we revisit the single-item revenue maximization in the Bayesian environment.
We consider the scenario where a seller intends to sell a single item to a single buyer.
The buyer's private value of the item, denoted by $\val$, is drawn from a prior distribution $\valdis$.
\colorfootnote{With a slight abuse of notation, we also use $\valdis$ and $\valdens$ to denote the cumulative distribution function (CDF) and probability density function (PDF) of the distribution $\valdis$, respectively.
Besides, $\supp{\valdis}$ denotes the support of $\valdis$.}

A posted-price mechanism offers a take-it-or-leave-it price $\price\in\supp{\valdis}$ to the buyer, resulting in an expected revenue of $\price \cdot (1 - \valdis(\price))$ for the seller.
The \emph{monopoly reserve price} $\reserve$ and the \emph{monopoly revenue} $\revenuecurve^{*}$ of a distribution $\valdis$ are defined as follows.
\colorfootnote{When there are multiple alternative monopoly prices, ties are broken by selecting the largest one.}
\begin{align*}
    \reserve \deq \argmax\nolimits_{\price \in \supp{\valdis}} \price \cdot (1 - \valdis(\price)),
    \quad && \quad
    \revenuecurve^{*} \deq \max\nolimits_{\price \in \supp{\valdis}} \price \cdot (1 - \valdis(\price)).
\end{align*}
According to these definitions, the maximum revenue achievable by a posted-pricing mechanism is the monopoly revenue $\revenuecurve^{*}$, obtained by offering the price $\reserve$.

The \emph{hazard rate function} $\hr \colon \supp{\valdis} \to \reals$ and \emph{virtual value function} $\virval \colon \supp{\valdis} \to \reals$ of the value distribution $\valdis$ are defined as follows.
\begin{align*}
    \hr(\price) \deq \frac{\valdens(\price)}{1 - \valdis(\price)},
    \quad && \quad
    \virval(\price) \deq \price - \frac{1}{\hr(\price)}.
\end{align*}
Additionally, the \emph{cumulative hazard rate function} $\HR \colon \supp{\valdis} \to \reals_{\geq 0}$ is defined as $\HR(\price) \deq \yhedit{\int_0^\price \hr(x) \dd x} = -\ln(1 - \valdis(\price))$.
In the following, we introduce two important types of distributions.

\xhdr{Regular distributions.}
A distribution $\valdis$ is \emph{regular} if the corresponding virtual value function $\virval(\cdot)$ is monotone non-decreasing in $\supp{\valdis}$.
Assuming that the distribution $\valdis$ is regular, $\virval^{-1}(0)$\colorfootnote{Note that the virtual value function $\virval(\cdot)$ may not be strictly increasing, i.e., there can be two different values \tbedit{both corresponding} to the same virtual value.
For ease of notation, throughout the paper, we always break ties by choosing the largest one, namely, $\virval^{-1}(\price)=\max\set{\temp\ge 0\suchthat \virval(\temp)\le \price}$.} is exactly the monopoly reserve price $\reserve$.
Furthermore, due to the monotonicity of $\virval(\cdot)$ and the definition of the monopoly reserve price $\reserve$, we have $\virval(\price) \ge 0$ for all $\price \ge \reserve$.

\xhdr{MHR distributions.}
A distribution $\valdis$ satisfies the \emph{monotone-hazard-rate} (MHR) property if the corresponding hazard rate function $\hr(\cdot)$ is monotone non-decreasing in $\supp{\valdis}$.
It is worth noting that the MHR property is a stronger assumption than the regularity property.

\xhdr{Quantile space, revenue curve, and cumulative hazard rate function.}
Sometimes, it is more convenient to work in the quantile space rather than in the actual value domain.
Specifically, the quantile of a distribution $\valdis$ with respect to a price $\price \in \supp{\valdis}$ is defined as $\quantile \deq 1 - \valdis(\price)$.
Using this, the \emph{revenue curve} $\revenuecurve \colon [0, 1] \to \reals_{\geq 0}$ of the distribution $\valdis$ is defined as $\revenuecurve(\quantile) \deq \quantile \cdot \valdis^{-1}(1 - \quantile)$ for any $\quantile \in [0, 1]$.
In other words, if $\price \in \supp{\valdis}$ is a price and $\quantile = 1 - \valdis(\price)$ is the corresponding quantile, $\revenuecurve(\quantile)$ represents the expected revenue of selling the item at price $\price$.
It can be verified that the derivative $\revenuecurve'(\quantile)$ equals the virtual value $\virval(\price)$, where $\price = \valdis^{-1}(1 - \quantile)$.
The regularity of $\valdis$ implies that $\revenuecurve'(\cdot)$ is non-increasing, making $\revenuecurve(\cdot)$ concave.
Additionally, the MHR property of $\valdis$ implies that the cumulative hazard rate function $\HR(\cdot)$ is convex.

\subsection{Assumptions and Notations}

In this subsection, we introduce the assumptions and notations that are used throughout the paper.

\begin{assumption}\label[assumption]{asp:welfare regular}
Throughout this paper, we assume that for each agent $i$, the distribution of her \contribution $\welfare_i$, denoted by $\welfaredis_i$, satisfies the following four properties:
    \begin{itemize}
        \item \emph{Regularity}: The virtual value of $\welfaredis_{i}$ denoted as $\virval_{i}$, is weakly increasing.
        It is equivalent to the cost distribution being regular \textup{\citep[e.g., see][]{MS-83,HPS-25}}, i.e., the virtual value of the cost, $\cost + \frac{\costdis(\cost)}{\costdis'(\cost)}$, is weakly decreasing.
        \item \emph{Continuity}: The support $\supp{\welfaredis_{i}}$ is a single interval, and the distribution $\welfaredis_{i}$ can only have probability mass at its \emph{right-endpoint}.
        \item \emph{Differentiability}: The CDF $\welfaredis_{i}(\cdot)$ is left- and right-differentiable everywhere within the interior of the support $\supp{\welfaredis_{i}}$.
        Without loss of generality, the corresponding PDF $\welfaredens_{i}(\cdot)$ exists and is right-continuous everywhere.
        \item \emph{Non-degeneracy}: The largest value in support $\supp{\welfaredis_{i}}$ is strictly positive, i.e., $\welfaredis_{i}(0) < 1$.
    \end{itemize}
\end{assumption}
The first three properties are standard in the mechanism design literature.
In particular, by applying the standard ironing technique \citep{Mye-81}, the main results in \Cref{sec:optimal contract} can be extended to scenarios where the cost distribution is irregular.
The final property holds without loss of generality, as an agent will never be selected to complete the task if her realized \contribution is always non-positive.

We use the notation $\agents = \{1, 2, \ldots, \agentnum\}$ and $[\lownum : \upnum] = \{\lownum, \lownum + 1, \ldots, \upnum\}$ for $\lownum \leq \upnum$.
Given a distribution $\valdis$, we reuse $\valdis$ to denote its CDF and let $\valdens$ represent its PDF.
Bold symbols are used to denote vectors or joint distributions.
For instance, $\valprf = (\val_{1}, \val_{2}, \ldots, \val_{\agentnum})$ and $\valdisprf = \valdis_{1} \otimes \valdis_{2} \otimes \cdots \otimes \valdis_{\agentnum}$.
Specifically, for $\agentnum$ identical distributions $\set{\valdis}_{i = 1}^{\agentnum}$, we denote the joint distribution $\valdisprf$ as $\valdis^{\otimes \agentnum}$.
For $\agentnum$ random variables $\valprf \sim \valdisprf$, we use $\valmax$ to denote the first-order statistic among them, i.e., $\valmax \deq \max_{i \in \agents} \val_{i}$.
For any $\temp\in \reals$, define $\plus{\temp}\deq \max\set{\temp, 0}$.
 
\section{Optimal Contract for the Principal}
\label{sec:optimal contract}
In this section, we characterize the optimal contract for the principal under various scenarios.

\begin{restatable}[Optimal contract characterization]{theorem}{thmOptContract}
\label{thm:opt contract}
    The optimal contract of the principal satisfies the following characterization:
\begin{enumerate}
[label=\textup{(}\roman*\textup{)}]
        \item \label[part]{thm: general} 
        For general instances, the optimal contract and the corresponding payoff of the principal is the optimal solution and objective value of the optimization defined as follows:
\begin{align*}
            \max_{\ctrctprf \in \nnreals^{\otcnum}} \sum\nolimits_{i \in \agents} \inner{\ctrctone - \ctrctprf}{\rwdprf_{i}} \cdot \prob[\cwfprf_{\ctrctprf} \sim \valdisprf_{\ctrctprf}]{\virval_{\ctrctprf, i}(\cwf_{\ctrctprf, i}) \geq \max\nolimits_{j \in \agents}\plus{\virval_{\ctrctprf, j}(\cwf_{\ctrctprf, j})}}.
        \end{align*}

        \item \label{thm: idential rwd} 
        For instances with identical expected reward profiles \textup{(}i.e., $\rwdprf_{i} \equiv \rwdprf$ with each $i \in \agents$\textup{)}, there exists an optimal contract that is linear.
        In particular, this optimal linear contract and the corresponding payoff of the principal are the optimal solution and objective value of the optimization defined as follows:
\begin{align*}
            \max_{\ctrctlnr \in [0,1]} \sum\nolimits_{i \in \agents} {\virval_{i}(\temp_{\ctrctlnr, i})} \cdot \prob[\welfareprf \sim \welfaredisprf]{\welfare_{i} \geq \temp_{\ctrctlnr, i} \wedge \virval_{i}(\welfare_{i}) = \max\nolimits_{j \in \agents}\virval_{j}(\welfare_{j})},
        \end{align*}
        where $\temp_{\ctrctlnr, i} \deq (1 - \ctrctlnr) \exrwd + \reserve_{\ctrctlnr, i}$, $\exrwd=\sum_{j\in \otcs} \rwd_j$ is the expected reward from agents, $\reserve_{\ctrctlnr, i}$ is the monopoly reserve price of agent $i$'s \cwelfare distribution $\valdis_{\ctrctlnr, i}$ and $\virval_{i}(\cdot)$ is the virtual value function of her \contribution distribution $\welfaredis_{i}$.

        \item \label{thm: identical rwd and dist} 
        For instances with identical expected reward profiles and \contribution distributions\colorfootnote{This is equivalent to assuming identical outcome distributions and cost distributions.} \textup{(}i.e., $\rwdprf_i \equiv \rwdprf$ and $\welfaredis_i \equiv \welfaredis$ for each $i \in \agents$\textup{)}, there exists an optimal contract that is linear.
        In particular, the optimal payoff of the principal is the optimal objective value of the optimization program defined as follows:
        \begin{align*}
            \max_{\temp \in \supp{\welfaredis}} \virval(\temp) \parent{1 - \welfaredis^{\agentnum}(\temp)}.
        \end{align*}
        Moreover, let $\temp^{\star}$ be the solution of the optimization program.
        The principal's optimal (linear) contract $\ctrctlnropt$ satisfies $\temp^{\star} = (1 - \ctrctlnropt)\exrwd + \reserve_{\ctrctlnropt}$.
    \end{enumerate}
\end{restatable}

We first discuss the implications of the above characterization and then sketch its proof.
The formal proof can be found in \Cref{apx:opt contract proof}.

\xhdr{Implications of \Cref{thm:opt contract}.}
The first two formulations in \Cref{thm:opt contract} address scenarios with heterogeneous agents (i.e., agents differ in either rewards or cost distributions), and are expressed as summations over all agents, with each term representing the utility an agent brings to the principal multiplied by that agent's probability of winning.
In contrast, the third formulation considers essentially homogeneous agents (i.e., all agents share identical reward and cost distributions), and is expressed as the utility a single agent provides to the principal multiplied by the probability that at least one agent is selected.
Next, we discuss each of these formulations separately.

The first formulation reformulates the principal's problem by expressing each agent's winning probability as a function of the contract $\ctrctprf$.
Recall that each agent's \cwelfare is determined by $\cwf_{\ctrctprf, i} = \inner{\ctrctprf}{\rwdprf_{i}} - \cost_{i}$, which depends on the agent's cost, expected reward profile, and the chosen contract.
Consequently, different contracts for distinct outcomes can reshape the ranking of agents' \cwelfare{s}, thereby affecting the selection of winners.
Furthermore, since winning probabilities are tied to the distributions of \cwelfares, they become contract-dependent as well.

The second formulation assumes all agents have the same expected reward profiles $\rwdprf_{i} \equiv \rwdprf$ but can differ in cost distributions.
{Under this condition, we show the optimality of the linear contract, thereby simplifying the principal's contract design space from the $m$-dimensional space (i.e., $\nnreals^\otcnum$) to the single-dimensional space (i.e., $[0, 1]$).}
Notably, the principal's contract design problem can be simplified to setting a ``discriminatory threshold'' $\temp_{\contract, i} \deq (1 - \contract)\exrwd + \reserve_{\contract, i}$ for each agent $i$.
Agent $i$ wins if her \contribution is the highest and meets or exceeds her posted price.
The principal's utility from agent $i$ is determined by the virtual value of her threshold $\virval_{i}(\temp_{\contract, i})$ (instead of the virtual value of her contribution).
Since $\temp_{\contract, i} = (1 - \contract)\exrwd + \reserve_{\contract, i}$ is defined for each agent $i \in \agents$, the discriminatory threshold profile $\set{\temp_{\contract, i}}_{i=1}^{\agentnum}$ is restricted by $\contract$. 
Consequently, conditional on a realized cost profile, changing the contract $\contract$ does not affect the ranking of the agents' private values or the identity of the winner.
While the threshold for each agent depends on the contract, her winning probability remains unchanged as long as her \contribution meets or exceeds her posted threshold.

The third formulation assumes all agents have identical expected reward profiles and \contribution distributions, simplifying the principal's problem further into a \emph{virtual value pricing} problem.
Here, the principal posts an anonymous price $\temp$. 
A sale occurs if at least one agent's \contribution meets or exceeds this price. 
The principal realizes a utility equal to the \emph{virtual value} of the posted price $\temp$ instead of the price itself.

\xhdr{Proof overview of \Cref{thm:opt contract}.}
We outline the main ideas behind the proof of \Cref{thm:opt contract}. 
To derive the first formulation in the general cases, we begin by applying the characterization of the \zj's optimal mechanism from \Cref{thm:opt mech} to express each agent's winning probability. 
This enables us to reformulate the principal's problem as maximizing, over all contracts, the sum over agents of their winning probabilities weighted by the principal's utility from each agent.  
For the case in which agents have identical expected reward profiles, we analyze the relationship between each agent's \contribution and her value.
Then we rewrite both the winning probabilities and the principal's utility in terms of agents' \contributions and the corresponding virtual values. 
This conversion relies on the assumption that expected reward profiles are identical.  
To establish the third formulation, which additionally assumes identical \contribution distributions, we reformulate the sum of winning probabilities as the probability that the first-order statistic of the agents' \contribution exceeds the anonymous price induced by the contract. 
Finally, we show that the principal's maximum expected utility from designing an optimal contract aligns with the maximum revenue obtainable by selecting an optimal price from the support of the \contribution distribution, as demonstrated by analyzing the value range of the anonymous price derived from the contract (see \Cref{lem:reserve continuous}).

\begin{restatable}[Optimal mechanism characterization]{proposition}{thmOptMech}
\label{thm:opt mech}
    Given any contract $\ctrctprf \in \nnreals^\otcnum$, the optimal mechanism maximizing the \zj's expected utility is the virtual welfare maximizer, where the private value of agent $i \in \agents$ is equal to her \cwelfare $\cwf_{\ctrctprf, i}$.
    Specifically, the optimal mechanism of the \zj given the contract $\ctrctprf$, denoted by $\mech_{\ctrctprf}^{\star} = (\allocrule_{\ctrctprf}^{\star}, \paymentrule_{\ctrctprf}^{\star})$, takes the reported cost profile $\costprf$ as input and outputs the allocation and payment rules as follows:
    \begin{itemize}
        \item Transform the reported cost profile $\costprf$ into the \cwelfare profile 
        \begin{align*}
            \cwfprf_{\ctrctprf} 
            = \parent{\inner{\ctrctprf}{\rwdprf_{1}} - \cost_{1}, \inner{\ctrctprf}{\rwdprf_{2}} - \cost_{2}, \ldots, \inner{\ctrctprf}{\rwdprf_{\agentnum}} - \cost_{\agentnum}},
        \end{align*}
        and compute the virtual values $\set{\virval_{\ctrctprf, i}(\cwf_{\ctrctprf, i})}_{i\in \agents}$ of the agents' \cwelfare{s} $\{\valdis_{\ctrctprf, i}\}_{i\in\agents}$. 
        \item If the above virtual values are all negative, no agent is assigned the task; otherwise, assign the task to the agent with the maximum virtual value.
        \item Define the payment $\paymentrule_{\ctrctprf}^{\star}$ from the \zj to the agents such that, for each $i \in \agents$,
        \begin{align*}
            \payment_{\ctrctprf, i}^{\star}(\costprf) \deq 
            \left(\inner{\ctrctprf}{\rwdprf_{i}} - \virval_{\ctrctprf, i}^{-1} \parent{\max_{j \in \agents \setminus \set{i}} \plus{\virval_{\ctrctprf, j}\parent{\cwf_{\ctrctprf,j}}}}
            \right)\cdot \alloc^{\star}_{\ctrctprf, i}(\costprf),
\end{align*}        
        where $\alloc^{\star}_{\ctrctprf, i}(\costprf)$ is the $i$th component of $\allocrule^{\star}_{\ctrctprf}(\costprf)$, representing the probability that agent $i$ is assigned as the winner.
    \end{itemize}
\end{restatable}

\Cref{thm:opt mech} demonstrates that the optimal mechanism for the \zj is essentially the virtual welfare maximizer mechanism, as described in \Cref{def:vwm}, where the \zj and the agents correspond to the seller and the buyers, respectively, and the agents' \cwelfare{s} are equivalent to the buyers' values of the auctioned item.
 The proof of \Cref{thm:opt mech} is relatively standard and thus deferred to \Cref{apx:opt mech proof}. 
\section{Impact of \Zj and Price of Double Marginalization}
\label{sec:impact}

In this section, we analyze the intermediary's impact on the principal's utility, which is of both theoretical and practical importance in crowdsourcing markets. If the principal's utility in the presence of an intermediary closely matches that in a setting where she directly engages with agents, then the intermediary can be considered a cost-effective implementation channel. However, a significant utility gap may reflect substantial platform costs, potentially discouraging the principal from using the platform. To quantify this impact, we compare our model to two benchmarks in which the principal interacts directly with agents.

Double marginalization refers to the inefficiency that arises when multiple parties in a marketplace -- such as an intermediary and downstream agents -- each impose their own markup. It is well established in the literature that the presence of an intermediary typically reduces the principal's utility due to this sequential distortion. This observation motivates the central goal of this section: to quantify the principal's utility loss caused by double marginalization. In doing so, we aim to complement qualitative insights with a formal characterization of efficiency loss under various benchmarks.

To this end, we introduce the concept of the \emph{price of double marginalization} (\podm), which captures the efficiency loss resulting from the intermediary's presence. 
Specifically, \podm quantifies the ratio between the principal's utility under the second-best benchmark,
in which she directly interacts with agents via an optimal DSIC mechanism, and the her utility 
obtained in our delegated mechanism design model with an \zj.

\begin{definition}
    The \emph{price of double marginalization} (\podm) is defined as the ratio of the principal's utility under the second-best benchmark to her optimal utility in our model:
    \begin{equation*}
        \ratiosb \triangleq \frac{\utipsb}{\utip^{*}},
    \end{equation*}
    where $\utip^{*}$ denotes the principal's optimal utility in our model, and $\utipsb$ denotes the second-best benchmark, corresponding to the expected revenue of the virtual welfare maximizer mechanism:
    \begin{equation*}
        \utipsb = \expect[\welfareprf \sim \welfaredisprf]{\max\nolimits_{i \in \agents}\left\{\plus{\virval_{i}(\welfare_{i})} \right\}}.
    \end{equation*}
\end{definition}

Intuitively, the \podm measures the extent to which the principal's utility deteriorates due to delegation to an intermediary, as compared to the ideal scenario in which the principal can directly implement a mechanism to maximize her expected revenue.

To evaluate the overall efficiency loss relative to the welfare-optimal outcome, we additionally consider the classical \emph{price of anarchy} (\poa). This benchmark compares the principal's utility under the first-best benchmark—which maximizes total welfare—to that in our model. 

\begin{definition}
    The \emph{price of anarchy} (\poa) is defined as the ratio of the principal's utility under the first-best benchmark to her optimal utility in our model:
    \begin{equation*}
        \ratiofb \triangleq \frac{\utipfb}{\utip^{*}},
    \end{equation*}
    where $\utip^{*}$ denotes the principal's optimal utility in our model, and $\utipfb$ denotes the \emph{first-best benchmark}, corresponding to the maximum expected social welfare achievable when the principal can observe the realized cost profile and directly select the agent that maximizes welfare:
    \begin{equation*}
        \utipfb = \expect[\welfareprf \sim \welfaredisprf]{\max\nolimits_{i \in \agents} \welfare_{i}}.
    \end{equation*}
\end{definition}

The first-best and second-best benchmarks in \tbedit{the above} definitions correspond, respectively, to the maximum welfare and the maximum revenue in a related single-item auction:

Since the \contribution distributions $\welfaredisprf$ are regular by \Cref{asp:welfare regular}, we have the optimal mechanism in the second-best benchmark is the virtual welfare maximizer mechanism.
Therefore, the second-best benchmark is equivalent to the expected revenue of the virtual welfare maximizer mechanism, i.e., $\utipsb = \expect[\welfareprf \sim \welfaredisprf]{\max\nolimits_{i \in \agents}\left\{\plus{\virval_{i}(\welfare_{i})} \right\}}$.

Furthermore, the first-best benchmark is equivalent to the maximized utility of the principal in the scenario where the principal has direct access to agents and knows their cost realizations.
Indeed, the assumption in this scenario implies that the principal can assign the task to the one with the lowest cost, thereby obtaining a utility equivalent to her \contribution.
Therefore, the first-best benchmark is equivalent to the expectation of the maximum value between $0$ and the first order statistic $\welfaremax$ from the $\agentnum$ \contribution, i.e., $\utipfb=\expect[\welfareprf \sim \welfaredisprf]{\plus{\welfaremax}}$.

In summary, the first-best benchmark yields the highest utility, followed by the second-best benchmark, while the principal's utility in our model is the lowest.
That is,
\begin{equation*}
    \utipfb \geq \utipsb \geq \utip^{*},
\end{equation*}
which implies
\begin{equation*}
    \ratiofb \geq \ratiosb \geq 1.
\end{equation*}

\begin{remark}
    Throughout this section, we define the \poa{} with respect to the \emph{principal's utility}, rather than the notion of social welfare. This perspective is similar to that adopted in \citet{HHT-14}.
    Notably, all of our upper bounds on the PoA for utility also imply the same bounds on the PoA for efficiency (i.e., total welfare), since the principal's utility is always upper bounded by the welfare benchmark.
\end{remark}

In many crowdsourcing settings, tasks are simple and uniform, requiring similar effort from workers and leading to comparable success probabilities.
The \zj also attracts a specific type of workers, where individuals with similar expertise, experience, and availability tend to self-select into the platform, resulting in homogeneous cost distributions.
Given this natural homogeneity, we bound the ratios under the following assumption.

\begin{assumption}
\label[assumption]{asp:identical r and F}
    All agents have identical expected rewards $\rewardfun$ and \contribution distributions $\welfaredis$.
\end{assumption}

 \subsection{Comparison under Regular Distributions}

In this subsection, we bound both PoDM and PoA. Our main result in \Cref{thm:regular bounds} shows that for a regular \contribution distribution $\welfaredis$, if the agents' maximum possible contribution is at most $\ratiolu$ times their minimum possible contribution, 
then \podm is tightly bounded by $\Theta(\log \ratiolu)$, while the \poa is tightly bounded by $\Theta((\log \ratiolu)^2)$.\colorfootnote{We say a ratio is \emph{tight} if, for all \contribution distributions satisfying the given conditions and any number of agents $ \agentnum $, the ratio does not exceed a certain value, while there exists at least one distribution and one $ \agentnum $ where the ratio equals or approaches this value.}

\begin{restatable}{theorem}{regularbounds}\label{thm:regular bounds}
     For agents with identical expected rewards $\rewardfun$ and regular \contribution distributions $\welfaredis$ with support $\supp{\welfaredis} = \parents{\lowsupp, \upsupp}$ $\parent{0 < \lowsupp \leq \upsupp \leq \rewardfun}$, if $1 \leq \upsupp/\lowsupp \leq \ratiolu$, the tight \podm is $\bigtheta{\ratiolu}{\log \ratiolu}$ and the tight \poa is $\bigtheta{\ratiolu}{(\log \ratiolu)^{2}}$.
\end{restatable}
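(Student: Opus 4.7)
The plan is to prove matching upper and lower bounds via a unified analysis in virtual-value space that reduces both ratios to an implicit logarithmic inequality. I would first reformulate $\utip^{*}$ and $\utipsb$ in a common form. Define the non-increasing function $\psi(t) \deq 1 - \welfaredis^{\agentnum}(\virval^{-1}(t))$ on $\tinterval{0, \virval(\upsupp)}$; monotonicity of $\psi$ follows from regularity of $\welfaredis$. Substituting $t = \virval(\temp)$ in \Cref{thm:opt contract} and in Myerson's expression for $\utipsb$ yields
\begin{align*}
    \utip^{*} = \max_{t \geq 0} \, t \cdot \psi(t),
    \qquad
    \utipsb = \int_{0}^{\virval(\upsupp)} \psi(t)\, dt.
\end{align*}
Using the pointwise envelope $\psi(t) \leq \min(1, \utip^{*}/t)$ and integrating from $0$ to $\upsupp$ produces the key inequality
\begin{align*}
    \utipsb \leq \utip^{*} \bigl( 1 + \ln(\upsupp/\utip^{*}) \bigr). \tag{$\star$}
\end{align*}

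Next I would exploit monotonicity in the market size to lower-bound $\utip^{*}$. Since $1 - \welfaredis^{\agentnum}(\temp) \geq 1 - \welfaredis(\temp)$ for all $\temp$, $\utip^{*}$ dominates its single-agent counterpart $M \deq \max_{\temp} \virval(\temp)(1-\welfaredis(\temp))$. Applying $(\star)$ with $\agentnum = 1$ gives $\revenuecurve^{*} \leq M \bigl(1 + \ln(\upsupp/M)\bigr)$, where $\revenuecurve^{*}$ is the single-agent monopoly revenue. Because $\revenuecurve^{*} \geq \revenuecurve(1) = \lowsupp$, setting $x \deq M/\lowsupp$ reduces this to $1 \leq x(1 + \ln(\ratiolu/x))$, which forces $x = \Omega(1/\log \ratiolu)$ and hence $\utip^{*} \geq M = \Omega(\lowsupp/\log \ratiolu)$. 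Plugging this back into $(\star)$ gives $\ln(\upsupp/\utip^{*}) = O(\log \ratiolu)$, so $\podm = O(\log \ratiolu)$. For $\poa$, a parallel envelope argument bounds $\utipfb/\utipsb$: letting $\revenuecurve^{(\agentnum)} \deq \max_{t} t(1-\welfaredis^{\agentnum}(t)) \in \tinterval{\lowsupp, \utipsb}$, I obtain $\utipfb = \int_{0}^{\upsupp}(1-\welfaredis^{\agentnum}(t))\,dt \leq \lowsupp + \revenuecurve^{(\agentnum)} \ln \ratiolu \leq \utipsb(1 + \ln \ratiolu)$, and chaining with the $\podm$ bound yields $\poa \leq O((\log \ratiolu)^{2})$.

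For the matching lower bound, I would use a single-agent equal-revenue-like instance: $\agentnum = 1$ with $\welfaredis(\welfare) = 1 - \welfare^{-(1+\epsilon)}$ on $\tinterval{1, \ratiolu}$ (with point mass $\ratiolu^{-(1+\epsilon)}$ at $\welfare = \ratiolu$, permitted by \Cref{asp:welfare regular}) and $\epsilon = 1/\ln \ratiolu$. Direct computation gives $\virval(\welfare) = \epsilon \welfare/(1+\epsilon)$ (regular since linear), $\utip^{*} = \epsilon/(1+\epsilon) = \Theta(1/\log \ratiolu)$, $\utipsb = 1$, and $\utipfb = \Theta(\log \ratiolu)$, which yields $\podm = \Theta(\log \ratiolu)$ and $\poa = \Theta((\log \ratiolu)^{2})$. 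The main obstacle is the lower-bound step: the same envelope inequality that upper-bounds $\utipsb/\utip^{*}$ must be recycled---via the single-agent reduction and $\revenuecurve^{*} \geq \lowsupp$---to implicitly lower-bound $\utip^{*}$, and solving the resulting implicit inequality carefully is what delivers the sharp $\Theta(\log \ratiolu)$ scaling consistent with the tightness instance.
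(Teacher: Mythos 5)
Your proposal is correct, and the upper-bound half takes a genuinely different route from the paper's. For the \podm upper bound, the paper partitions the quantile space of $\welfaredis^{\agentnum}$ into $2\bceil{\log\ratiolu}+1$ geometric slope classes, pigeonholes to find a quantile with $\quantile\revenuecurve'(\quantile)\geq\revenuecurve^{*}/(8\bceil{\log\ratiolu})$ (with a separate constant-factor argument for $\upsupp/\lowsupp\leq 4$), and then invokes \citet{Yan-11} to pass from the anonymous-pricing revenue $\revenuecurve^{*}$ to $\utipsb$; you instead derive the single envelope inequality $\utipsb\leq\utip^{*}\parent{1+\ln(\upsupp/\utip^{*})}$ and bootstrap it at $\agentnum=1$ against $\revenuecurve^{*}\geq\lowsupp$ to solve the implicit inequality $1\leq x(1+\ln(\ratiolu/x))$, which handles both regimes of $\upsupp/\lowsupp$ uniformly and needs no external approximation lemma. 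Your \poa step ($\utipfb\leq\lowsupp+\revenuecurve^{(\agentnum)}\ln\ratiolu$) is the same equal-revenue tail bound the paper packages as \Cref{lemma:regular fb worst case}, just stated as a direct integral estimate. Your lower-bound instance $\welfaredis(\welfare)=1-\welfare^{-(1+\epsilon)}$ with $\epsilon=1/\ln\ratiolu$ is simpler than the paper's $K$-piece construction in \Cref{example:regular bounds} and the computations ($\virval(\welfare)=\epsilon\welfare/(1+\epsilon)$, $\utip^{*}=\epsilon/(1+\epsilon)$, $\utipsb=1$, $\utipfb=1+(1-\ee^{-1})\ln\ratiolu$) check out against \Cref{asp:welfare regular}. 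Two small points to make explicit when writing this up: the domination $\utip^{*}\geq M$ requires that the single-agent maximizer has non-negative virtual value (guaranteed since $\revenuecurve^{*}\geq\lowsupp>0$), and the step ``forces $x=\Omega(1/\log\ratiolu)$'' should cite the monotonicity of $x\mapsto x(1+\ln(\ratiolu/x))$ on the relevant range. Neither is a gap.
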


Note that the bounds of the ratios in \Cref{thm:regular bounds} require the support of the \contribution distribution to be a finite positive interval.
Additionally, the tightness ensures that if agents have a regular \contribution distribution with unbounded support, the two ratios are also unbounded.

\xhdr{Proof overview of \Cref{thm:regular bounds}.}
Here, we outline the main ideas behind the proof of \Cref{thm:regular bounds}, which consists of three parts.
In the first part, we establish the upper bound for \podm under two conditions with different arguments.
For the case where $4 \leq \upsupp/\lowsupp \leq \ratiolu$, we apply the monotonicity of the virtual value to reduce the maximized utility $\utip^{*}$ (\Cref{lemma:regular reserve increase}).
We then divide the quantile space of any \contribution distribution into $2\bceil{\log \ratiolu} + 1$ pieces.
Next, using the monotonicity of the revenue curve for regular distributions, we find an upper bound for monopoly revenue $\revenuecurve^{*}$ and the ratio between the maximized utility $\utip^{*}$ and $\revenuecurve^{*}$.
For the case where $1 \leq \upsupp/\lowsupp \leq 4$, we construct a quantile $\quantile^{\dag}$ representing the largest quantile where $\revenuecurve'(\quantile^{\dag}) \geq \lowsupp / 2$.
Then, by rescaling the integral form for $\revenuecurve^{*}$, we obtain a constant upper bound for the ratio between $\utip^{*}$ and $\revenuecurve^{*}$.
We finally use the ratio of revenues from anonymous pricing and Myerson's optimal auction (\Cref{lemma:regular OPA vs AP}) to establish an upper bound for the second-best benchmark $\utipsb$.
This process allows us to establish the upper bound for \podm.

In the second part, we define the \emph{truncated equal-revenue distribution} (\Cref{def:teqr distribution}) and show that it achieves the maximum first-best benchmark for a single agent (\Cref{lemma:regular fb worst case}).
Combining this with the upper bound for \podm, we derive the upper bound for \poa.

In the third part, we provide an example (\Cref{example:regular bounds}) to illustrate the lower bounds of the two ratios and demonstrate that the two ratios are tight.
Specifically, the lower bound part of \Cref{thm:regular bounds} utilizes \Cref{example:regular bounds}.
The upper bound part utilizes \Cref{lemma:regular mhr order statistic,lemma:regular reserve increase,lemma:regular OPA vs AP,lemma:regular fb worst case}.
The formal proof of \Cref{thm:regular bounds} can be found in \Cref{apx:regular bounds proof}. \subsection{Comparison under MHR Distributions}

In this subsection, we bound both PoDM and PoA for MHR \contribution distributions. Our first main result in \Cref{thm:mhr ratio bounds for n in naturals} shows that for MHR \contribution distributions, both ratios are constant—a significant improvement over the results obtained for regular distributions.
Notably, unlike our findings for regular distributions in \Cref{thm:regular bounds}, \Cref{thm:mhr ratio bounds for n in naturals} does not impose any restrictions on the support of the \contribution distribution.

\begin{restatable}{theorem}{mhrratiobounds} \label{thm:mhr ratio bounds for n in naturals}
     For agents with identical expected rewards $\rewardfun$ and MHR \contribution distributions $\welfaredis$, the tight \podm is in $[\ee,\ommhrupbound]$ and the tight \poa is $\ee^{2}$. Moreover, if $\agentnum=1$, the tight \podm is $\ee$.
\end{restatable}

Our second main result in \Cref{thm:r_fb asymp lb} provides a market-size–dependent bound for both ratios and shows that for fixed \contribution distribution, the two ratios asymptotically approach $1$ as the market size tends to infinity. This implies that the impact of the \zj's presence and agents' private information on the principal's utility diminishes to zero as the market expands.

\begin{restatable}{theorem}{fbasymp} \label{thm:r_fb asymp lb}
    For agents with identical expected rewards $\rewardfun$ and MHR \contribution distributions $\welfaredis$, then \podm and \poa asymptotically approach $1$ as $\agentnum$ approaches positive infinity.
    Moreover, let $\harmonic_{\agentnum} = 1 + \frac{1}{2} + \cdots + \frac{1}{\agentnum}$ be the $\agentnum$th harmonic number.
    Then we have
    \begin{equation*}
        \ratiosb 
        \leq \ratiofb
        \leq \parent{1 - \frac{\ln \harmonic_{\agentnum}}{\harmonic_{\agentnum}}}^{-1} \parent{1 - \parent{1 - \frac{1 - \welfaredis(0)}{\ee^{\harmonic_{\agentnum} - \ln \harmonic_{\agentnum} + 1}}}^{\agentnum}}^{-1}
        = 1 + \bigO{\agentnum}{\frac{\log \log \agentnum}{\log \agentnum}}.
    \end{equation*}
\end{restatable}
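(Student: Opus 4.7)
Since $\utipsb \leq \utipfb$ implies $\ratiosb \leq \ratiofb$, the plan is to upper bound $\ratiofb$ by evaluating both $\utipfb$ and $\utip^{*}$ at a single carefully chosen threshold $z^{*}$. I would set $c \deq \harmonic_{\agentnum} - \ln \harmonic_{\agentnum} + 1$ and define $z^{*}$ as the value where the truncated cumulative hazard $\tilde{\HR}(\welfare) \deq \HR(\welfare) - \HR(0) = -\ln((1-\welfaredis(\welfare))/(1-\welfaredis(0)))$ equals $c$; equivalently $1 - \welfaredis(z^{*}) = (1-\welfaredis(0))\,e^{-c}$. For the lower bound, I plug $z^{*}$ into the virtual-value pricing formulation of \Cref{thm:opt contract}(iii) to obtain $\utip^{*} \geq \virval(z^{*})(1 - \welfaredis^{\agentnum}(z^{*}))$. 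Because MHR makes $\tilde{\HR}$ convex with $\tilde{\HR}(0) = 0$, the chord--tangent inequality forces $\hr(z^{*}) \geq \tilde{\HR}(z^{*})/z^{*} = c/z^{*}$, whence $\virval(z^{*}) \geq z^{*}(c-1)/c$; this already yields the second factor in the theorem's bound.

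The upper bound on $\utipfb = \expect{\plus{\welfaremax}}$ is the more delicate step. I would first dominate $\plus{\welfaremax} = \max_{i} \plus{\welfare_{i}}$ stochastically by $\max_{i} \tilde{\welfare}_{i}$, where the $\tilde{\welfare}_{i}$ are iid samples from the conditional distribution $\tilde{\welfaredis}(\welfare) \deq (\welfaredis(\welfare) - \welfaredis(0))/(1-\welfaredis(0))$ on $[0, \bar{\welfare}]$, using the pointwise inequality $\tilde{\welfaredis}(\welfare) \leq \welfaredis(\welfare)$ for $\welfare \geq 0$. Since $\tilde{\welfaredis}$ inherits the hazard rate $\hr$ on its support and is MHR with $\tilde{\welfaredis}(0) = 0$, the probability-integral transforms $\tilde{E}_{i} \deq \tilde{\HR}(\tilde{\welfare}_{i})$ are iid $\operatorname{Exp}(1)$, and $\max_{i} \tilde{\welfare}_{i} = \tilde{\HR}^{-1}(\max_{i} \tilde{E}_{i})$. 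Renyi's representation of exponential order statistics gives $\expect{\max_{i} \tilde{E}_{i}} = \harmonic_{\agentnum}$, and Jensen's inequality applied to the concave $\tilde{\HR}^{-1}$ yields $\expect{\max_{i} \tilde{\welfare}_{i}} \leq \tilde{\HR}^{-1}(\harmonic_{\agentnum})$. Concavity of $\tilde{\HR}^{-1}$ with $\tilde{\HR}^{-1}(0) = 0$ further makes $u \mapsto \tilde{\HR}^{-1}(u)/u$ non-increasing, so $\tilde{\HR}^{-1}(\harmonic_{\agentnum}) \leq (\harmonic_{\agentnum}/c)\,\tilde{\HR}^{-1}(c) = z^{*} \harmonic_{\agentnum}/c$, giving $\utipfb \leq z^{*} \harmonic_{\agentnum}/c$.

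Dividing the two bounds and using the identity $\harmonic_{\agentnum}/(c-1) = 1/(1 - \ln \harmonic_{\agentnum}/\harmonic_{\agentnum})$ recovers the closed-form bound in the theorem. For the asymptotic $1 + \bigO{\agentnum}{\log\log \agentnum/\log \agentnum}$, I would combine $\harmonic_{\agentnum} = \ln \agentnum + \Theta(1)$ (so the first factor is $1 + \bigO{\agentnum}{\log \log \agentnum/\log \agentnum}$) with the bounds $e^{\harmonic_{\agentnum}} \leq e \agentnum$ and $(1-p)^{\agentnum} \leq e^{-\agentnum p}$, which together give $(1-(1-\welfaredis(0))e^{-c})^{\agentnum} \leq e^{-(1-\welfaredis(0))\harmonic_{\agentnum}/e^{2}} = \agentnum^{-\Theta(1)}$, so the second factor contributes only $1 + \agentnum^{-\Theta(1)}$; both factors therefore converge to $1$. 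The main obstacle is the Renyi-plus-Jensen bound on $\utipfb$: it must be anchored at the same $z^{*}$ as the lower bound on $\utip^{*}$, and it is the specific choice $c = \harmonic_{\agentnum} - \ln \harmonic_{\agentnum} + 1$---neither $\harmonic_{\agentnum}$ nor $\log \agentnum$---that simultaneously makes the virtual-value slack $1/c$ and the winning-probability error $F^{\agentnum}(z^{*})$ vanish at matching rates, which is why the two error sources cannot be decoupled and must be calibrated together.
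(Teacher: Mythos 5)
Your overall architecture is sound and, modulo one step, it reconstructs exactly the displayed bound. The ingredients you derive by hand are the same ones the paper imports as black boxes: your ``Renyi plus Jensen'' bound $\utipfb \le \expect{\max_i \tilde{\welfare}_i} \le \tilde{\HR}^{-1}(\harmonic_{\agentnum})$ is precisely the $\pert=1$ case of the MHR concentration inequality the paper cites (\Cref{lemma: MHR concentration}, together with the reduction to the conditional distribution $\bar{\welfaredis}$ in \Cref{lemma: lb of fb ratio}); and your chord--tangent bound $\hr(z^{*}) \ge c/z^{*}$, hence $\virval(z^{*}) \ge z^{*}(c-1)/c$, plays the role of the paper's \Cref{corollary:mhr prob bound} (the $1/\ee$ loss at the monopoly reserve), which is where the ``$+1$'' in the exponent $\harmonic_{\agentnum}-\ln\harmonic_{\agentnum}+1$ comes from in both arguments. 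The difference is purely one of anchoring: the paper fixes the price as a fraction $\pert^{\dag}=1-\ln\harmonic_{\agentnum}/\harmonic_{\agentnum}$ of $\bar{\mu}_{\agentnum}$ and bounds the tail there, while you fix the point $z^{*}$ by its cumulative hazard and then compare $\bar{\mu}_{\agentnum}$ to it. Your closing asymptotic analysis ($\ee^{\harmonic_{\agentnum}}\le \ee\agentnum$, $(1-p)^{\agentnum}\le \ee^{-\agentnum p}$, non-degeneracy $\welfaredis(0)<1$) matches the paper's and is correct.

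The gap is in the step ``$u \mapsto \tilde{\HR}^{-1}(u)/u$ is non-increasing, so $\tilde{\HR}^{-1}(\harmonic_{\agentnum}) \le (\harmonic_{\agentnum}/c)\,\tilde{\HR}^{-1}(c)$.'' Monotonicity of $\tilde{\HR}^{-1}(u)/u$ gives that conclusion only when $c \le \harmonic_{\agentnum}$, i.e.\ $\ln\harmonic_{\agentnum}\ge 1$, i.e.\ $\agentnum \ge 9$; for smaller $\agentnum$ concavity yields the \emph{reverse} inequality $\tilde{\HR}^{-1}(\harmonic_{\agentnum}) \ge (\harmonic_{\agentnum}/c)\,\tilde{\HR}^{-1}(c)$, and your upper bound on $\utipfb$ collapses. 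This does not threaten the asymptotic claim, but the displayed inequality is asserted for every $\agentnum$ (and is reused for all market sizes in the unknown-$\agentnum$ section), and for small $\agentnum$ it is strictly stronger than the uniform $\ee^{2}$ bound of \Cref{thm:mhr ratio bounds for n in naturals}, so you cannot simply fall back on that. The clean repair is to reverse the anchoring as the paper does: set the price equal to $\pert^{\dag}\bar{\mu}_{\agentnum}$ with $\pert^{\dag}=1-\ln\harmonic_{\agentnum}/\harmonic_{\agentnum}\in[0,1]$, use convexity of $\tilde{\HR}$ (equivalently \Cref{lemma: MHR concentration}) to get $1-\welfaredis(\pert^{\dag}\bar{\mu}_{\agentnum}) \ge (1-\welfaredis(0))\ee^{-\pert^{\dag}\harmonic_{\agentnum}}$, and recover the remaining factor of $\ee^{-1}$ from the reserve-price bound of \Cref{corollary:mhr prob bound} rather than from the virtual-value estimate at $z^{*}$; then only $\pert^{\dag}\in[0,1]$ is needed, which holds for all $\agentnum\ge 1$.
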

\xhdr{Implications of \Cref{thm:r_fb asymp lb}.} The upper bound in \Cref{thm:r_fb asymp lb} depends on both the number of agents $\agentnum$ and the probability $\welfaredis(0)$.
Viewing the agents as a market of size $\agentnum$, the market quality can be measured by $1 - \welfaredis(0)$, which is the probability that an agent's expected reward exceeds her cost.
There are two interesting observations. 
Firstly, in a market of fixed quality, as the market size grows, the impact of the \zj's participation and the presence of agents' private information on the utility of the principal diminishes to zero.
Secondly, for two markets of the same size, this impact is smaller in the higher-quality market (higher $1 - \welfaredis(0)$) compared to the lower-quality one.

\xhdr{Proof overview of \Cref{thm:mhr ratio bounds for n in naturals}.}
Here, we outline the main ideas behind the proof of \Cref{thm:mhr ratio bounds for n in naturals}, which consists of three parts. 
In the first part, we start by scaling down the maximized utility $\utip^{*}$ to ensure that the revenue from anonymous pricing is at most $\ee \cdot \utip^{*}$ (\Cref{lemma:repeated proof}). 
Using the ratio between the first-best benchmark and the revenue from anonymous pricing (\Cref{lemma:mhr ratio bound}), we find that the upper bound of \poa is $\ee^{2}$.

In the second part, we apply another version of \Cref{lemma:repeated proof} to scale down $\utip^{*}$ and break the upper bound for \podm into two multiplicative factors, one of which is provided by \Cref{lemma:mhr OPA vs AP}. 
For the other factor, we use its monotonicity for scaling. We calculate the upper bound for $\agentnum \leq 44$, as shown in \Cref{tab:mhr rsb}. 
For $\agentnum > 44$, since both factors decrease, we conclude that the upper bound of \podm is $\ommhrupbound$.

In the third part, we provide an example (\Cref{example:mhr bounds}) to illustrate the lower bounds of the two ratios.
The lower bound part of \Cref{thm:mhr ratio bounds for n in naturals} utilizes \Cref{example:mhr bounds}. The upper bound part utilizes \Cref{lemma:mhr prob bound,corollary:mhr prob bound,lemma:repeated proof,lemma:mhr ratio bound,lemma:mhr OPA vs AP}.
The formal proof of \Cref{thm:mhr ratio bounds for n in naturals} can be found in \Cref{apx:mhr ratio bounds proof}

\xhdr{Proof overview of \Cref{thm:r_fb asymp lb}.}
Here we provide a high-level overview of the proof of \Cref{thm:r_fb asymp lb}.
We first reference a conclusion that provides powerful tail bounds with respect to the order statistics of the distribution (\Cref{lemma: MHR concentration}).
Through this conclusion, we prove that the ratio of the first-best benchmark to the principal's utility has an upper bound related to the contract and the expectation of the first-order statistic (\Cref{lemma: lb of fb ratio}).
Then, we scale the principal's optimal utility by selecting a specific contract, thereby obtaining an upper bound for \poa in relation to the number of agents $\agentnum$.
The formal proof of \Cref{thm:r_fb asymp lb} can be found in \Cref{apx:r_fb asymp lb proof}

\section{\Zj with Anonymous Pricing}
\label{sec:anonymous pricing}

In this section, we consider the case where the mechanism space of the \zj is restricted.
In particular, we focus on the space of anonymous pricing mechanisms, which are widely studied in the algorithmic mechanism literature and practically prevalent in the crowdsourcing industry.
Throughout this section, we will assume the following for the sake of simplicity.

\begin{assumption} \label[assumption]{asp:identical r and F and posted price}
    Agents have identical expected reward profiles $\rwdprf$ and contribution distributions $\welfaredis$, and the mechanism space of the \zj is restricted to anonymous pricing mechanisms.
\end{assumption}

\begin{definition}[Anonymous pricing mechanism]
    Given each agent $i$'s cost $\cost_{i}$, the anonymous pricing mechanism for the \zj is defined as follows:
    \begin{enumerate}
        \item Post a price $\price \in \nnreals^\otcnum$.
        \item Let $W(\costprf) \triangleq \{i \in \agents: \cost_{i} \leq \price\}$ be the set of agents whose costs are at most the posted price. The task is allocated to one of these agents with identical probability, i.e., for each $i \in \agents$,
        \begin{align*}
            \alloc_{i}^{\dag}(\costprf) \deq \frac{1}{|W(\costprf)|} \cdot \mathds{1}\{i \in W(\costprf)\}.
        \end{align*}
        \item Define the payment $\paymentrule^{\dag}$ from the \zj to the agent such that, for each $i \in \agents$,
        \begin{align*}
            \payment_{i}^{\dag}(\costprf) \deq \price \cdot \alloc_{i}^{\dag}(\costprf).
        \end{align*}
        where $\alloc^{\dag}_{i}(\costprf)$ is the $i$th component of $\allocrule^{\dag}(\costprf)$, representing the probability of agent $i$ being the winner.
    \end{enumerate}
\end{definition}

Under \Cref{asp:identical r and F and posted price}, we can obtain that given a contract $\ctrctprf \in \nnreals^\otcnum$, the \zj's expected utility under the anonymous price $\price$ can be expressed as $\parent{\inner{\ctrctprf}{\rwdprf} - \price} \cdot \prob{\exists i \in \agents, s.t., \cost_{i} \leq \price}$.
Note that the probability that there exists an agent with cost no more than the posted price $\price$ is equal to the probability that there exists an agent with \contribution no less than $\inner{\ctrctprf}{\rwdprf} - \price$.
The \zj's expected utility of an anonymous price $\price$ is equal to $\parent{\inner{\ctrctprf}{\rwdprf} - \price} \cdot \parent{1 - \welfaredis^{\agentnum}_{\ctrctprf}\parent{\inner{\ctrctprf}{\rwdprf} - \price}}$.
By definition of reserve price, the \zj's optimal anonymous price is $\inner{\ctrctprf}{\rwdprf} - \reserve^{(\agentnum)}_{\ctrctprf}$, where $\reserve^{(\agentnum)}_{\ctrctprf}$ is the reserve price of the distribution $\welfaredis^{\agentnum}_{\ctrctprf}$.
Thus, the principal's utility under contract $\ctrctprf$ is equal to 
\begin{align*}
    \inner{\ctrctone - \ctrctprf}{\rwdprf} \cdot \parent{1 - \welfaredis^{\agentnum}_{\ctrctprf}(\reserve^{(\agentnum)}_{\ctrctprf})}.
\end{align*}
Similar to results in \Cref{sec:optimal contract}, we show that when agents are ex ante symmetric, the optimal contract is linear and its induced payoff for the principal can be considered as a variant of virtual value pricing (\Cref{thm:prin's utility under ap}).
The proof is similar to the one for \Cref{thm:opt contract}, and is deferred to \Cref{apx:prin's utility under ap proof}.

\begin{restatable}{theorem}{thmOptContractUnderAP}
\label{thm:prin's utility under ap}
    Under \Cref{asp:identical r and F and posted price}, there exists an optimal contract that is linear.
    In particular, the optimal payoff of the principal (denoted by $\utippost$) is the optimal objective value of the optimization program defined as follows:
    \begin{align*}
         \utippost = \max_{\temp \in \supp{\welfaredis}} \virval^{(\agentnum)}(\temp) \parent{1 - \welfaredis^{\agentnum}(\temp)}.
    \end{align*}
\end{restatable}

Under \Cref{asp:identical r and F and posted price}, \Cref{thm:prin's utility under ap} implies that the principal's problem of designing an optimal contract remains a virtual value pricing problem as established \Cref{thm: identical rwd and dist} of \Cref{thm:opt contract}. The only difference is that the corresponding virtual value function becomes that of the distribution $\welfaredis^{\agentnum}$. 
Intuitively, this virtual value pricing can be viewed as posting a price to a single virtual agent whose value is equal to the first-order statistic among those of the real agents.
If she buys at that price, a payment equal to her virtual value will be received.
Note that the principal's utility in this case is no more than that in the case where there is no restriction on the \zj's mechanism space. A natural question is whether there is any monotonicity in the principal's utility regarding two \zjs with an inclusion relationship between their mechanism spaces? We offer two negative examples for this question in \Cref{apx:diffrent mech space}.

We next analyze the impact of the \zj measured by \poa and \podm as we have done in \Cref{sec:impact}.\footnote{Here we define \poa and \podm under \Cref{asp:identical r and F and posted price} by $\ratiofb \deq \frac{\utipfb}{\utippost}$ and $\ratiosb \deq \frac{\utipsb}{\utippost}$, where benchmarks $\utipfb,\utipsb$ follow the same definitions in \Cref{sec:impact}.} 
Interestingly, as shown in \Cref{thm:regular bounds under ap,thm:mhr bounds under ap}, even when the intermediary is restricted to the space of anonymous pricing mechanisms, the (worst-case) \podm and \poa are the same as the setting where the intermediary can design any DSIC, ex-post IR mechanisms.
The proofs of both theorems are similar to their analogy in \Cref{sec:impact} and thus are deferred into \Cref{apx:regular bounds under ap proof,apx:mhr bounds under ap proof}, respectively.

\begin{restatable}{theorem}{thmRegularBoundUnderAP}
\label{thm:regular bounds under ap}
    Under \Cref{asp:identical r and F and posted price}, if \contribution distributions $\welfaredis$ are regular with support $\supp{\welfaredis} = \parents{\lowsupp, \upsupp}$ $\parent{0 < \lowsupp \leq \upsupp \leq \rewardfun}$, if $1 \leq \upsupp/\lowsupp \leq \ratiolu$, the tight \podm is $\bigtheta{\ratiolu}{\log \ratiolu}$ and the tight \poa is $\bigtheta{\ratiolu}{(\log \ratiolu)^{2}}$.
\end{restatable}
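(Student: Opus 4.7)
\textbf{Proof proposal for Theorem~\ref{thm:regular bounds under ap}.}
The plan is to reduce the anonymous-pricing setting to a single-distribution analysis on $G \triangleq \welfaredis^{\agentnum}$ and then reuse the revenue-curve machinery from the proof of Theorem~\ref{thm:regular bounds}. By Theorem~\ref{thm:prin's utility under ap}, the principal's optimal utility satisfies $\utippost = \max_{\temp \in \supp{\welfaredis}} \virval^{(\agentnum)}(\temp)(1 - \welfaredis^{\agentnum}(\temp))$, where $\virval^{(\agentnum)}$ is the virtual value function of $G$. Passing to quantile space, this is exactly $\utippost = \max_{\quantile \in [0,1]} \quantile \cdot \revenuecurve'(\quantile)$, where $\revenuecurve$ denotes the revenue curve of $G$. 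Crucially, by Lemma~\ref{lemma:regular mhr order statistic}, $G$ inherits regularity from $\welfaredis$, so $\revenuecurve$ is concave, $\revenuecurve'$ is non-increasing, and the support of $G$ is still $[\lowsupp, \upsupp]$.

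For the upper bound on \podm, I split into the same two cases as in the proof of Theorem~\ref{thm:regular bounds}. When $4 \leq \upsupp/\lowsupp \leq \ratiolu$, I partition the quantile space of $G$ into $2\bceil{\log \ratiolu} + 1$ pieces using the geometric grid of slopes $\revenuecurve'(\quantile_k) = \lowsupp \cdot 2^{k - 2\bceil{\log \ratiolu} + \log \ratiolu}$ and apply the telescoping argument on $\revenuecurve^{*} = \int_0^{\quantile_0} \revenuecurve'(\quantile)\,\dd\quantile$ to extract a single index $k^{\dag}$ with $\quantile_{k^{\dag}} \revenuecurve'(\quantile_{k^{\dag}}) \geq \revenuecurve^{*}/(8\bceil{\log \ratiolu})$; this directly gives $\utippost \geq \revenuecurve^{*}/\bO{\log \ratiolu}$. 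When $1 \leq \upsupp/\lowsupp \leq 4$, the same construction of $\quantile^{\dag}$ as the largest quantile with $\revenuecurve'(\quantile^{\dag}) \geq \lowsupp/2$ yields $\utippost \geq \revenuecurve^{*}/\bO{1}$. In both cases, Lemma~\ref{lemma:regular OPA vs AP} applied to the $\agentnum$-agent instance with distribution $\welfaredis$ gives $\utipsb \leq \ratiooa \revenuecurve^{*}$, since the revenue of optimal anonymous pricing with $\agentnum$ i.i.d.\ agents equals $\revenuecurve^{*} = \max_\temp \temp(1-\welfaredis^\agentnum(\temp))$. Combining these bounds yields $\ratiosb = \bO{\log \ratiolu}$.

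For the upper bound on \poa, I treat the first-order statistic $\welfaremax$ as a single draw from $G$, so that $\utipfb = \expect[]{\plus{\welfaremax}} = \int_0^{\infty}(1 - G(\temp))\,\dd\temp$. Applying Lemma~\ref{lemma:regular fb worst case} to $G$ (which is regular with support $[\lowsupp, \upsupp]$ and monopoly revenue $\revenuecurve^{*}$) gives $\utipfb \leq \revenuecurve^{*}(1 + \ln(\upsupp/\revenuecurve^{*}))$. Since $\revenuecurve^{*} \geq \lowsupp$ (evaluating the revenue curve at the left endpoint, where $G(\lowsupp) = 0$), we have $\upsupp/\revenuecurve^{*} \leq \ratiolu$, hence $\utipfb \leq \revenuecurve^{*} \cdot \bO{\log \ratiolu}$. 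Chaining with the \podm bound gives $\ratiofb \leq \bO{(\log \ratiolu)^2}$.

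For tightness, I reuse Example~\ref{example:regular bounds} with $\agentnum = 1$: then $G = \welfaredis$ and $\virval^{(1)} = \virval$, so $\utippost = \utip^{*}$ and the computation in the lower-bound portion of the proof of Theorem~\ref{thm:regular bounds} directly produces $\ratiosb = \bomega{\log \ratiolu}$ and $\ratiofb = \bomega{(\log \ratiolu)^2}$. The main obstacle is conceptual rather than technical: I must be careful that the regularity-based bounds (in particular, concavity of $\revenuecurve$ and the monopoly-revenue lower bound $\revenuecurve^{*} \geq \lowsupp$) are stated on the correct distribution $G = \welfaredis^\agentnum$ rather than $\welfaredis$, and that Lemma~\ref{lemma:regular OPA vs AP} is invoked on the original $\agentnum$-agent instance so that $\revenuecurve^{*}$—which is both the monopoly revenue of $G$ and the optimal anonymous-pricing revenue on $\agentnum$ copies of $\welfaredis$—serves as the common pivot connecting $\utippost$, $\utipsb$, and $\utipfb$.
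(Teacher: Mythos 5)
Your proposal is correct and follows essentially the same route as the paper's proof: the paper likewise observes that $\utippost = \max_{\quantile} \quantile\cdot\revenuecurve'(\quantile)$ for the revenue curve of $\welfaredis^{\agentnum}$ is exactly the quantity already lower-bounded against $\revenuecurve^{*}$ in the proof of \Cref{thm:regular bounds} (via \cref{equation: U* >= phi^(n)(z)pr}), so the upper bounds carry over verbatim, and it reuses \Cref{example:regular bounds} with $\agentnum=1$ (where $\utippost=\utip^{*}$) for tightness. You have merely spelled out the steps that the paper imports by reference.
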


\begin{restatable}{theorem}{thmMHRBoundUnderAP}
\label{thm:mhr bounds under ap}
   Under \Cref{asp:identical r and F and posted price}, if \contribution distributions $\welfaredis$ are MHR, the tight \podm is in $[\ee, \apmhrupbound]$ and the tight \poa is $\ee^{2}$. Moreover, if $\agentnum=1$, the tight \podm is $\ee$.
\end{restatable}

\section{Unknown Market Size}
\label{sec:robust}

In practice, the principal may lack perfect knowledge of the number of agents or incur high costs to acquire it.
Motivated by this, we introduce and explore the \emph{linear contract design for unknown market size}, where the principal knows only a range for the number of agents, i.e., $\agentnum \in [\lownum:\upnum]$, rather than its exact value.
Our goal is to determine how the principal in our three-party model should design linear contracts to safeguard her utility in this robust setting.  
To evaluate the principal's utility under a given linear contract, we continue using the first-best and second-best benchmarks and redefine the price of anarchy and the price of double marginalization as follows:
\begin{align*}
    \ratiofb(\lownum,\upnum) = \min\nolimits_{\contract \in [0, 1]} \max\nolimits_{\agentnum\in[\lownum:\upnum]} \frac{\utipfb(\agentnum)}{\utip(\contract, \agentnum)}, &&
    \ratiosb(\lownum,\upnum) = \min\nolimits_{\contract \in [0, 1]} \max\nolimits_{\agentnum\in[\lownum:\upnum]} \frac{\utipsb(\agentnum)}{\utip(\contract, \agentnum)},
\end{align*}
where for the number of agents $\agentnum$, $\utipfb(\agentnum)$ (resp.\ $\utipsb(\agentnum)$) is the first-best benchmark (resp.\ the second-best benchmark) and $\utip(\contract, \agentnum)$ is the principal's utility under linear contract $\contract$.

In the above definition, both ratios are defined in the worst case over the range $[\lownum:\upnum]$ of possible agent numbers.  
To ensure that these ratios have upper bounds related to the market size range, $[\lownum:\upnum]$, we propose two contract design schemes under regular and MHR \contribution distributions, respectively.  
Our results show that when the market size is unknown, both ratios suffer a multiplicative loss compared to the bounds established for known market sizes in \Cref{sec:impact}.
Under regular distributions, the loss factor is $\frac{\upnum}{\lownum}$ (\Cref{thm:robust regular bounds}), whereas under MHR distributions, it reduces to $\frac{\log \upnum}{\log \lownum}$ (\Cref{thm:robust MHR bounds}). 
Notably, the former bound arises from optimizing the contract for any market size in $[\lownum:\upnum]$, while the latter is uniquely attained by optimizing for the minimal size $\lownum$---optimizing for any other size fails to provide this guarantee.
Moreover, we provide an example illustrating that if the principal has no knowledge of the market size—i.e., if the number of agents ranges from $1$ to infinity—then, regardless of the contract design, the gap between her utility and these benchmarks cannot be upper bounded (\Cref{thm:robust negative}).

\begin{restatable}{theorem}{robustbounds}\label{thm:robust regular bounds}
    For agents with identical expected rewards $\rewardfun$ and regular \contribution distributions $\welfaredis$ with support $\supp{\welfaredis}=[\lowsupp, \upsupp]$ $(0 \leq \lowsupp \leq \upsupp \leq \rewardfun)$, if $1 \leq \upsupp/\lowsupp \leq \ratiolu$ and the principal is only aware of the range of the number of agents $\agentnum\in[\lownum:\upnum]$, then 
    $ \ratiosb(\lownum,\upnum) = \frac{\upnum}{\lownum} \cdot \bigO{\ratiolu}{\log \ratiolu}$, and $\ratiofb(\lownum,\upnum) = \frac{\upnum}{\lownum} \cdot \bigO{\ratiolu}{(\log \ratiolu)^{2}}$.
\end{restatable}

The intuition behind this theorem is that when $\agentnum \in [\lownum:\upnum]$, if the principal designs the contract as if the market size were $\upnum$, then the worst-case upper bounds for the two ratios increase by a multiplicative factor of $\bceil{\upnum/\lownum}$ compared to the case where the principal knows the exact value of $\agentnum$. 
Moreover, when $\lownum = \upnum$, meaning the principal knows the exact market size, this result aligns with \Cref{thm:regular bounds} and \Cref{thm:mhr ratio bounds for n in naturals}.
The detailed proof of \Cref{thm:robust regular bounds} can be found in \Cref{apx:robust regular bounds proof}.

\begin{restatable}{theorem}{mhrrobustbounds}\label{thm:robust MHR bounds}
    For agents with identical expected rewards $\rewardfun$ and MHR \contribution distributions $\welfaredis$, if the principal is only aware of the range of the number of agents $\agentnum\in[\lownum:\upnum]$, then
        \begin{align*}
            \ratiosb(\lownum, \upnum) 
            \leq \ratiofb(\lownum, \upnum) 
            \leq \frac{\ln \parent{\frac{1}{2} (1 - \welfaredis(0)) \upnum + 1}}{\ln \parent{\frac{1}{2} (1 - \welfaredis(0)) \lownum + 1}} \cdot \ee^{2} + \bigO{\lownum,\upnum}{\frac{\log \log \lownum \cdot \log \upnum}{(\log \lownum)^{2}}}.
        \end{align*}
\end{restatable}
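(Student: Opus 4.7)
The plan is to imitate the known-market-size analysis of \Cref{thm:mhr ratio bounds for n in naturals}, but to choose a single linear contract $\contract^\dag$ ``defensively'' for the smallest possible market size $\lownum$. The rationale is that under MHR the expected first-order statistic $\bar\mu_\agentnum$ of the positive part of the contribution distribution grows only logarithmically in $\agentnum$, so a contract tuned for $\lownum$ should remain close-to-optimal even when the realized market is of size $\upnum\gg\lownum$; the multiplicative gap incurred by this mismatch will be exactly the source of the $\log\upnum/\log\lownum$ factor in the theorem.

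Concretely, mirroring the contract construction in the proof of \Cref{thm:r_fb asymp lb}, I would set $\rewardfun(1-\contract^\dag)=\bar\mu_\lownum\cdot\pert^\dag_\lownum$ for an appropriate $\pert^\dag_\lownum\in(0,1)$ chosen so that $\pert^\dag_\lownum\cdot\harmonic_\lownum\approx\ln\parent{\tfrac{1}{2}(1-\welfaredis(0))\lownum+1}$. This choice is made precisely so that when \Cref{lemma: lb of fb ratio} is invoked at $\agentnum=\lownum$, the factor $\parent{1-(1-\welfaredis(0))\ee^{-\pert^\dag_\lownum\harmonic_\lownum-1}}^\lownum$ collapses to a constant bounded away from $1$; composing this with the $\ee$-losses from \Cref{lemma:mhr prob bound} and \Cref{lemma:mhr ratio bound} that appeared in the proof of \Cref{thm:mhr ratio bounds for n in naturals} recovers the $\ee^{2}$ factor at $\agentnum=\lownum$. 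For a general $\agentnum\in[\lownum:\upnum]$, I would define $\pert(\agentnum)=\rewardfun(1-\contract^\dag)/\bar\mu_\agentnum\le\pert^\dag_\lownum$ and apply \Cref{lemma: lb of fb ratio} to obtain the analogous bound $\utip(\contract^\dag,\agentnum)/\utipfb(\agentnum)\ge\pert(\agentnum)\parent{1-\parent{1-(1-\welfaredis(0))\ee^{-\pert(\agentnum)\harmonic_\agentnum-1}}^{\agentnum}}$.

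The main obstacle is controlling the ratio $\bar\mu_\agentnum/\bar\mu_\lownum$ under MHR with the specific constant $(1-\welfaredis(0))/2$ appearing inside the logarithms. To do this I would combine \Cref{lemma: MHR concentration} with the integral representation $\bar\mu_\agentnum=\int_{0}^{\infty}(1-\bar\welfaredis^\agentnum(t))\,\dd t$ to derive the sharp bound $\bar\mu_\agentnum/\bar\mu_\lownum\leq\ln\parent{\tfrac{1}{2}(1-\welfaredis(0))\agentnum+1}/\ln\parent{\tfrac{1}{2}(1-\welfaredis(0))\lownum+1}$ up to lower-order corrections. This estimate has the pleasant consequence that the product $\pert(\agentnum)\harmonic_\agentnum$ stays essentially constant across the entire range, so the $\parent{1-(1-\welfaredis(0))\ee^{-\pert(\agentnum)\harmonic_\agentnum-1}}^{\agentnum}$ factor remains well-behaved at every $\agentnum$. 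Worst-casing over $\agentnum\in[\lownum:\upnum]$ — the supremum is attained at $\agentnum=\upnum$, where $\pert(\agentnum)$ is smallest — then yields the leading logarithmic ratio, and combining it with the $\ee^{2}$ factor inherited from the per-$\agentnum$ MHR analysis gives the main term of the theorem.

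Finally, the additive $\bigO{\lownum,\upnum}{\log\log\lownum\cdot\log\upnum/(\log\lownum)^{2}}$ error collects lower-order corrections from the approximation $\harmonic_\agentnum\approx\ln\agentnum$ and from the tail estimates of \Cref{lemma: MHR concentration}. I expect the delicate bookkeeping of these constants to be the hardest part of the argument, since a crude $\harmonic_\agentnum/\harmonic_\lownum$ substitution would give the correct asymptotics but not the precise stated form $\ln\parent{\tfrac{1}{2}(1-\welfaredis(0))\agentnum+1}$ in the logarithms.
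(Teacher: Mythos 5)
Your high-level instinct---fix one linear contract tuned to the smallest possible market size $\lownum$---matches the paper's, but the execution you sketch has genuine gaps. The most important one: you attribute the factor $\ln\parent{\tfrac12(1-\welfaredis(0))\upnum+1}/\ln\parent{\tfrac12(1-\welfaredis(0))\lownum+1}$ to a bound on $\bar{\mu}_{\agentnum}/\bar{\mu}_{\lownum}$, the ratio of expected first-order statistics of the \emph{conditional} distribution $\bar{\welfaredis}$. But $\bar{\welfaredis}$ does not depend on $\welfaredis(0)$, so no estimate on $\bar{\mu}_{\agentnum}/\bar{\mu}_{\lownum}$ can be the source of the $(1-\welfaredis(0))$ inside the logarithms. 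In the paper that constant enters through the ratio of first-best benchmarks $\utipfb(\upnum)/\utipfb(\lownum)$ (\Cref{lem:FB ratio}), via the binomial-mixture identity $\utipfb(\agentnum)=\sum_i\binom{\agentnum}{i}(1-\welfaredis(0))^i\welfaredis(0)^{\agentnum-i}\bar{\mu}_i$ of \Cref{lem:U_fb}: the number of agents with positive contribution concentrates around $(1-\welfaredis(0))\agentnum$. Establishing \Cref{lem:FB ratio} is the analytic heart of the proof (it needs \Cref{lem:harmonic bound,lem:diff Hhat}, not just \Cref{lemma: MHR concentration} plus an integral representation), and your sketch does not engage with it. A second gap: your claim that $\pert(\agentnum)\harmonic_\agentnum$ ``stays essentially constant'' over $\agentnum\in[\lownum:\upnum]$ is false in general, since \Cref{lem:ratio of ELSO} only bounds $\bar{\mu}_{\agentnum}/\bar{\mu}_{\lownum}$ from above; for near-degenerate MHR distributions $\bar{\mu}_{\agentnum}$ barely grows, so $\pert(\agentnum)\harmonic_\agentnum$ can grow like $\harmonic_\agentnum$, and your second factor is then only $O(1)$ rather than $1+o(1)$, which would corrupt the multiplicative constant $\ee^2$ in the stated bound.

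For comparison, the paper's argument is much lighter at the top level: it takes $\contract^\dag$ to be the \emph{exactly optimal} contract for $\lownum$ agents, uses that both $\utipfb(\agentnum)$ and $\utip(\contract^\dag,\agentnum)$ are nondecreasing in $\agentnum$ to collapse the worst case to the single ratio $\utipfb(\upnum)/\utip(\contract^\dag,\lownum)$, and factors it as $\frac{\utipfb(\upnum)}{\utipfb(\lownum)}\cdot\frac{\utipfb(\lownum)}{\utip(\contract^\dag,\lownum)}$. The second factor is exactly the known-market-size $\ratiofb$ at $\lownum$, so the clean $\ee^2$ (and the refined error term) is inherited wholesale from \Cref{thm:mhr ratio bounds for n in naturals} and \Cref{thm:r_fb asymp lb}; your attempt to regenerate it from \Cref{lemma: lb of fb ratio} with a hand-picked $\pert^\dag_\lownum$ satisfying $\pert^\dag_\lownum\harmonic_\lownum\approx\ln\parent{\tfrac12(1-\welfaredis(0))\lownum+1}$ does not deliver $\ee^2$ for small $\lownum$ (at $\lownum=1$ it gives a constant far above $\ee^2$, with an extra $(1-\welfaredis(0))^{-1}$). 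The first factor is then exactly the content of \Cref{lem:FB ratio}. To salvage your per-$\agentnum$ approach you would have to replace the $\bar{\mu}$-ratio claim with the $\utipfb$-ratio bound, i.e.\ prove \Cref{lem:FB ratio} anyway.
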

Under MHR \contribution distributions, \Cref{thm:robust MHR bounds} provides better results than \Cref{thm:robust regular bounds}.
Meanwhile, when $\lownum=\upnum$, the result of this theorem is also consistent with the results of \Cref{thm:mhr ratio bounds for n in naturals}.
The detailed proof of \Cref{thm:robust MHR bounds} can be found in \Cref{apx:robust MHR bounds proof}.

We remark that the ratios in the above two theorems tend to infinity when $\upnum$ approaches infinity and $\lownum$ remains constant.
Intuitively, this captures the scenario where the principal has almost no knowledge of the range of $\agentnum$. 
In \Cref{thm:robust negative}, we show that in this case, no matter how the principal designs the contract, the two ratios cannot be upper bounded.
The detailed proof of \Cref{thm:robust negative} can be found in \Cref{apx:robust negative proof}.

\begin{restatable}{theorem}{thmRobustNegative}\label{thm:robust negative}
    For any small constant $\varepsilon > 0$ and agents with identical expected rewards $\rewardfun$ and MHR \contribution distributions $\welfaredis$, if the principal is only aware of the range of the number of agents $\agentnum\in[1: \infty]$, no linear contract $\contract$ satisfies that both $\ratiofb(1, \infty)$ and $\ratiosb(1, \infty)$ have upper bounds.
\end{restatable}

\section{Conclusions and Future Work} \label{sec:conclusion}

This paper presents a three-party model that captures the strategic and contractual interactions in crowdsourcing markets, where a principal delegates task execution through a profit-sharing contract with an intermediary who, in turn, selects an agent via a mechanism. 
Modeling this interaction as an extensive-form Stackelberg game, we characterize the subgame perfect equilibrium and show that the intermediary's optimal mechanism corresponds to a Bayesian revenue-optimal mechanism parameterized by the principal's contract.
This characterization enables the principal's contract design problem to be reformulated in a more explicit and tractable manner and reveals that linear contracts are optimal even when the task has multiple outcomes and agents have asymmetric
cost distributions, provided their outcome distributions are identical, substantially simplifying the contract design space.
Moreover, when agents also share identical cost distributions, we reduce the principal's problem to a virtual value pricing formulation and derive tight or nearly tight bounds on the price of double marginalization (PoDM) and the price of anarchy (PoA), thereby quantifying the principal's utility loss due to delegation and information asymmetry under regular and monotone hazard rate (MHR) assumptions.
We further demonstrate that our results remain robust when the intermediary is restricted to simpler mechanisms such as anonymous pricing, and when the principal has only partial information about market size.

Several promising directions remain for future research.
One natural extension is to study non-myopic intermediaries who seek to maximize cumulative utility over repeated interactions.
Another is to consider a reversal in the Stackelberg order, where the intermediary first selects a mechanism class and the principal subsequently responds with a contract---potentially altering the equilibrium structure.
It would also be valuable to explore more complex market structures involving multiple layers of intermediation, such as hierarchical or networked delegation.
Finally, extending the PoDM and PoA analysis to heterogeneous agent populations---particularly when both reward and cost distributions vary---would broaden the applicability of our framework to more realistic environments.

\bibliography{mybibfile}

\appendix

\section{Unobservable Reward}\label{apx:unobservable reward}
Our results hold even when the reward profile $\boldsymbol{\otc} = (\otc_{1}, \otc_{2}, \ldots, \otc_{\otcnum})$ is unobservable to the intermediary.
For convenience, a contract $\ctrctprf = (\ctrct_{1}, \ctrct_{2}, \ldots, \ctrct_{\otcnum})$ in our paper is defined as a ratio, implying that the intermediary's payment is $\ctrct_{j} \otc_{j}$ when outcome $j \in [\otcnum]$ occurs. In standard literature, contracts typically specify the outcome-based payment directly (i.e., $\ctrct_{j} \otc_{j}$). Thus, given a contract, the intermediary already knows the payment for every possible outcome. This information is sufficient for the intermediary to design the optimal mechanism, and therefore all our results continue to hold without requiring observability of $\boldsymbol{\otc}$.

\section{Two \Zjs with Different Mechanism Spaces}\label{apx:diffrent mech space}

In our model, the principal's utility depends critically on the constraints faced by the \zj in her mechanism design spaces.
A natural question arises: consider two \zjs with different mechanism design spaces.
If one's mechanism space strictly contains the other's, does the principal always benefit from the more flexible \zj?
We show that this is not universally true by offering two examples, illustrating that the inclusion-relation of mechanism design spaces for the two \zjs does not imply the order of the principal's optimal utility.

\begin{example}
    \Zj $A$ can choose all mechanisms while \zj $B$ can choose a single mechanism that never assigns the task to agents. 
    In this example, the principal has higher utility facing $A$ than $B$.
\end{example}

\begin{example}
    \Zj $A$ can choose all mechanisms while \zj $B$ can choose a single mechanism: assigning the task to the agent with the lowest cost and paying her the second lowest cost among the other agents. 
    In this example, the principal has higher utility facing $B$ than~$A$.
\end{example}

The intuition behind this is that the utility of the \zj depends both on the allocation and payment of the mechanism, but only the allocation directly affects the utility of the principal.
Therefore, the \zj that optimize her own utility within a smaller mechanism space may design allocation rules that are either more or less favorable to the principal.

\section{Missing Proofs}
\label{apx:missing proofs}

This section provides all proofs omitted from the main text.

\subsection{Proof of Proposition~\ref{thm:opt mech}}
\label{apx:opt mech proof}
In this section, we will prove \Cref{thm:opt mech}.

\thmOptMech*

An immediate implication from the above result is as follows.

\begin{corollary}\label{cor:opt mech contract invariance}
    If two contracts $\ctrctprf, \ctrctprfvar \in \nnreals^\otcnum$ yield the same expected payment for each agent, i.e., $\inner{\ctrctprf}{\rwdprf_{i}} = \inner{\ctrctprfvar}{\rwdprf_{i}}$ for all $i \in \agents$, then their corresponding optimal mechanisms, $\mech_{\ctrctprf}^{\star}$ and $\mech_{\ctrctprfvar}^{\star}$, are identical. Moreover, the principal's utility is identical under both contracts.
\end{corollary}

At a high level, to prove \Cref{thm:opt mech}, we first show that any truthful mechanism between the \zj and $\agentnum$ agents can be converted into a truthful single-item auction between a seller and $\agentnum$ buyers, and vice versa, as described in \Cref{lemma:convert mechanism}.
The key is to convert agent $i$'s private cost $\cost_{i}$ into buyer $i$'s private value $\val_{i}$ in the auction.
The contract $\ctrctprf$ plays a pivotal role in this conversion: $\val_{i} = \inner{\ctrctprf}{\rwdprf_{i}} - \cost_{i}$, which is the \cwelfare of agent $i$.
Thus, the seller's revenue in the auction meets the \zj's utility in the mechanism.
Then, we establish that the regularity or MHR property of the \contribution distribution is preserved in the distribution of the \cwelfare for any contract $\ctrctprf$, as shown in \Cref{lemma:valuedis welfaredis share regular/mhr}.
This crucial property enables us to apply standard auction theory, demonstrating that the optimal mechanism for the \zj is indeed the virtual welfare maximizer mechanism as follows.

Consider the multi-buyer scenario where a seller wants to sell a single item to one of $\agentnum$ buyers, whose private values of the item, $\set{\val_{i}}_{i \in\agents}$, are drawn independently from regular distributions $\set{\valdis_{i}}_{i \in\agents}$.
\citet{Mye-81} proved that the \emph{virtual welfare maximizer} mechanism is DSIC, ex-post IR, and maximizes the seller's expected revenue.

\begin{definition}[Virtual welfare maximizer, \citealp{Mye-81}]\label{def:vwm}
    Let the value $\val_{i}$ of each buyer $i \in \agents$ be drawn from a distribution $\valdis_{i}$.
    The \emph{virtual welfare maximizer} is defined as follows:
    \begin{enumerate}
        \item Transform the \textup{(}truthfully reported\textup{)} values $\set{\val_{i}}_{i \in \agents}$ into the corresponding virtual values $\set{\virval_{i}(\val_{i})}_{i \in \agents}$, where $\virval_{i}(\cdot)$ is the virtual value function of distribution $\valdis_{i}$.
\item {If the buyers' virtual values are all negative, the seller withholds the item; otherwise, allocate the item to the buyer with the maximum non-negative virtual value}, i.e.,
        \begin{align*}
            \allocrule^{\dag}(\valprf) 
            \deq \argmax\limits_{\Vert \allocrule \Vert_{1} \leq 1} \sum\limits_{i \in \agents} \virval_{i}(\val_{i}) \cdot \alloc_{i}.
        \end{align*}
\item Define the payment $\paymentrule^{\dag}$ from the buyers to the seller such that,  for each $i \in \agents$,
        \begin{align*}
            \payment_{i}^{\dag}(\valprf) \deq 
            \virval_{i}^{-1} \parent{\max\nolimits_{j \in \agents \setminus \set{i}} \plus{\virval_{j}\parent{\val_{j}}}}
            \cdot \alloc^{\dag}_{i}(\valprf),
\end{align*}
        where $\alloc^{\dag}_{i}$ is the $i$th component of $\allocrule^{\dag}$, representing the probability of agent $i$ being the winner.
    \end{enumerate}
\end{definition}

\begin{lemma}[\citealp{Mye-81}]\label{lemma: myerson optimal auction}
    Assuming the value $\val_{i}$ of each buyer $i \in \agents$ is drawn independently from a regular distribution $\valdis_{i}$, the virtual welfare maximizer mechanism is a DSIC and ex-post IR mechanism that maximizes the expected revenue.
\end{lemma}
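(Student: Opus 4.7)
The plan is to follow Myerson's classical argument by first expressing the seller's expected revenue as an expected \emph{virtual welfare}, so that the optimization reduces to a point-wise question about the allocation rule. Concretely, I would invoke the standard characterization of single-parameter DSIC and ex-post IR mechanisms: in any such $(\allocrule, \paymentrule)$, the interim allocation $\bar{\alloc}_i(\val_i) \deq \expect[\valprf_{-i}]{\alloc_i(\val_i, \valprf_{-i})}$ must be non-decreasing in $\val_i$, and the payment rule is pinned down up to an additive constant by Myerson's payment identity. Normalizing so that the lowest type in $\supp{\valdis_i}$ receives zero utility (which is without loss for revenue maximization subject to ex-post IR), an integration-by-parts argument yields
\begin{align*}
\expect[\valprf]{\sum_{i\in\agents}\payment_{i}(\valprf)} = \expect[\valprf]{\sum_{i\in\agents}\virval_{i}(\val_{i})\cdot \alloc_{i}(\valprf)}.
\end{align*}
Thus, maximizing expected revenue over DSIC, ex-post IR mechanisms is equivalent to maximizing expected virtual welfare subject to feasibility and monotonicity of the allocation.

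Next, I would observe that the right-hand side above is maximized point-wise by assigning the item, for each realized profile $\valprf$, to a buyer with the largest non-negative virtual value and withholding it otherwise, which is exactly $\allocrule^{\dag}$ from \Cref{def:vwm}. This yields an upper bound on the expected revenue achievable by any candidate mechanism. The main obstacle, and where the regularity hypothesis plays its essential role, is to verify that $\allocrule^{\dag}$ is itself implementable: since each $\virval_i$ is non-decreasing on $\supp{\valdis_i}$ under regularity, increasing $\val_i$ while holding $\valprf_{-i}$ fixed only strengthens buyer $i$'s claim to the item, so $\bar{\alloc}^{\dag}_i$ is non-decreasing in $\val_i$. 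Without regularity, $\allocrule^{\dag}$ could fail to be monotone and an additional ``ironing'' procedure would be required.

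Finally, I would derive the payment rule from the critical-value principle: the DSIC-compatible payment charged to a winning buyer $i$ equals the infimum of values at which she would still win against $\valprf_{-i}$. By the construction of $\allocrule^{\dag}$, buyer $i$ wins exactly when $\virval_i(\val_i) \geq \max_{j\neq i}\plus{\virval_j(\val_j)}$, so this critical value is $\virval_{i}^{-1}\bigl(\max_{j\in\agents\setminus\set{i}}\plus{\virval_j(\val_j)}\bigr)$, matching $\payment_{i}^{\dag}$ in \Cref{def:vwm}. Ex-post IR then follows since a winning buyer pays at most her reported value, and the induced expected revenue attains the virtual-welfare upper bound derived above, completing the proof of optimality.
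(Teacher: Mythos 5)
Your proposal is a correct reconstruction of Myerson's classical argument (revenue equals expected virtual welfare via the payment identity, point-wise maximization of virtual welfare, regularity ensuring monotonicity and hence implementability, and critical-value payments giving DSIC and ex-post IR); the paper itself states this lemma as a cited result from \citet{Mye-81} without reproving it, so your proof matches exactly the standard argument the paper relies on. The only minor imprecision is that you state the \emph{interim} monotonicity condition (the BIC characterization) where DSIC requires per-profile monotonicity of $\alloc_i(\cdot,\valprf_{-i})$, but your verification for $\allocrule^{\dag}$ in fact establishes the stronger per-profile version, so nothing breaks.
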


The allocation rule of the virtual welfare maximizer mechanism ensures that the auctioned item is only assigned to buyers whose bids meet or exceed their respective monopoly reserve prices.
In other words, as long as there exists at least one such buyer, the item will be successfully auctioned.

To prove \Cref{thm:opt mech}, we first present how, when the contract is given by the principal, the mechanism between the \zj and the $\agentnum$ agents can be converted into a single-item auction between a seller and $\agentnum$ buyers, and vice versa.

\begin{lemma} \label[lemma]{lemma:convert mechanism}
    Given a contract $\ctrctprf \in \nnreals^\otcnum$, consider an auction between a seller with a single item and $\agentnum$ buyers, where each buyer $i$'s private value of the item is equal to her \cwelfare, i.e., $\val_{i} = \inner{\ctrctprf}{\rwdprf_{i}} - \cost_{i}$.
    Then, the following hold:
    \begin{enumerate}
        \item For any truthful mechanism $\mech_{\ctrctprf} = \parent{\allocrule_{\ctrctprf}, \paymentrule_{\ctrctprf}}$ of the \zj, we can construct an auction $\mech_{\ctrctprf}^{\dag} = (\allocrule^{\dag}_{\ctrctprf}, \paymentrule^{\dag}_{\ctrctprf})$ such that 
        \begin{align*}
            \allocrule^{\dag}_{\ctrctprf}(\valprf) = \allocrule_{\ctrctprf}(\costprf),
            \quad && \quad
            \payment^{\dag}_{\ctrctprf, i}(\valprf) = \inner{\ctrctprf}{\rwdprf_{i}} \cdot \alloc_{\ctrctprf, i}(\costprf) - \payment_{\ctrctprf, i}(\costprf).
        \end{align*}
        Then we have: (i) the auction $\mech_{\ctrctprf}^{\dag}$ is truthful; (ii) the revenue of the seller in $\mech_{\ctrctprf}^{\dag}$ is equal to the utility of the \zj in $\mech_{\ctrctprf}$.
        \item For any truthful auction $\mech_{\ctrctprf}^{\dag} = (\allocrule^{\dag}_{\ctrctprf}, \paymentrule^{\dag}_{\ctrctprf})$ of the seller, we can construct a mechanism $\mech_{\ctrctprf} = \parent{\allocrule_{\ctrctprf}, \paymentrule_{\ctrctprf}}$ of the \zj such that
        \begin{align*}
            \allocrule_{\ctrctprf}(\costprf) = \allocrule^{\dag}_{\ctrctprf}(\valprf),
            \quad && \quad
            \payment_{\ctrctprf, i}(\costprf) = \inner{\ctrctprf}{\rwdprf_{i}} \cdot \alloc^{\dag}_{\ctrctprf, i}(\valprf) - \payment^{\dag}_{\ctrctprf, i}(\valprf).
        \end{align*} 
        Then we have: (i) the mechanism $\mech_{\ctrctprf}$ is truthful; (ii) the utility of the \zj in $\mech_{\ctrctprf}$ is equal to the revenue of the seller in $\mech_{\ctrctprf}^{\dag}$.
        \item The distribution of $\val_{i}$ is $\valdis_{\ctrctprf, i}$, then its CDF is 
        \begin{align*}
            \valdis_{\ctrctprf, i}(\temp) = \welfaredis_{i}(\inner{\ctrctone - \ctrctprf}{\rwdprf_{i}} + \temp).
        \end{align*}
    \end{enumerate}
\end{lemma}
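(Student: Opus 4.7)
The plan is to verify each of the three parts by direct algebraic manipulation, exploiting the deterministic bijection $c_i \leftrightarrow v_i = \langle \ctrctprf, \rwdprf_i\rangle - c_i$ (which holds for every fixed contract $\ctrctprf$ and publicly known reward profile $\rwdprf_i$). The central identity, from which essentially everything follows, is
\[
v_i \cdot \alloc^{\dag}_{\ctrctprf, i}(\valprf) - \payment^{\dag}_{\ctrctprf, i}(\valprf)
= \bigl(\langle \ctrctprf, \rwdprf_i\rangle - c_i\bigr) \alloc_{\ctrctprf, i}(\costprf) - \bigl(\langle \ctrctprf, \rwdprf_i\rangle \alloc_{\ctrctprf, i}(\costprf) - \payment_{\ctrctprf, i}(\costprf)\bigr)
= \payment_{\ctrctprf, i}(\costprf) - c_i \cdot \alloc_{\ctrctprf, i}(\costprf),
\]
i.e., buyer $i$'s quasi-linear utility in the constructed auction equals agent $i$'s quasi-linear utility in the original mechanism.

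For Part 1, I would first note that feasibility transfers trivially since $\allocrule^{\dag}_{\ctrctprf} = \allocrule_{\ctrctprf}$. Next, I would use the identity above to argue DSIC: a misreport by buyer $i$ of $\hat v_i$ corresponds uniquely to a misreport by agent $i$ of $\hat c_i = \langle \ctrctprf, \rwdprf_i\rangle - \hat v_i$, and the resulting utilities are equal; since $\mech_{\ctrctprf}$ is DSIC, truth-telling maximizes the right-hand side, so it maximizes the left-hand side as well. The same identity under truthful reporting gives ex-post IR (the right-hand side is $\geq 0$). For the revenue-utility correspondence, summing the payment definition yields
\[
\sum_{i \in \agents} \payment^{\dag}_{\ctrctprf, i}(\valprf)
= \sum_{i \in \agents}\bigl(\langle \ctrctprf, \rwdprf_i\rangle \cdot \alloc_{\ctrctprf, i}(\costprf) - \payment_{\ctrctprf, i}(\costprf)\bigr),
\]
which is exactly the \zj's utility. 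Part 2 is the inverse construction; since the transformation $\payment \leftrightarrow \payment^{\dag}$ is an involution (adding back $\langle\ctrctprf,\rwdprf_i\rangle \alloc_i$ and subtracting), the same identity and the same argument establishes DSIC, ex-post IR, and the utility/revenue equality in the reverse direction.

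Part 3 is a one-line CDF computation. Since $\welfare_i = \exrwd_i - c_i = \langle \ctrctone, \rwdprf_i\rangle - c_i$ and $v_i = \langle \ctrctprf, \rwdprf_i\rangle - c_i$, the difference $\welfare_i - v_i = \langle \ctrctone - \ctrctprf, \rwdprf_i\rangle$ is a deterministic constant once $\ctrctprf$ and $\rwdprf_i$ are fixed, so
\[
\valdis_{\ctrctprf, i}(z) = \prob{v_i \leq z} = \prob{\welfare_i \leq z + \langle \ctrctone - \ctrctprf, \rwdprf_i\rangle} = \welfaredis_i\bigl(\langle \ctrctone - \ctrctprf, \rwdprf_i\rangle + z\bigr).
\]

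The proof is essentially bookkeeping and I do not anticipate any real obstacle. The only subtlety worth flagging is the treatment of limited liability: the constraint that the intermediary's payments to the agents are nonnegative corresponds in the auction to $\payment^{\dag}_{\ctrctprf, i}(\valprf) \leq \langle \ctrctprf, \rwdprf_i\rangle \alloc_{\ctrctprf, i}(\costprf)$, which together with ex-post IR in the auction ($\payment^{\dag}_{\ctrctprf, i}(\valprf) \leq v_i \alloc^{\dag}_{\ctrctprf, i}(\valprf)$) and the nonnegativity of $c_i$ matches up consistently across the two directions. Care must simply be taken to keep the signs of the transformation uniform; given the symmetric form of the map in Parts 1 and 2, this is routine.
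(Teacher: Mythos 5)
Your proposal is correct and follows essentially the same route as the paper's proof: the affine bijection $c_i \mapsto v_i = \langle \ctrctprf, \rwdprf_i\rangle - c_i$, the payment transformation that makes buyer $i$'s quasi-linear utility in the auction coincide pointwise with agent $i$'s utility in the mechanism (hence transferring DSIC and ex-post IR in both directions), the summation showing seller revenue equals intermediary utility, and the one-line CDF shift for Part 3. Your version is marginally more careful than the paper's in stating the utility identity for general fractional allocations rather than conditioning on a deterministic winner, but this is a presentational refinement, not a different argument.
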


\begin{proof}
    Let $\mech_{\ctrctprf} = (\allocrule_{\ctrctprf}, \paymentrule_{\ctrctprf})$ be a mechanism implemented by \zj.
    For any cost profile $\costprf$, let $i^{\star}$ be the winner of $\mech_{\ctrctprf}(\costprf)$ such that her allocation and payment are $\alloc_{\ctrctprf, i^{\star}}(\costprf)$ and $\payment_{\ctrctprf, i^{\star}}(\costprf)$, respectively.
    Recall that the utility of the \zj and winner are given by $\inner{\ctrctprf}{\rwdprf_{i^{\star}}} - \payment_{\ctrctprf, i^{\star}}(\costprf)$ and $\payment_{\ctrctprf, i^{\star}}(\costprf) - \cost_{i^\star}$, respectively.

    Next, consider a single-item auction $\mech_{\ctrctprf}^{\dag} = (\allocrule_{\ctrctprf}^{\dag}, \paymentrule_{\ctrctprf}^{\dag})$ involving $\agentnum$ buyers.
    Let buyer $i$'s private value of the item be $\val_{i} = \inner{\ctrctprf}{\rwdprf_{i}} - \cost_{i}$, which is agent $i$'s \cwelfare under contract $\ctrctprf$ in $\mech_{\ctrctprf}$.
    Let the allocation rule in the auction be $\allocrule_{\ctrctprf}^{\dag}(\valprf) = \allocrule_{\ctrctprf}(\costprf)$, ensuring that both mechanisms share the same winner $i^{\star}$.
    Let the payment from winning buyer $i^{\star}$ to the seller be $\payment_{\ctrctprf, i^{\star}}^{\dag}(\valprf) = \inner{\ctrctprf}{\rwdprf_{i^{\star}}} - \payment_{\ctrctprf, i^{\star}}(\costprf)$.

    Buyer $i$'s utility in the auction is given by $(\inner{\ctrctprf}{\rwdprf_{i}} - \cost_{i}) - (\inner{\ctrctprf}{\rwdprf_{i}} - \payment_{\ctrctprf, i}(\costprf)) = \payment_{\ctrctprf, i}(\costprf) - \cost_{i}$ if agent $i$ is the winner in the mechanism, and $0$ otherwise.
    This utility is identical to agent $i$'s utility in $\mech_{\ctrctprf}$.
    Additionally, since the mechanism $\mech_{\ctrctprf}$ and the auction $\mech_{\ctrctprf}^{\dag}$ share the same winner, the truthfulness of $\mech_{\ctrctprf}$ implies the truthfulness of the auction $\mech_{\ctrctprf}^{\dag}$, and vice versa.
    Then, we can observe that the seller's revenue, given by the payment $\ctrctprf \rwdprf_{i^{\star}} - \payment_{\ctrctprf, i^{\star}}(\costprf)$ from the winner $i^{\star}$, is equal to the utility of the \zj in $\mech_{\ctrctprf}$.

    Finally, we characterize the distributions of the transformed private values $\set{\val_{i}}_{i = 1}^{\agentnum}$, which are equal to the \cwelfare distributions $\set{\valdis_{\ctrctprf, i}}_{i = 1}^{\agentnum}$.
    Recall that the \contribution $\welfare_{i} = \rwdprf_{i} - \cost_{i}$ of each agent $i$ independently follows the distribution $\welfaredis_{i}$.
    Since $\val_{i} = \inner{\ctrctprf}{\rwdprf_{i}} - \cost_{i} = \inner{\ctrctprf - \ctrctone}{\rwdprf_{i}} + \welfare_{i}$, the CDF of the value $\val_{i}$ can be derived as $\valdis_{\ctrctprf, i}(\temp) = \welfaredis_{i}(\inner{\ctrctone - \ctrctprf}{\rwdprf_{i}} \rwdprf_{i}+ \temp)$.
\end{proof}

Note that the \cwelfare $\cwf_{\ctrctprf, i} = \inner{\ctrctprf}{\rwdprf_{i}} - \cost_{i}$ can be negative.
Also, note that the \contribution of agent $i$ is exactly her \cwelfare $\cwf_{\ctrctprf, i}$ by setting $\ctrctprf = \ctrctone$.
We have assumed regularity of the \contribution distribution in \Cref{asp:welfare regular}.
Next, we show that this assumption implies the regularity of the \cwelfare distribution for any $\ctrctprf$.

\begin{lemma}\label[lemma]{lemma:valuedis welfaredis share regular/mhr}
    Given a contract $\ctrctprf \in \nnreals^\otcnum$ and agent $i \in \agents$, the \cwelfare distribution $\valdis_{\ctrctprf, i}$ is regular \textup{(}resp., MHR\textup{)} if and only if the \contribution distribution $\welfaredis_{i}$ is regular \textup{(}resp., MHR\textup{)}.
\end{lemma}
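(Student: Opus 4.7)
The plan is to observe that by part~3 of \Cref{lemma:convert mechanism}, the cwelfare distribution $\valdis_{\ctrctprf,i}$ is merely a translation of the contribution distribution $\welfaredis_i$. Writing $C \deq \inner{\ctrctone - \ctrctprf}{\rwdprf_i}$, which is a constant depending only on the contract and agent $i$'s expected reward profile, we have the distributional identity $\val_i \stackrel{d}{=} \welfare_i - C$, equivalently $\valdis_{\ctrctprf,i}(t) = \welfaredis_i(C + t)$. Since both regularity and the MHR property are well-known to be invariant under translations of a random variable, the conclusion will follow by a direct computation of the hazard rate and virtual value functions.

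Concretely, I would first differentiate the relation $\valdis_{\ctrctprf,i}(t) = \welfaredis_i(C+t)$ to obtain $\valdens_{\ctrctprf,i}(t) = \welfaredens_i(C+t)$ on the relevant support. This immediately gives that the hazard rate of $\valdis_{\ctrctprf,i}$ satisfies
\begin{equation*}
    \hr_{\valdis_{\ctrctprf,i}}(t) \;=\; \frac{\valdens_{\ctrctprf,i}(t)}{1 - \valdis_{\ctrctprf,i}(t)} \;=\; \frac{\welfaredens_i(C+t)}{1 - \welfaredis_i(C+t)} \;=\; \hr_{\welfaredis_i}(C+t).
\end{equation*}
Plugging this into the definition of the virtual value yields
\begin{equation*}
    \virval_{\ctrctprf,i}(t) \;=\; t - \frac{1}{\hr_{\valdis_{\ctrctprf,i}}(t)} \;=\; (C+t) - \frac{1}{\hr_{\welfaredis_i}(C+t)} - C \;=\; \virval_i(C+t) - C.
\end{equation*}

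From these two identities the equivalences are immediate: $\hr_{\valdis_{\ctrctprf,i}}(\cdot)$ is weakly increasing on $\supp{\valdis_{\ctrctprf,i}}$ if and only if $\hr_{\welfaredis_i}(\cdot)$ is weakly increasing on $\supp{\welfaredis_i}$ (establishing the MHR equivalence), and similarly for $\virval_{\ctrctprf,i}(\cdot)$ versus $\virval_i(\cdot)$ (establishing the regularity equivalence), since translating the argument by $C$ and subtracting the constant $C$ from the output both preserve monotonicity. I do not anticipate any real obstacle here—the only mild care needed is to verify that the supports line up correctly under the shift (so that the monotonicity on one support transfers to the other), which is immediate from \Cref{asp:welfare regular} and the definition of $\valdis_{\ctrctprf,i}$.
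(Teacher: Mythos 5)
Your proposal is correct and follows essentially the same route as the paper's proof: both express the hazard rate and virtual value of $\valdis_{\ctrctprf,i}$ in terms of those of $\welfaredis_i$ via the shift identity $\valdis_{\ctrctprf,i}(\temp)=\welfaredis_i(\inner{\ctrctone-\ctrctprf}{\rwdprf_i}+\temp)$ from \Cref{lemma:convert mechanism}, and then note that monotonicity is preserved under a constant translation. Your explicit identity $\virval_{\ctrctprf,i}(\temp)=\virval_i(C+\temp)-C$ is a slightly cleaner way of stating what the paper leaves implicit, but the argument is the same.
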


\begin{proof}
    For ease of presentation, we remove the index $i$ from the notations in the proof.
    Specifically, we rewrite the \contribution distribution $\welfaredis_{i}$ as $\welfaredis$, \cwelfare distribution $\valdis_{\ctrctprf, i}$ as $\valdis_{\ctrctprf}$, and reward $\rwdprf_{i}$ as $\rwdprf$.

    By definition, the virtual value function and the hazard rate function of $\welfaredis$, denoted as $\hr$ and $\virval$ respectively, are given by
    \begin{align*}
        \hr(\temp) = \frac{\welfaredens(\temp)}{1 - \welfaredis(\temp)}, &&
        \virval(\temp) = \temp - \frac{1 - \welfaredis(\temp)}{\welfaredens(\temp)}.
    \end{align*}
    The hazard rate function of the \cwelfare distribution $\valdis_{\ctrctprf}$, denoted as $\hr_{\ctrctprf}$, is
    \begin{align*}
        \hr_{\ctrctprf}(\temp) 
        = \frac{\valdens_{\ctrctprf}(\temp)}{1 - \valdis_{\ctrctprf}(\temp)} 
        = \frac{\welfaredens(\inner{\ctrctone - \ctrctprf}{\rwdprf_{i}} \rwdprf + \temp)}{1 - \welfaredis(\inner{\ctrctone - \ctrctprf}{\rwdprf_{i}} \rwdprf + \temp)},
    \end{align*}
    where the last equality follows from \Cref{lemma:convert mechanism}.
    The virtual value function of the \cwelfare distribution $\valdis_{\ctrctprf}$, denoted as $\virval_{\ctrctprf}$, is
    \begin{align*}
        \virval_{\ctrctprf}(\temp) 
        = \temp - \frac{1}{\hr_{\valdis_{\ctrctprf}}(\temp)} 
        = \temp - \frac{1 - \welfaredis(\inner{\ctrctone - \ctrctprf}{\rwdprf_{i}} \rwdprf + \temp)}{\welfaredens(\inner{\ctrctone - \ctrctprf}{\rwdprf_{i}} \rwdprf + \temp)}.
    \end{align*}
    Since the reward $\rwdprf$ is constant, for a given contract $\ctrctprf$, $\virval_{\ctrctprf}(\temp)$ (resp., $\hr_{\ctrctprf}(\temp)$) shares the same monotonicity as $\virval(\temp)$ (resp., $\hr(\temp)$).
\end{proof}

Now we are ready to prove \Cref{thm:opt mech}.

\begin{proof}[Proof of \Cref{thm:opt mech}]
    By \Cref{lemma:convert mechanism}, we can convert any truthful mechanism $\mech_{\ctrctprf}$ between the \zj and $n$ agents with private cost profile $\costprf$ into a truthful single-item auction $\mech_{\ctrctprf}^{\dag}$ between a seller and $\agentnum$ buyers with private value profile $\cwfprf_{\ctrctprf}$, and vice versa.
    Since the buyer's revenue in the auction equals the utility of the \zj, we only need to find the optimal auction that maximizes the seller's revenue.

    By \Cref{lemma:valuedis welfaredis share regular/mhr}, the regularity property of the \contribution distribution is preserved in the \cwelfare distribution.
    Since agents' costs are independently drawn from the regular distributions $\costdisprf$, for a given contract $\ctrctprf$, the corresponding values are also independently drawn from the regular distributions $\valdisprf_{\ctrctprf}$.
    Therefore, by \Cref{lemma: myerson optimal auction}, the optimal mechanism for the \zj is the virtual welfare maximizer mechanism.
    Then by \Cref{def:vwm}, we finish the proof.
\end{proof} 
\subsection{Proof of Theorem~\ref{thm:opt contract}}
\label{apx:opt contract proof}

In this section, we will prove \Cref{thm:opt contract}.

\thmOptContract*

To prove \Cref{thm:opt contract}, we first analyze the continuity of the monopoly reserve as follows.

\begin{lemma}[Continuity of monopoly reserve]
\label{lem:reserve continuous}
    For any fixed agent $i \in \agents$, consider the monopoly reserve price $\reserve_{\contract, i}$ as a function of the linear contract $\contract$.
    Then, $\reserve_{\contract, i}$ is continuous in w.r.t., $\contract\in[0, 1]$.
\end{lemma}

\begin{proof}
    For ease of presentation, we remove the index $i$ from the notations in the proof.
    Specifically, we rewrite the \contribution distribution $\welfaredis_{i}$ as $\welfaredis$, \cwelfare distribution $\valdis_{\contract, i}$ as $\valdis_{\contract}$, cost distribution $\costdis_{i}$ as $\costdis$, cost $\cost_{i}$ as $\cost$, reserve price $\reserve_{\contract, i}$ as $\reserve_{\contract}$, and reward $\rwdprf_{i}$ as $\rwdprf$.
    
    By the definition of reserve price, we have
    \begin{equation}
    \label{eq_rp_D}
        \reserve_{\contract}
        - \frac{1 - \valdis_{\contract}(\reserve_{\contract})}{\valdens_{\contract}(\reserve_{\contract})}
        = 0~.
    \end{equation}
    Recall that for each agent, its cost $\cost$ follows $\costdis$, and its \cwelfare $\welfare_{\contract} = \contract\rwdprf - \cost$ follows $\valdis_{\contract}$.
    This gives the relation between two distribution functions: $\valdis_{\contract}(\temp) = \costdis(- \temp + \contract\rwdprf)$ for any $\temp \in \supp{\costdis}$.
    
    Define two functions $\costfun(\contract) \deq - \reserve_{\contract} + \contract\rwdprf$ and $\costdisfun(\temp) \deq \temp + \frac{\costdis(\temp)}{\costdens(\temp)}$.
    By replacing $\reserve_{\contract}$ in \cref{eq_rp_D} with $-\costfun(\contract) + \contract\rwdprf$, we get that for any $\contract\in [0, 1]$, it holds
    \begin{align*}
        \costfun(\contract) + \frac{\costdis(\costfun(\contract))}{\costdens(\costfun(\contract))} = \contract\rwdprf,
        \quad&&\quad
        \costdisfun(\costfun(\contract)) = \contract\rwdprf.
    \end{align*} 
    Clearly, $\costdisfun(\costfun(\cdot))$ is continuous in $[0, 1]$.
    By \Cref{asp:welfare regular}, $\costdisfun(\cdot)$ is continuous in $\supp{\costdis}$. 
    Hence, we have $\costfun(\cdot)$ is continuous in $[0, 1]$.
    Since $\reserve_{\contract} = - \costfun(\contract) + \contract\rwdprf$, we finally obtain that $\reserve_{\contract}$ is also continuous in $[0, 1]$.
    This completes our proof.
\end{proof}

Now, we are ready to prove \Cref{thm:opt contract}.
\begin{proof}[Proof of \Cref{thm:opt contract}.]
    Recall that the principal's expected utility is $\inner{\ctrctone - \ctrctprf}{\rwdprf_{i}}$ if agent $i \in \agents$ is the winner of the \zj's mechanism.
    The principal's optimal contract design problem is defined as follows
    \begin{align} \label{eq:opt contract}
        \max\limits_{\ctrctprf \in \nnreals^{\otcnum}} \sum \nolimits_{i \in \agents} \inner{\ctrctone - \ctrctprf}{\rwdprf_{i}} \cdot \prob[\costprf \sim \costdisprf]{i \text{ is the winner of } \mech_{\ctrctprf}^{\star}(\costprf)}.
    \end{align}
    where $\mech_{\ctrctprf}^{\star}$ denotes the optimal mechanism of the \zj given contract $\ctrctprf$.

    By the characterization in \Cref{thm:opt mech}, the \zj's optimal mechanism is the virtual welfare maximizer mechanism with the value equal to the \cwelfare $\cwf_{\contract, i}$.
    According to the allocation rule of the virtual welfare maximizer mechanism, an agent is the winner if and only if her virtual value is non-negative and the highest among all agents.
    Therefore, the probability that the task is auctioned to agent $i$ is $\prob[\cwfprf_{\ctrctprf} \sim \valdisprf_{\ctrctprf}]{\virval_{\ctrctprf, i}(\cwf_{\ctrctprf, i}) \geq \max\nolimits_{j \in \agents}\plus{\virval_{\ctrctprf, j}(\cwf_{\ctrctprf, j})}}$. 
    Consequently, the principal's problem of maximizing her utility in \cref{eq:opt contract} can be rewritten as
    \begin{equation*} 
        \max_{\ctrctprf \in \nnreals^{\otcnum}} \sum_{i \in \agents} \inner{\ctrctone - \ctrctprf}{\rwdprf_{i}} \cdot \prob[\cwfprf_{\ctrctprf} \sim \valdisprf_{\ctrctprf}]{\virval_{\ctrctprf, i}(\cwf_{\ctrctprf, i}) \geq \max\nolimits_{j \in \agents}\plus{\virval_{\ctrctprf, j}(\cwf_{\ctrctprf, j})}}.
        \label{eq:opt U general}
    \end{equation*}
    Thus, we have proved \cref{thm: general} of \Cref{thm:opt contract}.

Next, we consider the case where all agents have the identical expected reward profiles $\rwdprf_{i} \equiv \rwdprf$.
    Let $\ctrctprfopt$ be the optimal contract that achieves the maximum utility for the principal in \cref{eq:opt contract}.
    Define $\ctrctlnropt \deq \frac{\inner{\ctrctprfopt}{\rwdprf}}{\exrwd}$, and we have $\inner{\ctrctprfopt}{\rwdprf} = \ctrctlnropt \exrwd$.
    Clearly, the optimality of $\ctrctprfopt$ ensures that $\inner{\ctrctprfopt}{\rwdprf} \leq \exrwd$, which implies that $\ctrctlnropt\in [0,1]$.
    By \Cref{cor:opt mech contract invariance}, the principal's utility remains unchanged if we replace the contract $\ctrctprfopt$ with the linear contract $\ctrctlnropt \cdot \ctrctone$.

    Consequently, we only need to focus on the linear contract $\ctrctlnr \cdot \ctrctone \in [0, 1]^{\otcnum}$.
    Define $\temp_{\ctrctlnr, i} \deq \inner{\ctrctone - \ctrctlnr \cdot \ctrctone}{\rwdprf} + \reserve_{\ctrctlnr, i} = (1 - \ctrctlnr) \exrwd + \reserve_{\ctrctlnr, i}$ for each $i \in \agents$, and the principal's problem of maximizing her utility in \cref{eq:opt U general} can be rewritten as
    \begin{align*}
        &{}\max\limits_{\ctrctlnr \in [0, 1]} \sum_{i \in \agents} (\inner{\ctrctone - \ctrctlnr \cdot \ctrctone}{\rwdprf}) \cdot \prob[\cwfprf_{\ctrctlnr} \sim \valdisprf_{\ctrctlnr}]{\virval_{\ctrctlnr, i}(\cwf_{\ctrctlnr, i}) \geq \max_{j \in \agents} \plus{\virval_{\ctrctlnr, j}(\cwf_{\ctrctlnr, j})}} 
        \nonumber\\
        \overset{(a)}{=}{}& \max\limits_{\ctrctlnr \in [0, 1]} \sum_{i \in \agents} (1 - \ctrctlnr) \exrwd \cdot \prob[\cwfprf_{\ctrctlnr} \sim \valdisprf_{\ctrctlnr}]{\cwf_{\ctrctlnr, i} \geq \reserve_{\ctrctlnr, i} \wedge \virval_{\ctrctlnr, i}(\cwf_{\ctrctlnr, i}) \geq \max_{j \in \agents}\virval_{\ctrctlnr, j}(\cwf_{\ctrctlnr, j})}
        \nonumber\\
        \overset{(b)}{=}{}& \max\limits_{\ctrctlnr \in [0, 1]} \sum_{i \in \agents} (1 - \ctrctlnr) \exrwd \cdot \prob[\welfareprf \sim \welfaredisprf]{\welfare_{i} \geq (1 - \ctrctlnr) \exrwd + \reserve_{\ctrctlnr, i} \wedge \virval_{i}(\welfare_{i}) \geq \max_{j \in \agents}\virval_{j}(\welfare_{j})} 
        \nonumber\\
        \overset{(c)}{=}{}& \max\limits_{\ctrctlnr \in [0, 1]} \sum_{i \in \agents} \virval_{i}(\temp_{\ctrctlnr, i}) \cdot \prob[\welfareprf \sim \welfaredisprf]{\welfare_{i} \geq \temp_{\ctrctlnr, i} \wedge \virval_{i}(\welfare_{i}) \geq \max_{j \in \agents}\virval_{j}(\welfare_{j})} 
        \nonumber\\
        \overset{(d)}{=}{}& \max\limits_{\ctrctlnr \in [0, 1]} \sum_{i \in \agents} \virval_{i}(\temp_{\ctrctlnr, i}) \cdot \prob[\welfareprf \sim \welfaredisprf]{\welfare_{i} \geq \temp_{\ctrctlnr, i} \wedge \virval_{i}(\welfare_{i}) = \max_{j \in \agents}\virval_{j}(\welfare_{j})}. 
        \label{eq:opt U identical reward}
    \end{align*}
    Equality~(a) holds since the value distribution $\valdis_{\ctrctlnr, i}$ is regular by \Cref{lemma:valuedis welfaredis share regular/mhr} and \Cref{asp:welfare regular}. 
    Equality~(b) holds because $\cwf_{\ctrctlnr, i} = (\ctrctlnr - 1) \exrwd + \welfare_{i}$ and the following fact:
    \begin{align*}
        \virval_{\ctrctlnr, i}(\cwf_{\ctrctlnr, i})
        =
        \cwf_{\ctrctlnr, i} - \frac{1 - \valdis_{\ctrctlnr, i}(\cwf_{\ctrctlnr, i})}{\valdens_{\ctrctlnr, i}(\cwf_{\ctrctlnr, i})}
        =
        (\ctrctlnr - 1) \exrwd + \welfare_{i} - \frac{1 - \welfaredis_{i}(\welfare_{i})}{\welfaredens_{i}(\welfare_{i})}
        = (\ctrctlnr - 1) \exrwd+\virval_{i}(\welfare_{i}),
    \end{align*}
    where the first equality is by the definition of $\virval_{\ctrctlnr, i}(\cwf_{\ctrctlnr, i})$, 
    and the second equality holds since $\welfaredis_{\ctrctlnr, i}(\cwf_{\ctrctlnr, i}) = \welfaredis_{i}(\welfare_{i})$
    and the last equality is by the definition of $\virval_{i}(\welfare_{i})$.
    Hence, for any $i, j \in \agents$, we have $\virval_{\ctrctlnr, i}(\cwf_{\ctrctlnr, i}) \geq \virval_{\ctrctlnr, j}(\cwf_{\ctrctlnr, j})$ if and only if $\virval_{i}(\welfare_{i}) \geq \virval_{j}(\welfare_{j})$.
    Equality~(c) holds as follows
    \begin{align*}
        \virval_{i}(\temp_{\ctrctlnr, i})
        = \temp_{\ctrctlnr, i} - \frac{1 - \welfaredis_{i}(\temp_{\ctrctlnr, i})}{\welfaredens_{i}(\temp_{\ctrctlnr, i})}
        = (1 - \ctrctlnr) \exrwd + \reserve_{\ctrctlnr, i} - \frac{1 - \valdis_{\ctrctlnr, i}(\reserve_{\ctrctlnr, i})}{\valdens_{\ctrctlnr, i}(\reserve_{\ctrctlnr, i})}
        = (1 - \ctrctlnr) \exrwd,
    \end{align*}
    where the first equality holds by definition, the second equality holds because $\temp_{\ctrctlnr, i} = \ctrctlnropt \exrwd + \reserve_{\ctrctlnr, i}$, $\valdis_{\ctrctlnr, i}(\reserve_{\ctrctlnr, i}) = \welfaredis_{i}(\temp_{\ctrctlnr, i})$ and $\valdens_{\ctrctlnr, i}(\reserve_{\ctrctlnr, i}) = \welfaredens_{i}(\temp_{\ctrctlnr, i})$, and the last equality is by the the fact that the reserve price $\reserve_{\ctrctlnr, i}$ is the zero point of virtual value function $\virval_{\ctrctlnr, i}(\cdot)$. 
    Equality~(d) holds since by \Cref{asp:welfare regular}, agent $i$'s \contribution distribution $\welfaredis_{i}$ is regular.
    Hence, we have \cref{thm: idential rwd} of \Cref{thm:opt contract}. 

At last, we proceed to prove \cref{thm: identical rwd and dist} of \Cref{thm:opt contract}. In this case, all agents have the identical expected reward profiles $\rwdprf_{i} \equiv \rwdprf$ and the identical \contribution distributions $\welfaredis_{i} \equiv \welfaredis$. 
    Note that identical expected reward profiles and \contribution distributions imply that all agents have the identical contracted \contribution distribution $\welfaredis_{\ctrctlnr,i}\equiv \welfaredis_{\ctrctlnr}$.
    Thus, for each agent $i \in \agents$, we have $\temp_{\ctrctlnr, i} = \temp_{\ctrctlnr} \deq \ctrctlnropt \exrwd + \reserve_{\ctrctlnr}$, where $\reserve_{\ctrctlnr}$ is the monopoly reserve price of the contracted \contribution distribution $\welfaredis_{\ctrctlnr}$. 
    In this case, the principal's problem of maximizing her utility can be rewritten as
    \begin{align*}
        {}&\max\limits_{\ctrctlnr\in[0, 1]} 
        \sum_{i \in \agents}
        \virval(\temp_{\ctrctlnr, i}) \cdot
        \prob[\welfareprf \sim \welfaredis^{\otimes \agentnum}]{\welfare_{i} \geq \temp_{\ctrctlnr, i} \wedge \virval(\welfare_{i}) = \max_{j \in \agents}\virval(\welfare_{j})} \\
        ={}&
        \max\limits_{\ctrctlnr\in[0, 1]} \virval(\temp_{\ctrctlnr}) \cdot
        \sum_{i \in \agents}
        \prob[\welfareprf \sim \welfaredis^{\otimes \agentnum}]{\welfare_{i} \geq \temp_{\ctrctlnr} \wedge \virval(\welfare_{i}) = \max_{j \in \agents}\virval(\welfare_{j})} \\
        \overset{(a)}{=}{}&
        \max\limits_{\ctrctlnr\in[0, 1]} \virval(\temp_{\ctrctlnr}) \cdot
        \sum_{i \in \agents}
        \prob[\welfareprf \sim \welfaredis^{\otimes \agentnum}]{\welfare_{i} \geq \max_{j \in \agents}\left\{\temp_{\ctrctlnr},
        \welfare_{j}\right\}} \\
        \overset{(b)}{=}{}&
        \max\limits_{\ctrctlnr \in[0, 1]} \virval(\temp_{\ctrctlnr})(1 - \welfaredis^{\agentnum}(\temp_{\ctrctlnr})) \\
        \overset{(c)}{=}{}&
        \max\limits_{\temp \in \supp{\welfaredis}} \virval(\temp) (1 - \welfaredis^{\agentnum}(\temp)),
    \end{align*}
    Equality~(a) holds according to the monotonicity of the virtual value function of the \contribution distribution.
    Equality~(b) holds by the following fact.
    \begin{align*}
        \sum_{i \in \agents}
        \prob[\welfareprf \sim \welfaredis^{\otimes \agentnum}]{\welfare_{i} \geq \max_{j \in \agents} \left\{\temp_{\ctrctlnr},
        \welfare_{j}\right\}}
        {}&=
        \sum_{i \in \agents}
        \prob[\welfareprf \sim \welfaredis^{\otimes \agentnum}]{\welfare_{i}\ge \max_{j\in \agents}\welfare_j \wedge \welfare_{i}\geq \temp_{\ctrctlnr}}\\
        {}&=
        \sum_{i \in \agents}
        \prob[\welfareprf \sim \welfaredis^{\otimes \agentnum}]{\welfaremax = \welfare_{i} \wedge \welfaremax\geq \temp_{\ctrctlnr}}\\
        {}&= 
        \prob[\welfareprf \sim \welfaredis^{\otimes \agentnum}]{\welfaremax \geq \temp_{\ctrctlnr}}\\
        {}&= 
        1 - \welfaredis^{\agentnum}(\temp_{\ctrctlnr}),
    \end{align*}
    where the first equality holds by the definition of $\max$ operator, the second equality holds by the definition of $\welfaremax$, and the third equality holds by the law of total probability.
    
    Finally, we prove equality~(c).
    By \Cref{lem:reserve continuous}, the function $\reserve_{\ctrctlnr}$ (w.r.t.\ $\ctrctlnr$) is continuous in $[0, 1]$.
    This implies the function $\temp_{\ctrctlnr}$ (w.r.t.\ $\ctrctlnr$) is also continuous in $[0, 1]$.
    Notice that $\temp_{0} = \reserve$ and $\temp_{1} = \exrwd$, where $\reserve$ is the monopoly reserve price of \contribution distribution $\welfaredis$.
    We have $[\reserve, \exrwd] \subseteq \{ \temp_{\ctrctlnr} : \ctrctlnr \in [0, 1]\} \subseteq \supp{\welfaredis}$.
    Hence, equality~(c) holds with ``$\leq$''.
    On the flip hand, since the \contribution distribution $\welfaredis$ is regular, by the definition of reserve price $\reserve$, we have $\virval(\temp_{\ctrctlnr})\le 0$ if $\temp_{\ctrctlnr} \le \reserve$.
    Thus, $\virval(\temp_{\ctrctlnr})(1 - \welfaredis^{\agentnum}(\temp_{\ctrctlnr}))$ only takes non-negative values if $\temp_{\ctrctlnr} \in [\reserve, \exrwd]$, leading that equality~(c) holds with ``$\geq$''.
    Consequently, we have proved equality~(c).
\end{proof}

\subsection{Proof of Theorem~\ref{thm:regular bounds}}
\label{apx:regular bounds proof}

In this section, we will prove \Cref{thm:regular bounds}.

\regularbounds*

\begin{example}\label{example:regular bounds}     
    We consider a single agent, i.e., $\agentnum = 1$.
    Let $K$ be a sufficiently large integer. 
    We construct a \contribution distribution $\welfaredis$ for the agent with its CDF given by
    \begin{align*}
        \welfaredis(\temp)=\left\{
            \begin{aligned}
                &0,&&\temp \in (-\infty, 1); \\
&1 - \frac{1}{\temp},&& \temp \in [1, \temp_{K});\\
                &1 - \frac{k-2+2^{1 - k}}{K(\temp-2^{K - k + 1})},&&\temp \in [\temp_{k}, \temp_{k - 1}); \\
                &1,&&\temp \in [\temp_0, +\infty),
            \end{aligned}
        \right.
    \end{align*}
    where $\temp_{0} = \frac{2^{K}}{\ln 2}$ and $\temp_{k} = \frac{k 2^{K}}{2^{k} - 1}$ for $k \in [K]$.
    It can be verified that the above distribution is regular.
\end{example}

\begin{lemma}[\citealp{FJ-24}, Theorems 1 and 2]\label{lemma:regular mhr order statistic}
    For any i.i.d.\ random variables $\val_{i}~(i \in \agents)$, the following statements hold.
    \begin{enumerate}
        \item If the variables $\val_{i}$ satisfy a regular distribution $\welfaredis$, the first-order statistic $\valmax$ also satisfies a regular distribution $\welfaredis^{\agentnum}$;
        \item If the variables $\val_{i}$ satisfy an MHR distribution $\welfaredis$, the first-order statistic $\valmax$ also satisfies an MHR distribution $\welfaredis^{\agentnum}$.
    \end{enumerate}
\end{lemma}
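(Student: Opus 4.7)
The plan is to prove both parts by direct computation, exploiting the identity $G(v) = \welfaredis(v)^{\agentnum}$ where $G$ denotes the CDF of the first-order statistic $\valmax$. For brevity throughout the proof sketch, I write $F = \welfaredis$, $n = \agentnum$, $f = F'$, and $G = F^{n}$.

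For part~(ii) (MHR), I would show that the hazard rate $h_{G}(v) = nF(v)^{n-1}f(v)/(1 - F(v)^{n})$ is non-decreasing. Differentiating and clearing positive denominators reduces the claim to $G''(v)(1 - G(v)) + G'(v)^{2} \geq 0$. Expanding with $G = F^{n}$ yields an expression polynomial in $F$, $f$, $f'$; substituting the MHR condition of $F$ (equivalently, $f'(1 - F) + f^{2} \geq 0$) to lower-bound the $f'$-term and collecting terms, the inequality simplifies (after factoring out $f^{2}$) to $\sum_{k=1}^{n-1}(1 - F^{k}) \geq 0$, which is manifest for $F \in [0,1]$.

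For part~(i) (regular), I would show that the virtual value $\phi_{G}(v) = v - (1 - F^{n})/(nF^{n-1}f)$ is non-decreasing. After computing $\phi_{G}'(v)$ and invoking the consequence $f'/f^{2} \geq -2/(1-F)$ of $F$'s regularity (which follows from $\phi_{F}'(v) = 2 + (1-F)f'/f^{2} \geq 0$), the non-negativity of $\phi_{G}'$ reduces to the polynomial inequality $(n-1)(1 + x^{n}) \geq 2(x + x^{2} + \cdots + x^{n-1})$ for all $x = F(v) \in [0,1]$. I would establish this via the factorization
\begin{equation*}
    (n-1)(1 + x^{n}) - 2(x + x^{2} + \cdots + x^{n-1}) = \sum_{k=1}^{n-1}(1 - x^{k})(1 - x^{n-k}),
\end{equation*}
whose right-hand side is a sum of non-negative terms for $x \in [0,1]$.

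The main obstacle is finding the right factorization in part~(i): pairing the monomials on both sides so as to expose a sum of products of the form $(1-x^{k})(1-x^{n-k})$ is not immediate, but once recognized the rest is routine algebra. The MHR case in part~(ii) proceeds along similar lines with a simpler endgame, and both arguments can be extended to the boundary values $F(v) \in \{0,1\}$ by continuity and sign inspection.
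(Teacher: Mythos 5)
Your proposal is correct, and it is worth noting that the paper itself offers no proof of this lemma at all: it is imported verbatim as Theorems 1 and 2 of \citet{FJ-24}, so you have supplied a self-contained elementary argument where the authors rely on a citation. I verified your key reductions. For the MHR case, writing $G=F^{n}$, the condition $h_G'\ge 0$ is indeed equivalent to $G''(1-G)+(G')^{2}\ge 0$; substituting $f'\ge -f^{2}/(1-F)$ and dividing by $nF^{n-2}f^{2}$ leaves exactly $(n-1)(1-F^{n})-\sum_{k=1}^{n}F^{k}+nF^{n}=\sum_{k=1}^{n-1}(1-F^{k})$, as you claim. For regularity, $\phi_G'(v)=2+\tfrac{(n-1)(1-F^{n})}{nF^{n}}+\tfrac{(1-F^{n})f'}{nF^{n-1}f^{2}}$, and after applying $f'/f^{2}\ge -2/(1-F)$ and multiplying by $nF^{n}$ one obtains $(n-1)(1+F^{n})-2\sum_{k=1}^{n-1}F^{k}$, which your factorization $\sum_{k=1}^{n-1}(1-x^{k})(1-x^{n-k})$ reproduces exactly (the two sums $\sum_k x^k$ and $\sum_k x^{n-k}$ coincide after reindexing). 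The only loose ends are the degenerate points you already flag: the divisions by $F^{n-1}$ and $f$ require $F(v)>0$ and $f(v)>0$ in the interior, and the paper's \Cref{asp:welfare regular} permits a point mass at the right endpoint, so a one-line continuity/limiting argument at the endpoints of the support should accompany the computation. With that caveat handled, your proof stands on its own and could replace the external citation.
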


\begin{lemma}\label{lemma:regular reserve increase}
    Let $\virval^{(\agentnum)}(\temp)$ and $\reserve^{(\agentnum)}$ be the virtual value function and the monopoly reserve price of the distribution $\welfaredis^{\agentnum}$, respectively.
    For any i.i.d.\ random variable $\val_{i}~(i \in \agents)$, if the first order statistic $\valmax$ satisfies a regular distribution $\welfaredis^{\agentnum}$, the virtual value function is weakly decreasing and the monopoly reserve price is weakly increasing with respect to $\agentnum$, i.e., for every $\temp \in \supp{\welfaredis}$, it holds
    \begin{align*}
        \virval^{(\agentnum)}(\temp) \geq \virval^{(\agentnum + 1)}(\temp),
        \quad&&\quad
        \reserve^{(\agentnum + 1)} \geq \reserve^{(\agentnum)}.
    \end{align*}
\end{lemma}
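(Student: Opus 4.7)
The plan is to derive an explicit expression for $\virval^{(\agentnum)}$ as a function of $\welfaredis(\temp)$ and reduce the desired inequality to a single-variable polynomial inequality in $\welfaredis(\temp) \in [0,1]$, then deduce the reserve monotonicity directly from the virtual-value monotonicity together with the tie-breaking convention.

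First, I would compute the virtual value of the first-order-statistic distribution $\welfaredis^{\agentnum}$. Its CDF is $\welfaredis(\temp)^{\agentnum}$ and its PDF is $\agentnum\,\welfaredis(\temp)^{\agentnum-1}\welfaredens(\temp)$, so
\begin{equation*}
    \virval^{(\agentnum)}(\temp) \;=\; \temp \;-\; \frac{1-\welfaredis(\temp)^{\agentnum}}{\agentnum\,\welfaredis(\temp)^{\agentnum-1}\welfaredens(\temp)}.
\end{equation*}
Fix $\temp\in\supp{\welfaredis}$ with $\welfaredis(\temp)<1$ (at points where $\welfaredis(\temp)=1$ the monotonicity is trivial), and abbreviate $u\deq \welfaredis(\temp)\in(0,1]$. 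Then $\virval^{(\agentnum)}(\temp)\ge \virval^{(\agentnum+1)}(\temp)$ is equivalent to
\begin{equation*}
    \frac{1-u^{\agentnum+1}}{(\agentnum+1)\,u^{\agentnum}} \;\ge\; \frac{1-u^{\agentnum}}{\agentnum\,u^{\agentnum-1}},
\end{equation*}
which after clearing denominators simplifies to the one-variable inequality
\begin{equation*}
    g(u) \;\deq\; u^{\agentnum+1} - (\agentnum+1)\,u + \agentnum \;\ge\; 0, \qquad u\in[0,1].
\end{equation*}
I would verify this via derivatives: $g(1)=0$ and $g'(u)=(\agentnum+1)(u^{\agentnum}-1)\le 0$ on $[0,1]$, so $g$ is non-increasing and thus $g(u)\ge g(1)=0$ on $[0,1]$. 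This yields the first inequality of the lemma.

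For the reserve monotonicity, I would use the tie-breaking convention from the preliminaries, namely $\reserve^{(\agentnum)}=\max\set{\temp\ge 0 \suchthat \virval^{(\agentnum)}(\temp)\le 0}$. The inequality just proved gives the set containment $\set{\temp\suchthat \virval^{(\agentnum)}(\temp)\le 0}\subseteq \set{\temp\suchthat \virval^{(\agentnum+1)}(\temp)\le 0}$, since $\virval^{(\agentnum+1)}(\temp)\le \virval^{(\agentnum)}(\temp)\le 0$ whenever $\virval^{(\agentnum)}(\temp)\le 0$. Taking the maximum on both sides yields $\reserve^{(\agentnum+1)}\ge \reserve^{(\agentnum)}$, which completes the proof.

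The only non-routine step is establishing the polynomial inequality $g(u)\ge 0$, but this follows immediately from the simple derivative argument above; it is also a standard Bernoulli-type inequality. The rest is essentially bookkeeping, and the reserve statement is an almost free corollary once virtual-value monotonicity and the largest-root tie-breaking convention are in hand.
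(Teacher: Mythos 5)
Your proof is correct, and the key step is handled by a genuinely different (and arguably cleaner) argument than the paper's. The paper treats $\agentnum$ as a continuous variable and differentiates $1/\hr^{(\agentnum)}$ with respect to $\agentnum$, which requires rewriting the numerator $\welfaredis^{\agentnum}-\agentnum\ln\welfaredis-1$ as the integral $\int_0^{\agentnum}(\welfaredis^{\agentnum}-1)\ln\welfaredis\,\dd\agentnum$ to see its sign; this buys monotonicity in real $\agentnum$, which is more than the lemma needs. You instead compare consecutive integers directly and reduce the claim to the elementary polynomial inequality $g(u)=u^{\agentnum+1}-(\agentnum+1)u+\agentnum\ge 0$ on $[0,1]$, verified by $g(1)=0$ and $g'\le 0$; I checked the algebra (clearing denominators indeed yields exactly $\agentnum+u^{\agentnum+1}-(\agentnum+1)u\ge 0$) and it is right. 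For the reserve-price direction, your set-containment argument via the largest-root tie-breaking convention is the same idea as the paper's, but stated more carefully — the paper's displayed chain of (in)equalities there appears to contain a typo, whereas yours is unambiguous. The only loose ends in your write-up are the degenerate boundary cases ($\welfaredens(\temp)=0$, or $\welfaredis(\temp)\in\{0,1\}$, where the hazard rate is zero or the density in the denominator vanishes); the paper glosses over these as well, and the inequality holds trivially there under the usual $-\infty$ convention, so this is not a substantive gap.
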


\begin{proof}
    We first observe that the virtual value function is weakly decreasing, which implies that the monopoly reserve price is weakly increasing.
    Indeed, if $\virval^{(\agentnum)}(\temp) \geq \virval^{(\agentnum + 1)}(\temp)$ holds for any $\temp \in \supp{\welfaredis}$, then by the definition of the reserve price, we have
    \begin{equation*}
        \virval^{(\agentnum + 1)}\parent{\reserve^{(\agentnum + 1)}} 
        = 0 
        = \virval^{(\agentnum)}\parent{\reserve^{(\agentnum)}} 
        \geq \virval^{(\agentnum + 1)}\parent{\reserve^{(\agentnum + 1)}}.
    \end{equation*}
    Given that $\welfaredis^{\agentnum}$ is regular, we also notice that $\virval^{(\agentnum)}(\temp)$ is non-decreasing with respect to $\temp$.
    This implies that
    \begin{equation*}
         \reserve^{(\agentnum + 1)} \geq \reserve^{(\agentnum)}.
    \end{equation*}

    Now, we show that the virtual value function is weakly decreasing.
    In this proof, we may abbreviate $\welfaredis(\temp)$ as $\welfaredis$ for convenience.
    Besides, we denote the hazard rate function of $\welfaredis^{\agentnum}$ as $\hr^{(\agentnum)}$.
    
    Recall that the virtual value function $\virval(\temp) = \temp - \frac{1}{\hr(\temp)}$.
    Then, we have
    \begin{align*}
        \virval^{(\agentnum)}(\temp)
        = \temp - \frac{1}{\hr^{(\agentnum)}}=\temp - \frac{1 - \welfaredis^{\agentnum}}{\agentnum\welfaredis^{\agentnum - 1} \welfaredens}.
    \end{align*}
    
    We consider $\frac{1}{\hr^{(\agentnum})}$ as a function of $\agentnum$. 
    By differentiating, we have
    \begin{align*}
        \DDX{\agentnum}{\parent{\frac{1}{\hr^{(\agentnum)}}}}
        = \frac{-\agentnum \welfaredis^{\agentnum - 1} \ln \welfaredis - (1 - \welfaredis^{\agentnum})(1 + \agentnum \ln \welfaredis)}{\agentnum^{2} \welfaredis^{\agentnum - 1} \welfaredens}
        = \frac{\welfaredis^{\agentnum} - \agentnum \ln\welfaredis - 1}{\agentnum^{2}\welfaredis^{\agentnum - 1} \welfaredens}
        = \frac{\int_{0}^{\agentnum} (\welfaredis^{\agentnum} - 1) \ln\welfaredis \cdot \dd \agentnum}{\agentnum^{2} \welfaredis^{\agentnum - 1} \welfaredens}
    \end{align*}
    Since $\welfaredis$ is a CDF, we know $\welfaredens \geq 0$, $\welfaredis \leq 1$ and $\ln\welfaredis \leq 0$, which indicates that
    \begin{align*}
        \DDX{\agentnum}{\parent{\frac{1}{\hr^{(\agentnum)}}}}
        = \frac{\int_{0}^{\agentnum} (\welfaredis^{\agentnum} - 1) \ln\welfaredis \cdot \dd \agentnum}{\agentnum^{2} \welfaredis^{\agentnum - 1} \welfaredens}
        \geq 0.
    \end{align*}
    It follows that $\virval^{(\agentnum)}(\temp)$ is weakly decreasing with respect to $\agentnum$, which means that
    \begin{align*}
        \virval^{(\agentnum)}(\temp) \geq \virval^{(\agentnum + 1)}(\temp).
    \end{align*}
    This completes our proof.
\end{proof}

\begin{lemma}[\citealp{Yan-11}, Theorem 4.1] \label{lemma:regular OPA vs AP}
    To sell a single item among multiple buyers with i.i.d.\ and regular distributions, the revenue of anonymous pricing is an $\ratiooa$-approximation to that of the Bayesian revenue-optimal mechanism, where $\ratiooa = \frac{\ee}{\ee - 1}$.
\end{lemma}

\begin{definition}[The truncated equal-revenue distribution] \label{def:teqr distribution}
    Let $\revenuecurve^{*}$ be the monopoly revenue of the \contribution distribution $\welfaredis$ with the support $\supp{\welfaredis} = [\lowsupp, \upsupp]$.
    The truncated equal-revenue distribution \textup{(}w.r.t.\ $\welfaredis$\textup{)} is defined as the distribution $\welfaredis^{\dag}$ with support $\parents{\revenuecurve^{*}, \upsupp}$ and its CDF given by
    \begin{align*}
        \welfaredis^{\dag}(\temp) = \left\{
            \begin{aligned}
                &0, &&\temp \in \left(-\infty, \revenuecurve^{*}\right]; \\
                &1 - \frac{\revenuecurve^{*}}{\temp}, &&\temp \in \left(\revenuecurve^{*}, \upsupp\right); \\
                &1, &&\temp \in \left[\upsupp, +\infty\right).
            \end{aligned}
        \right.
    \end{align*}
\end{definition}

One can verify that the monopoly revenue of the truncated equal-revenue distribution $\welfaredis^{\dag}$ is also $\revenuecurve^{*}$, and $\welfaredis^{\dag}$ is also regular since the revenue curve of $\welfaredis^{\dag}$ is concave. 
Moreover, we notice that the CDF is discontinuous at $\upsupp$, which corresponds to a point-mass of $\revenuecurve^{*}/\upsupp$ at value $\upsupp$.

We next show that this truncated equal-revenue distribution gives a worst-case instance for the ratio between the revenue of anonymous pricing and the first-best benchmark.

\begin{lemma}\label{lemma:regular fb worst case}
    Among regular distributions with support $\parents{\lowsupp, \upsupp}$ $\parent{0 \leq \lowsupp \leq \upsupp \leq \rewardfun}$ and monopoly revenue $\revenuecurve^{*} \in [\lowsupp, \upsupp]$, the truncated equal-revenue distribution with monopoly revenue equal to $\revenuecurve^{*}$ achieves the largest first-best benchmark for a single agent.
\end{lemma}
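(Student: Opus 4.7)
The plan is to establish pointwise stochastic dominance of the truncated equal-revenue distribution $\welfaredis^{\dag}$ over any regular distribution in the admissible class, and then pass to expectations via the tail formula. Since the contribution $\welfare$ is non-negative (as $\lowsupp \geq 0$), the first-best benchmark for a single agent is simply
\[
\utipfb \;=\; \expect[\welfare \sim \welfaredis]{\welfare} \;=\; \int_{0}^{\upsupp}\parent{1 - \welfaredis(\temp)}\,\dd\temp,
\]
so it suffices to show $1 - \welfaredis(\temp) \leq 1 - \welfaredis^{\dag}(\temp)$ for every $\temp \in [0, \upsupp]$.

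First, I would derive a uniform envelope on the upper tail of any admissible $\welfaredis$. By the definition of monopoly revenue, $\temp\parent{1 - \welfaredis(\temp)} \leq \revenuecurve^{*}$ for all $\temp > 0$ in the support, and the bound extends trivially to $\temp \notin \supp{\welfaredis}$ since there $1 - \welfaredis(\temp)$ is either $0$ or $1$ (and for $\temp < \lowsupp$ we have $\temp < \lowsupp \leq \revenuecurve^{*}$). Combined with the trivial inequality $1 - \welfaredis(\temp) \leq 1$, this yields
\[
1 - \welfaredis(\temp) \;\leq\; \min\set{1,\ \revenuecurve^{*}/\temp} \qquad \text{for all } \temp \in (0, \upsupp],
\]
and $1 - \welfaredis(\temp) = 0$ for $\temp > \upsupp$.

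Second, I would verify directly from the CDF in \Cref{def:teqr distribution} that the truncated equal-revenue distribution saturates this envelope: $1 - \welfaredis^{\dag}(\temp) = 1$ on $[0,\revenuecurve^{*}]$, $1 - \welfaredis^{\dag}(\temp) = \revenuecurve^{*}/\temp$ on $(\revenuecurve^{*}, \upsupp)$, and $1 - \welfaredis^{\dag}(\temp) = 0$ for $\temp \geq \upsupp$ (with the point mass $\revenuecurve^{*}/\upsupp$ at $\upsupp$ accounting for the jump). Hence $1 - \welfaredis(\temp) \leq 1 - \welfaredis^{\dag}(\temp)$ pointwise, and integrating over $[0,\upsupp]$ gives $\expect[\welfaredis]{\welfare} \leq \expect[\welfaredis^{\dag}]{\welfare}$, which is precisely the claim.

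The argument is short and the only step requiring care is a bookkeeping one: strictly speaking $\welfaredis^{\dag}$ has support $[\revenuecurve^{*}, \upsupp] \subseteq [\lowsupp, \upsupp]$ rather than exactly $[\lowsupp, \upsupp]$, so if one insists on equality of supports, the TER distribution is realized as a limit of admissible regular distributions obtained by placing vanishing mass on $[\lowsupp, \revenuecurve^{*})$; the envelope in step one is unaffected and the supremum is attained (in the closure) by $\welfaredis^{\dag}$. Notably, regularity is only used to ensure that $\welfaredis^{\dag}$ itself belongs to the comparison class (which is immediate from the concavity of its revenue curve, as noted after \Cref{def:teqr distribution})---the dominating inequality itself follows purely from the monopoly-revenue constraint.
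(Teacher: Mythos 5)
Your proposal is correct and follows essentially the same route as the paper's own proof: both arguments reduce to showing that the truncated equal-revenue distribution stochastically dominates every admissible distribution pointwise and then integrate the tail $1-\welfaredis(\temp)$. The only cosmetic difference is that you derive the dominance directly in value space from the envelope $1-\welfaredis(\temp)\le\min\set{1,\revenuecurve^{*}/\temp}$, whereas the paper writes the same two inequalities in quantile space as $\revenuecurve(\quantile)\le\min\set{\upsupp\quantile,\revenuecurve^{*}}=\revenuecurve^{\dag}(\quantile)$ and passes back to the CDFs via concavity of the revenue curves; your observation that regularity of $\welfaredis$ is not actually needed for the dominating inequality is accurate.
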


\begin{proof}
    Let $\welfaredis$ be an arbitrary regular distribution with support $\parents{\lowsupp, \upsupp}$ and monopoly revenue $\revenuecurve^{*} \in [\lowsupp, \upsupp]$.
    Observe that the revenue curve of the truncated equal-revenue distribution $\welfaredis^{\dag}$ is $\revenuecurve^{\dag}(q) = \min \set{\upsupp q, \revenuecurve^{*}}$.
    Consider the revenue curve of distribution $\welfaredis$, given by $\revenuecurve(\quantile) = \temp \quantile$, where $\quantile = 1 - \welfaredis(\temp)$.
    On the one hand, we know $R(q) \leq \upsupp q$ since $\temp \in [\lowsupp, \upsupp]$; on the other hand, we have $\revenuecurve(q) \leq \revenuecurve^{*}$ by the definition of $\revenuecurve^{*}$.
    It follows that the revenue curve of distribution $\welfaredis$ is always positioned below that of the truncated equal-revenue distribution $\welfaredis^{\dag}$, i.e., $\revenuecurve(\quantile) \leq \revenuecurve^{\dag}(\quantile)$ for all $q \in [0, 1]$.
Consequently, since both $\revenuecurve(\cdot)$ and $\revenuecurve^{\dag}(\cdot)$ are concave, for any $\temp \in [\lowsupp, \upsupp]$, we have
    \begin{align*}
        1 - \welfaredis(\temp) \leq 1 - \welfaredis^{\dag}(\temp).
    \end{align*}
    
    Recall that the first-best benchmark for a single agent with \contribution distribution $\welfaredis$ is given by
    \begin{align*}
        \utipfb=\int_{0}^{+\infty} (1 - \welfaredis(\temp)) \cdot \dd \temp 
        \leq \int_{0}^{+\infty} (1 - \welfaredis^{\dag}(\temp)) \cdot \dd \temp 
        = \revenuecurve^{*} \cdot \parent{1 + \ln \frac{\upsupp}{\revenuecurve^{*}}}.
    \end{align*}
    Thus, we conclude that the largest first-best benchmark for one agent is achieved by the truncated equal-revenue distribution.
\end{proof}

Now, we are ready to prove \Cref{thm:regular bounds}.

\begin{proof}[Proof of \Cref{thm:regular bounds}.]
    We first show the upper bound parts utilizing the above lemmas, and then show the lower bound part by analyzing \Cref{example:regular bounds}.

    \xhdr{Upper bounding \podm.}
    We first observe that $\welfaredis^{\agentnum}$ is also regular by \Cref{lemma:regular mhr order statistic}.
    Thus, we have $\virval^{(\agentnum)}(\temp)\geq\virval^{(\agentnum + 1)}(\temp)$ for any $\temp$ in $\supp{\welfaredis}$ by \Cref{lemma:regular reserve increase}.
    Then we can obtain that
    \begin{align}\label{equation: U* >= phi^(n)(z)pr}
        \utip^{*}
        = \max\limits_{\temp\in\supp{\welfaredis}} \virval(\temp)\parent{1 - \welfaredis^{\agentnum}(\temp)}
        \geq \max\limits_{\temp\in\supp{\welfaredis}} \virval^{(\agentnum)}\parent{\temp)(1 - \welfaredis^{\agentnum}(\temp)}.
    \end{align}

    Next, let quantile $\quantile = 1 - \welfaredis^{\agentnum}(\temp)$ and $\revenuecurve$ be the revenue curve of $\welfaredis^{\agentnum}$.
    Since the virtual value of $\welfaredis^{\agentnum}$ is equal to the slope of the revenue curve $\revenuecurve(\quantile)$ at quantile $\quantile$.
    We can further obtain that
    \begin{align} \label{equation: U* >= qR'(q)}
        \utip^{*}
        \geq \max\limits_{\quantile\in[0,1]} \quantile \cdot \revenuecurve'(\quantile).
    \end{align}

    Now, we provide two arguments for the cases where $4 \leq \upsupp/\lowsupp \leq \ratiolu$ and $1 \leq \upsupp/\lowsupp \leq 4$ separately.
    
    We first consider the case where $4 \leq \upsupp/\lowsupp \leq \ratiolu$.
    
    Let $K = 2\bceil{\log \ratiolu}$.
    Divide the quantile space $[0, 1]$ of $\welfaredis^{\agentnum}$ into $K + 1$ pieces by introducing the following points: $1 > \quantile_{0} > \quantile_{1} > \cdots > \quantile_{K} = 0$.
    We set the slope of the revenue curve (the virtual value) $\revenuecurve'(\quantile_{0}) = 0$ and $\revenuecurve'(\quantile_{k}) = \lowsupp \cdot 2^{k - 2\bceil{\log \ratiolu} + \log \ratiolu}$ for every index $k \in [K]$.
    Note that $\revenuecurve'(\quantile_{K}) = \upsupp$.
    One can verify that this construction holds for any regular distribution $\welfaredis^{\agentnum}$ since the virtual value function is non-decreasing.
    Consequently, for every $\quantile \in [\quantile_{k}, \quantile_{k - 1}]$ with $k \in [2 : K]$, it holds
    \begin{align}\label{equation:R' in k k - 1}
        \revenuecurve'(\quantile_{k - 1}) 
        \leq \revenuecurve'(\quantile)
        \leq \revenuecurve'(\quantile_{k}) 
        = 2\revenuecurve'(\quantile_{k - 1}),
    \end{align}
    and for every $\quantile\in[\quantile_{1}, \quantile_{0}]$, it holds
    \begin{align}\label{equation:R' in 1 0}
        0 = \revenuecurve'(\quantile_{0}) 
        \leq \revenuecurve'(\quantile) 
        \leq \revenuecurve'(\quantile_{1}) 
        = 2 \lowsupp \cdot 2^{\log \ratiolu - 2\bceil{\log \ratiolu}} 
        \leq \frac{2 \lowsupp}{\ratiolu}.
    \end{align}
    Since the support of $\welfaredis$ is $[\lowsupp, \upsupp]$, we know $\revenuecurve(1) \geq \lowsupp$.
    Thus, we know that monopoly revenue $\revenuecurve^{*} = \revenuecurve(\quantile_{0}) \geq \revenuecurve(1) \geq \lowsupp$.
    Moreover, since $\revenuecurve(\quantile_{0}) = 0$, we have
    \begin{align*}
        \revenuecurve^{*}
        &={} \int_{0}^{\quantile_{0}}\revenuecurve'(\quantile) \cdot \dd \quantile \\
        &={} \sum_{k \in [K]}\int_{\quantile_{k}}^{\quantile_{k - 1}}\revenuecurve'(\quantile) \cdot \dd \quantile \\
        &\overset{(a)}{\leq}{} \sum_{k \in [2 : K]}(\quantile_{k - 1} - \quantile_{k}) \cdot 2\revenuecurve'(\quantile_{k - 1}) + \int_{\quantile_{1}}^{\quantile_{0}}\revenuecurve'(\quantile) \cdot \dd \quantile \\
        &\leq{} 2 \cdot \sum_{k\in[2 : K]}\quantile_{k - 1} \cdot \revenuecurve'(\quantile_{k - 1}) + \int_{\quantile_{1}}^{\quantile_{0}}\revenuecurve'(\quantile) \cdot \dd \quantile \\
        &\leq{} 2 \cdot \sum_{k\in[2 : K]}\quantile_{k - 1} \cdot \revenuecurve'(\quantile_{k - 1}) + \int_{0}^{1}\revenuecurve'(\quantile_{1}) \cdot \dd \quantile \\
        &\overset{(b)}{\leq}{} 2 \cdot \sum_{k\in[2 : K]}\quantile_{k - 1} \cdot \revenuecurve'(\quantile_{k - 1}) + \frac{2 \lowsupp}{\ratiolu}.
    \end{align*}
    where inequality~(a) holds by \cref{equation:R' in k k - 1} and inequality~(b) holds by \cref{equation:R' in 1 0}.
    It follows that
    \begin{align*}
        \sum_{k\in[2 : K]}\quantile_{k - 1} \cdot \revenuecurve'(\quantile_{k - 1}) 
        \geq \frac{1}{2} \cdot \left(\revenuecurve^{*} - \frac{2 \lowsupp}{\ratiolu} \right),
    \end{align*}
    which implies that there exists $k^{\dag} \in [2:K]$ such that
    \begin{equation*}
        \quantile_{k^{\dag}}\cdot\revenuecurve'(\quantile_{k^{\dag}})
        \geq \frac{1}{2K} \cdot \left(\revenuecurve^{*} - \frac{2 \lowsupp}{\ratiolu} \right)
        = \frac{1}{4\bceil{\log \ratiolu}} \cdot \left(\revenuecurve^{*} - \frac{2 \lowsupp}{\ratiolu} \right)
        = \frac{\revenuecurve^{*}}{8\bceil{\log \ratiolu}} + \frac{1}{8\bceil{\log \ratiolu}} \cdot \left(\revenuecurve^{*} - \frac{4 \lowsupp}{\ratiolu} \right).
    \end{equation*}
    
    Notice that $\revenuecurve^{*}\geq\lowsupp$ and $\ratiolu \geq \upsupp/\lowsupp \geq 4$, we get $\revenuecurve^{*} - 4\lowsupp/\ratiolu \geq 0$.
    As a result, we obtain that
    \begin{align*}
        \quantile_{k^{\dag}}\cdot\revenuecurve'(\quantile_{k^{\dag}})
        \geq \frac{\revenuecurve^{*}}{8\bceil{\log \ratiolu}}.
    \end{align*}

    Note that by \Cref{lemma:regular OPA vs AP}, the second-best benchmark satisfies $\revenuecurve^{*} \leq \utipsb \leq \ratiooa\cdot\revenuecurve^{*}$. 
    Combine \cref{equation: U* >= qR'(q)}, we can obtain that
    \begin{align*}
        \utip^{*}
        \geq \max\limits_{\quantile\in[0,1]} \quantile \cdot \revenuecurve'(\quantile) 
        \geq \quantile_{k^{\dag}} \cdot \revenuecurve'(\quantile_{k^{\dag}}) 
        \geq \frac{\revenuecurve^{*}}{8\bceil{\log \ratiolu}} 
        \geq \frac{\utipsb}{8\ratiooa\bceil{\log \ratiolu}}.
    \end{align*}
    Therefore, we have $\ratiosb = \frac{\utipsb}{\utip^{*}} \leq\bigO{\ratiolu}{\log \ratiolu}$ if $4 \leq \upsupp/\lowsupp \leq \ratiolu$. 
    
    Finally, we consider the case where $1 \leq \upsupp/\lowsupp \leq 4$.
    
    We let $\quantile^{\dag}$ be the largest quantile such that $\revenuecurve'(\quantile^{\dag}) \geq \lowsupp / 2$ and $\monoquan$ be the monopoly quantile such that $\revenuecurve^{*} = \revenuecurve\parent{\monoquan}$.
    Notice $\quantile^{\dag}$ is well-defined since $\revenuecurve'(0) = \upsupp$.
    By definition, we have $0 \leq \quantile^{\dag} \leq \monoquan \leq 1$ and $\revenuecurve'(\quantile^{\dag}) \leq \lowsupp$.
    Besides, if $\revenuecurve'(\quantile^{\dag}) \geq \lowsupp / 2$, it must hold $\quantile^{\dag} = \quantile^{*} = 1$.
    
    On the one hand, if $\revenuecurve'(\quantile^{\dag}) = \lowsupp / 2$, by definition, we have
    \begin{align*}
        \lowsupp
        \leq \revenuecurve^{*} 
        = \int_{0}^{\monoquan} \revenuecurve'(\quantile) \cdot \dd \quantile 
        = \int_{0}^{\quantile^{\dag}} \revenuecurve'(\quantile) \cdot \dd \quantile + \int_{\quantile^{\dag}}^{\monoquan} \revenuecurve'(\quantile) \cdot \dd \quantile 
        \overset{(a)}{\leq} \quantile^{\dag} \cdot \upsupp + \revenuecurve'(\quantile^{\dag}) \cdot \parent{1 - \quantile^{\dag}}.
    \end{align*}
    where inequality(a) holds since $\revenuecurve'(\quantile)$ is non-increasing due the regularity of $\welfaredis^{\agentnum}$.
    Moreover, notice that $\revenuecurve^{*} \geq \lowsupp$.
    Thus, we can obtain that
    \begin{align*}
        \quantile^{\dag} 
        \geq \frac{\lowsupp - \revenuecurve'(\quantile^{\dag})}{\upsupp - \revenuecurve'(\quantile^{\dag})}
        \overset{(a)}{\geq}\frac{\lowsupp - \revenuecurve'(\quantile^{\dag})}{\upsupp}
        \overset{(b)}{=} \frac{\lowsupp - \frac{\lowsupp}{2}}{\upsupp}
        = \frac{\lowsupp}{2 \upsupp} 
        \geq \frac{1}{8}.
    \end{align*}
    where inequality~(a) and equality~(b) hold since $\revenuecurve'(\quantile^{\dag}) = \lowsupp / 2 \geq 0$.
    
    Consequently, we can obtain that
    \begin{align*}
        \utip^{*}
        \geq \max\limits_{\quantile\in[0,1]} \quantile \cdot \revenuecurve'(\quantile) 
        \geq \quantile^{\dag} \cdot \revenuecurve'(\quantile^{\dag})
        \geq \frac{\revenuecurve'(\quantile^{\dag})}{8}
        \overset{(a)}{\geq} \frac{\lowsupp}{16}
        \overset{(b)}{\geq} \frac{\revenuecurve^{*}}{64}
        \geq \frac{\utipsb}{64\ratiooa}.
    \end{align*}
    where inequality~(a) holds by $\revenuecurve'(\quantile^{\dag}) \geq \lowsupp / 2$ and inequality~(b) holds by $\upsupp/\lowsupp \leq 4$ and $\revenuecurve^{*} \leq \upsupp$.

    On the other hand, if $\revenuecurve'(\quantile^{\dag}) > \lowsupp / 2$, then we know $\quantile^{\dag} = 1$.
    We finally obtain that 
    \begin{align*}
        \utip^{*}
        \geq \max\limits_{\quantile\in[0,1]} \quantile \cdot \revenuecurve'(\quantile) 
        \geq \quantile^{\dag} \cdot \revenuecurve'(\quantile^{\dag})
        \geq \revenuecurve'(\quantile^{\dag})
        \geq \frac{\lowsupp}{2}
        \geq \frac{\revenuecurve^{*}}{8}
        \geq \frac{\utipsb}{8\ratiooa}.
    \end{align*}
    
    Therefore, we can also obtain that $\ratiosb = \frac{\utipsb}{\utip^{*}} \leq \bigO{\ratiolu}{\log \ratiolu}$ if $1 \leq \upsupp/\lowsupp \leq 4$.

    \xhdr{Upper bounding \poa.}
    We next prove the upper bound of \poa.
    
    By \Cref{lemma:regular fb worst case}, we have that for any monopoly revenue $\revenuecurve^{*} \geq 0$, the largest first-best benchmark for one agent is achieved by the truncated equal-revenue distribution with monopoly revenue equal to $\revenuecurve^{*}$. 
    We can obtain that
    \begin{align*}
        \utipfb
        = \int_{0}^{+\infty} \parent{1 - \welfaredis^{\agentnum}(\temp)} \cdot \dd \temp 
        \leq \revenuecurve^{*} \cdot \parent{1 + \ln \frac{\upsupp}{\revenuecurve^{*}}} 
        \overset{(a)}{\leq} \revenuecurve^{*} \cdot (1 + \ln \ratiolu),
    \end{align*}
    where inequality~(a) holds since $\revenuecurve^{*} \geq \lowsupp$.
    Therefore, we have
    \begin{align*}
        \frac{\utipfb}{\utipsb}
        \leq \frac{\utipfb}{\revenuecurve^{*}}
        \leq \frac{\revenuecurve^{*} \cdot (1 + \ln \ratiolu)}{\revenuecurve^{*}}
        = \bigO{\ratiolu}{\log \ratiolu}.
    \end{align*}
    
    As a result, we can obtain that for any regular distribution,
    \begin{align*}
        \ratiofb
        = \ratiosb\cdot\frac{\utipfb}{\utipsb}
        \leq \bigO{\ratiolu}{\log \ratiolu}\cdot\bigO{\ratiolu}{\log \ratiolu}
        \leq \bigO{\ratiolu}{(\log \ratiolu)^{2}}.
    \end{align*}

    \xhdr{Lower bounding \podm and \poa.}
    We now analyze \Cref{example:regular bounds}, which shows the lower bound part of the two ratios, i.e., $\ratiofb = \bigomega{\ratiolu}{\parent{\log \ratiolu}^{2}}$ and $\ratiosb = \bigomega{\ratiolu}{\log \ratiolu}$.

    By definition, the virtual value for $\temp \in [\temp_{k}, \temp_{k - 1})$ is
    \begin{align*}
        \virval(\temp)=\revenuecurve'(\quantile)=2^{K - k + 1},
    \end{align*}
    and the quantile $\quantile_{k} = 1 - \welfaredis(\temp_{k}) = \quantile_{k - 1} + \frac{2^{k - 1}}{K2^{K}}$ and $\quantile_{0} = 0$, which leads to $\quantile_{k}=\frac{2^{k}-1}{K 2^{K}}$ for every $k \in [K]$.
    
    As a sanity check, $\revenuecurve(\quantile_{k}) = \frac{k}{K}$, and the virtual value $\virval(\temp)$ is negative for $\temp \in (0, \temp_{k}]$. Moreover, we can obtain that
    \begin{align*}
    \utip^{*}=\max_{\quantile\in[0, 1]} \quantile \cdot \revenuecurve'(\quantile)
    \overset{(a)}{=} \max_{k\in[K]} \quantile_{k} \cdot 2^{K - k + 1}
    = \max_{k\in[K]} \frac{2-2^{1 - k}}{K}
    = \frac{2-2^{1 - K}}{K}.
    \end{align*}
    where equality~(a) holds since $\quantile \cdot \revenuecurve'(\quantile) = \quantile \cdot 2^{K - k + 1} \leq \quantile_{k} \cdot 2^{K - k + 1}$ for any $\quantile \in [\quantile_{k-1}, \quantile_{k}]$.
    
    In this example, we have the second-best benchmark $\utipsb = \revenuecurve(1) = 1$.
    Consequently, we can obtain that $\ratiosb=\frac{\utipsb}{\utip^{*}}=\bigomega{\ratiolu}{K}$.

    Next, we consider the \poa.
    We have
    \begin{align*}
        \utipfb 
        ={}& \int_{0}^{\temp_{0}} 1 - \welfaredis(\temp) \cdot \dd \temp \\
        \geq{}& \sum_{k \in [2:K]} \int_{\temp_{k}}^{\temp_{k-1}} 1 - \welfaredis(\temp) \cdot \dd \temp \\
        \overset{(a)}{=}{}& \sum_{k \in [2:K]} \frac{k-2+2^{1 - k}}{K} \cdot \ln \parent{2 + \frac{1}{2^{k-1} - 1}} \\
        \overset{(b)}{\geq}{}& \sum_{2 \in [K]} \frac{k-2}{K} \cdot \ln 2 \\
        \overset{(c)}{=}{}& \frac{K^{2} - 3K + 3}{2K} \cdot \ln 2,
    \end{align*}
    where equality~(a) and (b) hold by calculations, and inequality~(c) holds since $2^{1 - k} \geq 0$ and $1/\parent{2^{k - 1} - 1} \geq 0$.

    Therefore, we can obtain that $\ratiofb = \frac{\utipfb}{\utip^{*}} = \bigomega{\ratiolu}{K^{2}}$. Since $\ratiolu = \temp_{0} = \frac{2^{K}}{\ln 2}$, we have $\ratiofb = \bigomega{\ratiolu}{\parent{\log \ratiolu}^{2}}$ and $\ratiosb = \bigomega{\ratiolu}{\log \ratiolu}$.
\end{proof}
 
\subsection{Proof of Theorem~\ref{thm:mhr ratio bounds for n in naturals}}
\label{apx:mhr ratio bounds proof}

\mhrratiobounds*

\begin{example}\label{example:mhr bounds}
    Consider $\agentnum$ agents.
    Let $K$ be a real number larger than $1$ and denote a non-positive \contribution $\tilde{\welfare} = -\ln (K - (K - 1)\ee^{-\rewardfun})$.
    We construct a \contribution distribution $\welfare$ for the agent as
    \begin{equation*}
        \welfaredis_{\rewardfun}(\temp) = \left\{
            \begin{aligned}
                &0,&&\temp \in \left(-\infty, \tilde{\welfare} \right); \\
                &\frac{1}{K} \cdot \frac{1 - \ee^{-\temp}}{1 - \ee^{-\rewardfun}} + \parent{1 - \frac{1}{K}}, &&\temp \in \left[\tilde{\welfare}, \rewardfun \right]; \\
                &1,&&\temp \in (\rewardfun, +\infty).
            \end{aligned}
        \right.
    \end{equation*}
We know that $\welfaredis_{\rewardfun}$ is an MHR distribution with $\welfaredis(\tilde{\welfare}) = 0$ and $\welfaredis(0) = 1 - 1/K$.
\end{example}

\begin{lemma}[\citealp{AGM-09}, Lemma 1] \label{lemma:mhr prob bound}
    For any random variable $\val$ that follows an MHR distribution $\welfaredis$ with support in $[0, +\infty)$, $1 -  \welfaredis(\reserve) \geq \frac{1}{\ee}$, where $\reserve$ is the monopoly reserve price of the distribution $\welfaredis$.
\end{lemma}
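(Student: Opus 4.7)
The plan is to use the standard representation $1 - \welfaredis(\temp) = \ee^{-\HR(\temp)}$, where $\HR(\temp) = \int_{0}^{\temp} \hr(s)\,\dd s$ is the cumulative hazard rate function, together with a first-order condition at the monopoly reserve. Since $\welfaredis$ is MHR, $\hr(\cdot)$ is nondecreasing, so we immediately obtain the key inequality $\HR(\temp) = \int_{0}^{\temp} \hr(s)\,\dd s \leq \temp \cdot \hr(\temp)$ for every $\temp$ in the support.

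Next, I would characterize $\reserve$ via first-order conditions on the revenue function $\revenuecurve(\temp) = \temp(1 - \welfaredis(\temp))$. Differentiating gives $\revenuecurve'(\temp) = (1 - \welfaredis(\temp))(1 - \temp \hr(\temp))$, which is $(1 - \welfaredis(\temp))$ times the (negated) virtual value expression. At an interior maximizer, $\revenuecurve'(\reserve) = 0$ yields $\reserve \cdot \hr(\reserve) = 1$. Substituting into the earlier inequality gives $\HR(\reserve) \leq \reserve \hr(\reserve) = 1$, and hence $1 - \welfaredis(\reserve) = \ee^{-\HR(\reserve)} \geq \ee^{-1}$, which is the desired bound.

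The only subtle step is handling the case where the optimum is not attained at an interior point where $\reserve \hr(\reserve) = 1$. If the virtual value $\virval(\temp) = \temp - 1/\hr(\temp)$ is negative throughout the support interior, the optimum is at the left endpoint of the support. But with support in $[0, +\infty)$, $\revenuecurve(0) = 0$, and the MHR property combined with $\welfaredis$ being a non-degenerate distribution forces $\virval(\temp)$ to become non-negative for some $\temp$; alternatively, by tie-breaking convention (taking the largest monopoly price), $\reserve$ satisfies $\reserve \hr(\reserve) \leq 1$, which is all we actually need since this already implies $\HR(\reserve) \leq \reserve \hr(\reserve) \leq 1$. The main obstacle, such as it is, lies in this edge-case bookkeeping rather than in the core inequality; once the MHR-driven bound $\HR(\reserve) \leq 1$ is justified, the conclusion follows in one line from the hazard-rate representation of the survival function.
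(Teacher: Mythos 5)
Your argument is correct and matches the standard proof of this result; the paper itself cites the lemma from prior work without reproving it, but its proof of the immediately following Corollary (the extension to arbitrary support) uses exactly your technique: the representation $1-\welfaredis(\temp)=\ee^{-\int \hr}$, the first-order condition $\hr(\reserve)=1/\reserve$, and the MHR bound $\hr(\temp)\le \hr(\reserve)$ for $\temp\le\reserve$ to get $\HR(\reserve)\le 1$. Your edge-case discussion (needing only $\reserve\,\hr(\reserve)\le 1$ when the maximizer is not interior) is a sound way to close the remaining gap.
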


It is important to note that \Cref{lemma:mhr prob bound} holds only when the support of the distribution is non-negative. 
We next extend this conclusion to cases with arbitrary support sets.

\begin{corollary} \label{corollary:mhr prob bound}
    For any random variable $\val$ that follows an MHR distribution $\welfaredis$, $1 - \welfaredis(\reserve) \geq \frac{1}{\ee} \cdot \prob{\val \geq0}$ where $\reserve$ is the reserve price for the distribution $\welfaredis$.
\end{corollary}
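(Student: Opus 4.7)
My plan is to reduce the statement to the non-negative support case already treated in \Cref{lemma:mhr prob bound} by conditioning on the event $\set{\val \geq 0}$.

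First, I dispose of the trivial cases. If $\prob{\val \geq 0} = 0$, the right-hand side is zero and the inequality is immediate. If the monopoly reserve $\reserve$ satisfies $\reserve \leq 0$, then $1 - \welfaredis(\reserve) \geq 1 - \welfaredis(0) \geq \prob{\val \geq 0}$, so the inequality again holds trivially. It therefore suffices to treat the case $\reserve > 0$ and $\prob{\val \geq 0} > 0$.

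Next, I would introduce the conditional distribution $\tilde{\welfaredis}$ of $\val$ given $\val \geq 0$, defined on $[0, +\infty)$, with tail $1 - \tilde{\welfaredis}(x) = (1 - \welfaredis(x))/\prob{\val \geq 0}$ and density $\tilde{\welfaredens}(x) = \welfaredens(x)/\prob{\val \geq 0}$ for $x \geq 0$. A direct computation gives
\begin{equation*}
    \tilde{\hr}(x) \;=\; \frac{\tilde{\welfaredens}(x)}{1 - \tilde{\welfaredis}(x)} \;=\; \frac{\welfaredens(x)}{1 - \welfaredis(x)} \;=\; \hr(x),
\end{equation*}
so $\tilde{\welfaredis}$ inherits the MHR property from $\welfaredis$ and has support in $[0, +\infty)$, putting it exactly in the regime covered by \Cref{lemma:mhr prob bound}.

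I then identify the reserves of $\welfaredis$ and $\tilde{\welfaredis}$. The revenue curve of $\tilde{\welfaredis}$ is $p(1 - \tilde{\welfaredis}(p)) = p(1 - \welfaredis(p))/\prob{\val \geq 0}$, i.e., a positive multiple of the revenue curve of $\welfaredis$ restricted to $p \geq 0$. Since $p(1 - \welfaredis(p)) \leq 0$ for $p < 0$ while $p = 0$ achieves $0$, the maximizer of the original revenue curve over the whole support coincides (under the paper's largest-tie-breaking convention) with its maximizer over $p \geq 0$; hence the reserve of $\tilde{\welfaredis}$ equals $\reserve$. Applying \Cref{lemma:mhr prob bound} to $\tilde{\welfaredis}$ gives $1 - \tilde{\welfaredis}(\reserve) \geq 1/\ee$, which upon multiplying through by $\prob{\val \geq 0}$ becomes exactly $1 - \welfaredis(\reserve) \geq \prob{\val \geq 0}/\ee$.

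The only mildly subtle step is the reserve correspondence, which amounts to a short case analysis on the sign of $\reserve$; beyond that, the argument is just a one-line conditioning reduction, so I do not expect any genuine obstacle.
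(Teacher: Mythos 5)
Your proposal is correct, but it is packaged differently from the paper's argument. The paper proves the corollary directly: it writes $1 - \welfaredis(\reserve) = \ee^{-\int_{-\infty}^{\reserve}\hr(\temp)\,\dd\temp}$, splits the cumulative hazard integral at $0$, and uses the MHR property together with $\hr(\reserve) = 1/\reserve$ to bound $\int_{0}^{\reserve}\hr(\temp)\,\dd\temp \leq 1$, yielding $1 - \welfaredis(\reserve) \geq \ee^{-1}(1 - \welfaredis(0))$. In effect the paper re-derives the content of \Cref{lemma:mhr prob bound} on the interval $[0,\reserve]$ rather than invoking it. You instead condition on $\set{\val \geq 0}$, check that the conditional distribution has the same hazard rate (hence is MHR with non-negative support), match up the reserves, and apply \Cref{lemma:mhr prob bound} as a black box. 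The two arguments are mathematically identical at bottom --- your conditional tail $1 - \tilde{\welfaredis}(\reserve)$ is exactly the factor $\ee^{-\int_{0}^{\reserve}\hr(\temp)\,\dd\temp}$ that the paper bounds --- but your reduction makes the logical dependence on the cited lemma explicit and avoids repeating its proof, at the cost of the small case analysis on the sign of $\reserve$ needed to identify the two reserve prices (which you handle correctly, noting that negative prices yield non-positive revenue). One cosmetic caveat: your trivial case $\reserve \leq 0$ uses $1 - \welfaredis(0) \geq \prob{\val \geq 0}$, which requires $\prob{\val = 0} = 0$; this holds under the paper's continuity assumption (mass only at the right endpoint, which is strictly positive) but is worth stating.
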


\begin{proof}
    Recall that the hazard rate function of distribution $\welfaredis$ is $\hr(\temp)$.
    By definition, we have $1 - \welfaredis(\temp) = e^{-\int_{-\infty}^{\temp}\hr(\temp) \cdot \dd \temp}$.
    By definition of reserve price, we have $\hr(\reserve)=\frac{1}{\reserve}$.
    Therefore, we can obtain that
    \begin{align*}
        1 - \welfaredis(\reserve)
        ={}& \ee^{-\int_{-\infty}^{\reserve} \hr(\temp) \cdot \dd \temp} \\
        ={}& \ee^{-\int_{-\infty}^{0} \hr(\temp) \cdot \dd \temp - \int_{0}^{\reserve} \hr(\temp) \cdot \dd \temp} \\
        \overset{(a)}{\geq}{}& \ee^{-\int_{-\infty}^{0} \hr(\temp) \cdot \dd \temp - \int_{0}^{\reserve} \frac{1}{\reserve} \cdot \dd \temp} \\
        \overset{(b)}{=}{}& \frac{1}{\ee} (1 - \welfaredis(0)) \\
        ={}& \frac{1}{\ee} \prob{\val \geq 0}.
    \end{align*}
    where inequality~(a) holds since $\welfaredis$ has MHR, we have $\hr(\temp) \leq \frac{1}{\reserve}$ for any $\temp\leq\reserve$ and equality~(b) holds since $\int_{0}^{\reserve} \frac{1}{\reserve} \cdot \dd \temp = 1$ and $\ee^{-\int_{-\infty}^{0} \hr(\temp) \cdot \dd \temp} = 1 - \welfaredis(0)$.
\end{proof}

A consequence of \Cref{corollary:mhr prob bound} is that the principal's utility under a linear contract $\contract$ can be lower-bounded as follows.

\begin{lemma} \label{lemma:repeated proof}
    For agents with identical reward $\rewardfun$ and MHR distribution $\welfaredis$, given linear contract $\contract \in [0, 1]$, the principal's utility, denoted by $\utip(\contract)$, satisfies that
    \begin{align*}
        \utip(\contract)
        \geq \max\nolimits_{\tempnum \in \agents}\rewardfun (1 - \contract) \parent{1 -  \parent{1 - \frac{1}{\ee}(1 - \welfaredis^{\tempnum}(\rewardfun (1 - \contract)))}^{\agentnum/\tempnum}}.
    \end{align*}
\end{lemma}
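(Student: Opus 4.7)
The plan is to first write $\utip(\contract)$ in closed form using the characterization of the intermediary's optimal mechanism from \Cref{thm:opt mech}, and then reduce the lemma to a tail-probability inequality for the $\tempnum$-fold order statistic, to which I can apply \Cref{corollary:mhr prob bound}. Setting $\price \deq (1-\contract)\rewardfun$, the intermediary's virtual welfare maximizer allocates iff some agent's \cwelfare reaches the monopoly reserve $\reserve_\contract$ of the single-agent \cwelfare distribution $\valdis_\contract$, equivalently iff $\max_i \welfare_i \geq \temp_\contract \deq \reserve_\contract + \price$. A short computation using $\valdis_\contract(\cwf) = \welfaredis(\cwf + \price)$ gives $\virval(\temp_\contract) = \price$, where $\virval$ is the virtual value function of $\welfaredis$. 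Since the principal collects $\price$ in expectation whenever the task is allocated, this yields $\utip(\contract) = \price \cdot \parent{1 - \welfaredis^\agentnum(\temp_\contract)}$.

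Fixing $\tempnum \in \agents$ and using the identity $\welfaredis^\agentnum(\temp) = \parent{\welfaredis^\tempnum(\temp)}^{\agentnum/\tempnum}$, the desired bound reduces to the tail inequality
\begin{equation*}
    1 - \welfaredis^\tempnum(\temp_\contract) \geq \frac{1}{\ee}\parent{1 - \welfaredis^\tempnum(\price)}.
\end{equation*}
By \Cref{lemma:regular mhr order statistic}, $\welfaredis^\tempnum$ is MHR, and translation preserves the hazard rate profile, so the random variable $Y' \deq \max_{i \in [\tempnum]} \welfare_i - \price$ also has an MHR distribution. Let $\reserve'$ denote its monopoly reserve. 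Applying \Cref{corollary:mhr prob bound} to $Y'$ gives
\begin{equation*}
    1 - \welfaredis^\tempnum(\reserve' + \price) \geq \frac{1}{\ee} \cdot \prob{Y' \geq 0} = \frac{1}{\ee}\parent{1 - \welfaredis^\tempnum(\price)},
\end{equation*}
so it suffices to show $\temp_\contract \leq \reserve' + \price$.

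For this final step, I observe that $\reserve' + \price$ is (by the paper's tie-breaking convention of taking the largest maximizer) the largest $\temp$ at which the virtual value function $\virval^{(\tempnum)}$ of $\welfaredis^\tempnum$ equals $\price$. By \Cref{lemma:regular reserve increase}, $\virval^{(\tempnum)}(\temp) \leq \virval(\temp)$ for every $\temp$, so $\virval^{(\tempnum)}(\temp_\contract) \leq \virval(\temp_\contract) = \price$. Since $\welfaredis^\tempnum$ is regular, $\virval^{(\tempnum)}$ is non-decreasing, and combining with the largest-maximizer convention yields $\temp_\contract \leq \reserve' + \price$. Taking the maximum over $\tempnum \in \agents$ completes the argument. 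The main obstacle is this last monotonicity step: one must keep careful track of which virtual value function ($\virval$ versus $\virval^{(\tempnum)}$) is attached to which monopoly reserve (the single-agent \cwelfare reserve $\reserve_\contract$ versus the shifted-order-statistic reserve $\reserve'$) and handle non-strict monotonicity consistently with the tie-breaking convention.
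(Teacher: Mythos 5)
Your proof is correct and follows essentially the same route as the paper's: both express $\utip(\contract)$ as $\rewardfun(1-\contract)$ times the tail of $\welfaredis^{\agentnum}$ at the reserve-shifted threshold, replace the single-agent reserve by the $\tempnum$-agent reserve via \Cref{lemma:regular reserve increase}, apply \Cref{corollary:mhr prob bound} to the $\tempnum$-fold order statistic, and finish with $\welfaredis^{\agentnum}=(\welfaredis^{\tempnum})^{\agentnum/\tempnum}$. The only difference is presentational — you work with the translated variable $Y'$ in contribution space and invoke the virtual-value-monotonicity half of \Cref{lemma:regular reserve increase}, whereas the paper works directly with $\valdis_{\contract}^{\tempnum}$ and uses the reserve-monotonicity half, which are equivalent.
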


\begin{proof}
    In this proof, we specially use $\valmax^{(\tempnum)}$ to denote the first-order statistic out of $\tempnum \in \mathbb{N}$ i.i.d.\ variables drawing from a distribution $\valdis_{\contract}$.
    By \Cref{thm:opt contract}, we have
    \begin{align*}
        \utip(\contract) 
        = \virval(\temp) \parent{1 - \welfaredis^{\agentnum}(\temp)}
        = \rewardfun (1 - \contract)(1 - \valdis_{\contract}^{\agentnum}(\reserve_{\contract})),
    \end{align*}
    where $\temp = \rewardfun(1 - \contract) + \reserve_{\contract}$.
    Notice that by \Cref{lemma:regular mhr order statistic}, the distribution $\welfaredis^{\tempnum}$ is MHR for any $\tempnum \in \mathbb{N}$.
    We denote the monopoly reserve price of $\valdis_{\contract}^{\tempnum}$ as $\reserve_{\contract}^{(\tempnum)}$.
    Based on Corollary~\ref{corollary:mhr prob bound}, we directly obtain that
    \begin{align*}
        \utip(\contract) 
        ={}&\rewardfun (1 - \contract)(1 - \valdis_{\contract}^{\agentnum}(\reserve_{\contract}))\\
        \geq{}&\rewardfun (1 - \contract)(1 - \valdis_{\contract}^{\agentnum}(\reserve_{\contract}^{(\tempnum)}))\\
        \geq{}&\rewardfun (1 - \contract) \parent{1 - \parent{1 - \frac{1}{\ee} \prob{\valmax^{(\tempnum)} \geq 0}}^{\agentnum/\tempnum} } \\
        ={}&\rewardfun (1 - \contract) \parent{1 -  \parent{1 - \frac{1}{\ee} \prob{\welfaremax^{(\tempnum)} \geq \rewardfun (1 - \contract)}}^{\agentnum/\tempnum}} \\
        ={}&\rewardfun (1 - \contract) \parent{1 - \parent{1 - \frac{1}{\ee}(1 - \welfaredis^{\tempnum}(\rewardfun (1 - \contract)))}^{\agentnum/\tempnum}}.
    \end{align*}
    Moreover, when $\tempnum = n$, we have
    \begin{align*}
        \utip(\contract)
        \geq \frac{1}{\ee} \rewardfun(1 - \contract)(1 - \welfaredis^{\agentnum}(\rewardfun(1 - \contract))).
    \end{align*}
    And when $\tempnum = 1$, we have
    \begin{align*}
        \utip(\contract)
        \geq \rewardfun (1 - \contract) \parent{1 -  \parent{1 - \frac{1}{\ee}(1 - \welfaredis(\rewardfun (1 - \contract)))}^{\agentnum}}.
    \end{align*}
\end{proof}

\Cref{lemma:repeated proof} serves as a crucial tool for our subsequent analysis of the two ratios.
Specifically, we will repeatedly utilize the expressions for $\tempnum = n$ and $\tempnum = 1$ in our forthcoming proofs.

\begin{lemma}[\citealp{DRY-15}, Theorem 3.11] \label{lemma:mhr ratio bound}
    For any MHR distribution, its expected value is at most $\ee$ times more than the expected monopoly revenue.
\end{lemma}

\begin{lemma}[\citealp{JLQ-19}, Theorem 1] \label{lemma:mhr OPA vs AP}
    To sell a single item among multiple buyers with i.i.d.\ and MHR distributions, the tight ratio of Myerson's optimal auction to anonymous pricing is $1.2683$.
\end{lemma}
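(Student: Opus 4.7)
\textbf{Proof proposal for \Cref{lemma:mhr OPA vs AP}.} The plan is to prove both directions of the tight bound $\sup \text{OPT}/\text{AP} = 1.2683$ where the supremum is over all i.i.d.\ MHR instances. I would work entirely in quantile space. Let $q = 1 - \welfaredis(\temp)$ and $\temp(q) = \welfaredis^{-1}(1-q)$. The MHR property translates to concavity of $\temp$ as a function of $-\ln q$ (equivalently, convexity of the cumulative hazard $\HR$). The anonymous-pricing revenue at quantile $q$ is $\text{AP}(q) = \temp(q) \cdot (1 - (1-q)^{\agentnum})$, while Myerson's revenue for i.i.d.\ regular distributions admits the clean representation $\text{OPT} = \int_0^{\monoquan} \revenuecurve'(q) \cdot (1 - (1-q)^{\agentnum}) \, \dd q$, where $\revenuecurve(q) = q\,\temp(q)$ is the single-buyer revenue curve and $\monoquan$ is the monopoly quantile. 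I would normalize so that $\revenuecurve^\ast = \revenuecurve(\monoquan) = 1$.

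The heart of the upper-bound argument is to reduce the infinite-dimensional optimization over MHR distributions to a small finite-dimensional program. First I would observe the two structural facts available for free: by \Cref{lemma:mhr prob bound}, $\monoquan \geq 1/\ee$, and the concavity of $\temp \circ (-\ln)$ gives pointwise inequalities $\temp(q) \leq \temp(\monoquan)(\ln q)/(\ln \monoquan)$ on $q \leq \monoquan$ and a symmetric dual bound on $q \geq \monoquan$. Substituting these into the Myerson integral gives an upper bound on $\text{OPT}$ depending only on $(\monoquan, \agentnum)$, and substituting them into $\text{AP}(q)$ at a well-chosen $q$ (a natural candidate is $q = \monoquan$, augmented by a comparison with $q = 1$) yields a matching lower bound on $\sup_q \text{AP}(q)$ as a function of $(\monoquan, \agentnum)$.

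The final step is to solve the resulting two-parameter optimization $\max_{\monoquan \in [1/\ee, 1], \, \agentnum \geq 1} \text{OPT}(\monoquan,\agentnum)/\text{AP}(\monoquan,\agentnum)$, relaxing $\agentnum$ to a continuous parameter via monotonicity in $\agentnum$. Standard calculus reduces this to a one-dimensional root-finding problem whose numerical solution gives the constant $\approx 1.2683$. For the matching lower bound, I would construct the extremal family by taking the distributions that saturate the MHR constraint and the concavity inequality at the critical $\monoquan$: these are truncated exponential distributions (equivalently, distributions whose cumulative hazard $\HR$ is piecewise linear with a single breakpoint). For the optimal $(\monoquan, \agentnum)$ found above, I would write this distribution explicitly, compute $\text{OPT}$ and $\sup_p \text{AP}(p)$ in closed form, and verify the ratio equals $1.2683$.

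The main obstacle is the reduction from the infinite-dimensional MHR family to the finite-dimensional optimization. One must argue that the worst-case distribution is of this two-piece exponential form. The natural route is an extreme-point argument on the convex set of cumulative hazards $\HR$: any perturbation of $\HR$ that keeps it convex and preserves $\monoquan$ can only change $\text{OPT}$ and the $\text{AP}$ curve in constrained ways, and one shows that perturbations pushing mass to make $\HR$ more piecewise-linear weakly worsen the ratio. Verifying this monotonicity---in particular that no perturbation simultaneously increases $\text{OPT}$ while decreasing $\text{AP}(q)$ at every $q$---is the delicate calculation; once it is done, the rest is routine one-variable optimization.
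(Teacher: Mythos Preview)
The paper does not prove this lemma; it is quoted as Theorem~1 of \citet{JLQ-19} and used as a black box in the proof of \Cref{thm:mhr ratio bounds for n in naturals}. There is therefore no proof in the paper to compare your proposal against.

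That said, your outline contains a concrete error that would derail the argument before the hard part begins. The Myerson-revenue representation you wrote,
\[
\text{OPT} \;=\; \int_0^{\monoquan} \revenuecurve'(q)\,\bigl(1-(1-q)^{\agentnum}\bigr)\,\dd q,
\]
is incorrect: already at $\agentnum=1$ it gives $\int_0^{\monoquan} \revenuecurve'(q)\,q\,\dd q$ rather than $\revenuecurve(\monoquan)$. The correct identity is
\[
\text{OPT} \;=\; \int_0^{\monoquan} \revenuecurve'(q)\cdot \agentnum(1-q)^{\agentnum-1}\,\dd q,
\]
since the winner's quantile is $\min_i q_i$, whose density is $\agentnum(1-q)^{\agentnum-1}$; integration by parts (using $\revenuecurve'(\monoquan)=0$) yields the equivalent form $-\int_0^{\monoquan} \revenuecurve''(q)\bigl(1-(1-q)^{\agentnum}\bigr)\,\dd q$. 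With the wrong integrand, the bounds you extract from the MHR concavity inequality will be off and the downstream one-variable optimization will not land on the constant $1.2683$.

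Beyond this, your high-level strategy---reduce to extremal MHR distributions via convexity of the cumulative hazard $\HR$, then optimize over a finite-dimensional family---is indeed the route taken by \citet{JLQ-19}, but the reduction there is considerably more involved than a single extreme-point perturbation. One must control simultaneously how $\text{OPT}$ and the anonymous-pricing supremum $\sup_q \text{AP}(q)$ respond to deformations of $\HR$, and ruling out that some deformation helps the ratio is exactly the technical core of their paper; your proposal correctly flags this step as ``the delicate calculation'' but does not supply it.
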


The above two lemmas are important ratio conclusions under the MHR distributions.
Now we are ready to prove \Cref{thm:mhr ratio bounds for n in naturals}.

\begin{proof}[Proof of \Cref{thm:mhr ratio bounds for n in naturals}.]
    We first show the upper bound parts utilizing the above lemmas and then show the lower bound part by analyzing~\Cref{example:mhr bounds}.

    \xhdr{Upper bounding \poa.}
    We first focus on the upper bound of \poa.
    Let $\reserve^{(\agentnum)}$ be the reserve price for distribution $\welfaredis^{\agentnum}$.
    We have
    \begin{align*}
        \utip^{*} 
        \overset{(a)}{=}{}& \max_{\contract\in[0, 1]} \utip(\contract) \\
        \overset{(b)}{\geq}{}& \max_{\contract\in[0, 1]} \frac{1}{e}\rewardfun(1 - \contract)(1 - \welfaredis^{\agentnum}(\rewardfun(1 - \contract)))\\
        \overset{(c)}{=}{}& \frac{1}{\ee} \reserve^{(\agentnum)}(1 - \welfaredis^{\agentnum}(\reserve^{(\agentnum)}))\\
        \overset{(d)}{\geq}{}& \frac{1}{\ee^{2}}\expect[\welfareprf \sim \welfaredis^{\otimes \agentnum}]{\plus{\welfaremax}}\\
        \overset{(e)}{=}{}& \frac{1}{\ee^{2}}\utipfb,
    \end{align*}    
    where equality~(a) holds by definition of $\utip^{*}$, inequality~(b) holds by Lemma~\ref{lemma:repeated proof}, equality~(c) holds by definition of $\reserve^{(\agentnum)}$, inequality~(d) holds by Lemma~\ref{lemma:mhr ratio bound} and equality~(e) holds by the definition of $\utipfb$. 
    Thus we have $\ratiofb = \frac{\utipfb}{\utip^{*}} \leq \ee^{2}$ by definition.

    \xhdr{Upper bounding \podm.}
    Next, we consider the upper bound of \podm.

    We first consider a single agent, i.e., $\agentnum = 1$.
    Note that when $\agentnum = 1$, the second-best benchmark $\utipsb = \reserve(1 - \welfaredis(\reserve))$.
    By above inequality~(c), we can obtain that $\ratiosb = \frac{\utipsb}{\utip^{*}} \leq \ee$.
    
    When, $\agentnum > 1$, by Lemma~\ref{lemma:repeated proof}, we have
    \begin{align*}
        \utip^{*} \geq \reserve^{(\agentnum)} \parent{1 -  \parent{1 - \frac{1}{\ee}(1 - \welfaredis(\reserve^{(\agentnum)}))}^{\agentnum}}.
    \end{align*}

    It follows that
    \begin{align*}
        \ratiosb
        ={}& \frac{\utipsb}{\utip^{*}} \\
        \leq{}& \frac{\utipsb}{\reserve^{(\agentnum)} \parent{1 -  \parent{1 - \frac{1}{\ee}(1 - \welfaredis(\reserve^{(\agentnum)}))}^{\agentnum}}}\\
        ={}& \frac{\utipsb}{\reserve^{(\agentnum)} (1 - \welfaredis^{\agentnum}(\reserve^{(\agentnum)}))} \cdot \frac{1 - \welfaredis^{\agentnum}(\reserve^{(\agentnum)})}{1 -  \parent{1 - \frac{1}{\ee}(1 - \welfaredis(\reserve^{(\agentnum)}))}^{\agentnum}}\\
        ={}& \ratioma_{\agentnum} \cdot \tempfunc_{\agentnum}(\welfaredis(\reserve^{(\agentnum)})).
    \end{align*}
    where $\ratioma_{\agentnum} = \frac{\utipsb}{\reserve^{(\agentnum)} (1 - \welfaredis^{\agentnum}(\reserve^{(\agentnum)}))}$ and $\tempfunc_{\agentnum}(\temp) = \frac{1 - \temp^{\agentnum}}{1 -  \parent{1 - \frac{1}{\ee}(1 - \temp)}^{\agentnum}}$.
    
    Note that $\tempfunc_{\agentnum}(\cdot)$ is non-decreasing in $[0, 1]$ and $1 - \welfaredis^{\agentnum}(\reserve^{(\agentnum)}) \geq \frac{1}{\ee}$ by \Cref{lemma:mhr prob bound}, we have
    \begin{align*}
        \ratiosb \leq \ratioma_{\agentnum} \cdot \tempfunc_{\agentnum}\parent{\parent{1 - \frac{1}{\ee}}^{\frac{1}{\agentnum}}}.
    \end{align*}
    \Cref{lemma:mhr OPA vs AP} provides the numerical results of $\ratioma_{\agentnum}$ for different $\agentnum$, which allows us to obtain the numerical results of the lower bound of \podm. 
    We offer the numerical results of the lower bound in \Cref{tab:mhr rsb} when $\agentnum \leq 44$.
    When $\agentnum \geq 44$, it can be verified that \podm can never reach $\ommhrupbound$ since $\ratioma_{\agentnum}$ and $\tempfunc_{\agentnum}((1 - 1/\ee)^{1/\agentnum}))$ are both decreasing as $\agentnum$ increases.
    Hence the lower bound of $\ommhrupbound$ is reached when $\agentnum = 7$.
    
    \begin{table}[ht]
        \centering
        \clearpage{}\begin{tabular}{ccccccccccc}
\hline
$\agentnum$     & 2                    & 3                    & 4                    & 5                    & 6                    & 7                    & 8                    & 9                    & 10                   & 11                   \\ \hline
$\ratiosb$ & 3                    & 3.025                & 3.031                & 3.032                & 3.033                & $\boldsymbol{\ommhrupbound}$                & 3.033                & 3.032                & 3.032                & 3.032                \\ \hline
\multicolumn{1}{l}{}       & \multicolumn{1}{l}{} & \multicolumn{1}{l}{} & \multicolumn{1}{l}{} & \multicolumn{1}{l}{} & \multicolumn{1}{l}{} & \multicolumn{1}{l}{} & \multicolumn{1}{l}{} & \multicolumn{1}{l}{} & \multicolumn{1}{l}{} & \multicolumn{1}{l}{} \vspace{-1em}\\ \hline
12                         & 13                   & 14                   & 15                   & 16                   & 17                   & 18                   & 19                   & 20                   & 21                   & 22                   \\ \hline
3.032                      & 3.031                & 3.031                & 3.031                & 3.030                & 3.029                & 3.027                & 3.025                & 3.023                & 3.022                & 3.019                \\ \hline
\multicolumn{1}{l}{}       & \multicolumn{1}{l}{} & \multicolumn{1}{l}{} & \multicolumn{1}{l}{} & \multicolumn{1}{l}{} & \multicolumn{1}{l}{} & \multicolumn{1}{l}{} & \multicolumn{1}{l}{} & \multicolumn{1}{l}{} & \multicolumn{1}{l}{} & \multicolumn{1}{l}{} \vspace{-1em}\\ \hline
23                         & 24                   & 25                   & 26                   & 27                   & 28                   & 29                   & 30                   & 31                   & 32                   & 33                   \\ \hline
3.017                      & 3.015                & 3.013                & 3.011                & 3.009                & 3.007                & 3.005                & 3.003                & 3.001                & 2.998                & 2.997                \\ \hline
\multicolumn{1}{l}{}       & \multicolumn{1}{l}{} & \multicolumn{1}{l}{} & \multicolumn{1}{l}{} & \multicolumn{1}{l}{} & \multicolumn{1}{l}{} & \multicolumn{1}{l}{} & \multicolumn{1}{l}{} & \multicolumn{1}{l}{} & \multicolumn{1}{l}{} & \multicolumn{1}{l}{} \vspace{-1em}\\ \hline
34                         & 35                   & 36                   & 37                   & 38                   & 39                   & 40                   & 41                   & 42                   & 43                   & 44                   \\ \hline
2.995                      & 2.993                & 2.991                & 2.989                & 2.987                & 2.986                & 2.984                & 2.982                & 2.980                & 2.979                & 2.977                \\ \hline
\end{tabular}\clearpage{}
        \caption{The upper bounds of \podm for MHR distributions when $\agentnum \leq 44$.}
        \label{tab:mhr rsb}
    \end{table}
    
    \xhdr{Lower bounding \podm and \poa.}
    Now we analyze the lower bound part of the two ratios by considering~\Cref{example:mhr bounds}.
    Formally, we show that for every fixed $\agentnum$ and any small real $\varepsilon > 0$, there exists a parameter $K > 1$ and a reward $\rewardfun$ satisfying that $\ratiofb \geq \ee^{2} - \varepsilon$ and $\ratiosb \geq \ee - \varepsilon$.

    First, we have the virtual price of the conditional exponential distribution is
    \begin{equation*}
        \virval(\temp) 
        = \temp - \frac{1 - \welfaredis_{\rewardfun}(\temp)}{\welfaredens_{\rewardfun}(\temp)} 
        = \temp - \frac{\ee^{-\temp} - \ee^{-\rewardfun}}{\ee^{-\temp}} 
        = \temp - 1 + \ee^{\temp - \rewardfun},
    \end{equation*}
    which does not rely on the parameter $K$.
    
    Next, we set $K = \agentnum^{\rewardfun + 1}/(1 - \ee^{-\rewardfun})$, the distribution function $\welfaredis_{\rewardfun}(\cdot)$ can be write as
    \begin{equation*}
        \welfaredis_{\rewardfun}(\temp) 
        = \frac{1}{\agentnum^{\rewardfun + 1}} \cdot (1 - \ee^{-\temp}) + 1 - \frac{1}{\agentnum^{\rewardfun + 1}}(1 - \ee^{-\rewardfun})
        = 1 - \frac{1}{\agentnum^{\rewardfun + 1}} \cdot (\ee^{-\temp} - \ee^{-\rewardfun}).
    \end{equation*}
        Observe that for any $\temp \geq 0$, we have inequality
    \begin{equation*}
        1 - \frac{\temp}{\agentnum^{\rewardfun}} 
        \leq \parent{1 - \frac{\temp}{\agentnum^{\rewardfun + 1}}}^{\agentnum} 
        \leq 1 - \frac{\temp}{\agentnum^{\rewardfun}} + \frac{\temp^{2}}{2\agentnum^{2\rewardfun + 1}},
    \end{equation*}
    which gives the lower and upper bounds of $1 - \welfaredis_{\rewardfun}^{\agentnum}(\cdot)$ as
    \begin{equation*}
        \frac{\ee^{-\temp} - \ee^{-\rewardfun}}{\agentnum^{\rewardfun}} - \frac{\parent{\ee^{-\temp} - \ee^{-\rewardfun}}^{2}}{2\agentnum^{2\rewardfun + 1}}
        \leq 1 - \welfaredis_{\rewardfun}^{\agentnum}(\temp) 
        \leq \frac{\ee^{-\temp} - \ee^{-\rewardfun}}{\agentnum^{\rewardfun}}.
    \end{equation*}
    
    We are ready to estimate the bounds of \podm and \poa by using the above bounds of $1 - \welfaredis_{\rewardfun}^{\agentnum}(\cdot)$.
    On the one hand,
    \begin{align*}
        \utip^{*}
        ={}& \max_{\temp \in [0, \rewardfun]} \virval(\temp) (1 - \welfaredis_{\rewardfun}^{\agentnum}(\temp)) \\
        \leq{}& \max_{\temp \in [0, \rewardfun]} \parent{\temp - 1 + \ee^{\temp - \rewardfun}} \frac{\ee^{-\temp} - \ee^{-\rewardfun}}{\agentnum^{\rewardfun}} \\
        \leq{}& \frac{1}{\agentnum^{\rewardfun}} \max_{\temp \in [0, \rewardfun]} \parent{\temp - 1 + \ee^{\temp - \rewardfun}} \ee^{-\temp} \\
        ={}& \frac{1}{\agentnum^{\rewardfun}}(\ee^{-2} + \ee^{-\rewardfun}).
    \end{align*}
    
    On the other hand,
    \begin{align*}
        \utipfb
        ={}& \int_{0}^{\rewardfun} (1 - \welfaredis^{\agentnum}(\temp)) \cdot \dd \temp \\
        \geq{}& \int_{0}^{\rewardfun} \parent{\frac{\ee^{-\temp} - \ee^{-\rewardfun}}{\agentnum} - \frac{\parent{\ee^{-\temp} - \ee^{-\rewardfun}}^{2}}{2\agentnum^{2}}} \cdot \dd \temp \\
        \geq{}& \int_{0}^{\rewardfun} \parent{\frac{\ee^{-\temp} - \ee^{\rewardfun}}{\agentnum} - \frac{\ee^{-2\temp}}{2\agentnum^{2}}} \cdot \dd \temp \\
        \geq{}& \frac{1}{\agentnum^{\rewardfun}} (1 - (\rewardfun + 1)\ee^{-\rewardfun}) - \frac{1}{4\agentnum^{2\rewardfun + 1}}(1 - \ee^{-2\rewardfun}),
    \end{align*}
    and
    \begin{align*}
        \utipsb
        ={}& \int_{0}^{\rewardfun} \virval(\temp) (1 - \welfaredis_{\rewardfun}^{\agentnum}(\temp)) \cdot \dd \temp \\
        \geq{}& \int_{0}^{\rewardfun} \parent{\temp - 1 + \ee^{\temp - \rewardfun}} \parent{\frac{\ee^{-\temp} - \ee^{-\rewardfun}}{\agentnum} - \frac{\parent{\ee^{-\temp} - \ee^{-\rewardfun}}^{2}}{2\agentnum^{2}}} \cdot \dd \temp \\
        \geq{}& \int_{0}^{\rewardfun} \parent{\temp - 1 + \ee^{\temp - \rewardfun}} \parent{\frac{\ee^{-\temp} - \ee^{-\rewardfun}}{\agentnum} - \frac{\ee^{-2\temp}}{2\agentnum^{2}}} \cdot \dd \temp \\
        ={}& \frac{1}{\agentnum^{\rewardfun}}\parent{\ee^{-1} - \frac{1}{2}(\rewardfun^{2} -2\rewardfun + 5)\ee^{-\rewardfun} + \ee^{1 - 2\rewardfun}} - \frac{1}{8\agentnum^{2\rewardfun + 1}} \parent{\ee^{-2} + \ee^{-\rewardfun - 1} + (2\rewardfun + 3)\ee^{-2\rewardfun}}.
    \end{align*}

    Therefore, we can finally obtain the lower bounds
    \begin{align*}
        \ratiofb = \frac{\utipfb}{\utip^{*}} 
        = \ee^{2} - \bigO{\rewardfun}{\rewardfun\ee^{-\rewardfun} + \agentnum^{-\rewardfun - 1}},
    \end{align*}
    and
    \begin{align*}
        \ratiosb 
        = \frac{\utipsb}{\utip^{*}} 
        = \ee - \bigO{\rewardfun}{\ee^{-\rewardfun} + \agentnum^{-\rewardfun - 1}}.
    \end{align*}
    Finally, for an arbitrary large $\rewardfun$, we have $\ratiofb \geq \ee^{2} - \varepsilon$ and $\ratiosb \geq \ee - \varepsilon$.
\end{proof}
 
\subsection{Proof of Theorem~\ref{thm:r_fb asymp lb}}
\label{apx:r_fb asymp lb proof}

In this section, we prove \Cref{thm:r_fb asymp lb}.

\fbasymp*

 Our analysis relies on the following technical lemmas.

\begin{lemma}[\citealp{GPZ-21}, Lemma 3.3] \label{lemma: MHR concentration}
     Let $\mu_{i}$ be the expectation of the $i$th lowest order statistic derived from $\agentnum$ i.i.d.\ random variables following a continuous positive MHR distribution $\welfaredis$.
     For any real $\pert \in [0, 1]$,
     \begin{equation*}
        \welfaredis(\pert \cdot \mu_{i}) \leq 1 - \exp( -\pert \cdot (\harmonic_{\agentnum} - \harmonic_{\agentnum - i})), 
     \end{equation*}
    where $\harmonic_{\agentnum} = 1 + \frac{1}{2} + \cdots + \frac{1}{\agentnum}$ is the $\agentnum$th harmonic number.
\end{lemma}

\begin{lemma} \label{lemma: lb of fb ratio}
    Suppose that the \contribution $\welfare_{i}$ for each agent $i \in \agents$ satisfies an MHR distribution $\welfaredis$, and $\bar{\welfaredis}$ is the conditional distribution conditioning on the \contribution $\welfare_{i} \geq 0$.
    Let $\pert = \rewardfun(1 - \contract)/\bar{\mu}_{\agentnum}$, where $\bar{\mu}_{\agentnum}$ is the excepted value of the first-order statistic derived from $\agentnum$ i.i.d.\ random variables following the distribution $\bar{\welfaredis}$.
    For every linear contract $\contract \in [0, 1]$, the utility of the principal $\utip(\contract)$ is at least $\pert \cdot \parent{1 - (1 - (1 - \welfaredis(0))\ee^{-\pert \harmonic_{\agentnum} - 1})^{\agentnum}}$ times more than first-best benchmark $\utipfb$, i.e.,
    \begin{align*}
        \utip(\contract) 
        \geq \pert \cdot \parent{1 - \parent{1 - (1 - \welfaredis(0))\ee^{-\pert \harmonic_{\agentnum} - 1}}^{\agentnum}} \cdot \utipfb.
    \end{align*}
\end{lemma}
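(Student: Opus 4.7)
The plan is to combine the lower bound on $\utip(\contract)$ for linear contracts given by \Cref{lemma:repeated proof} (specialized to $\tempnum=1$) with the MHR concentration bound from \Cref{lemma: MHR concentration} applied to the conditional distribution $\bar{\welfaredis}$, and then to compare $\bar{\mu}_{\agentnum}$ against $\utipfb$.

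First, I would invoke \Cref{lemma:repeated proof} with $\tempnum = 1$, yielding
\begin{equation*}
    \utip(\contract) \geq \rewardfun(1-\contract)\Bigl(1 - \Bigl(1 - \tfrac{1}{\ee}(1-\welfaredis(\rewardfun(1-\contract)))\Bigr)^{\agentnum}\Bigr).
\end{equation*}
Since conditioning an MHR distribution on the event $\welfare\ge 0$ preserves the MHR property (the hazard rate is simply restricted to $[0,\infty)$), the distribution $\bar{\welfaredis}$ is a continuous positive MHR distribution. Applying \Cref{lemma: MHR concentration} to $\bar{\welfaredis}$ with $i=\agentnum$ (the first-order statistic has expectation $\bar{\mu}_{\agentnum}$, and $\harmonic_{\agentnum}-\harmonic_{0}=\harmonic_{\agentnum}$), we get, since $\pert\bar{\mu}_{\agentnum}=\rewardfun(1-\contract)$,
\begin{equation*}
    1-\bar{\welfaredis}(\rewardfun(1-\contract)) \geq \ee^{-\pert\harmonic_{\agentnum}}.
\end{equation*}

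Next, I would translate this tail bound back to $\welfaredis$ via the identity $1-\welfaredis(\temp)=(1-\welfaredis(0))(1-\bar{\welfaredis}(\temp))$ for $\temp\ge 0$, obtaining $1-\welfaredis(\rewardfun(1-\contract))\geq (1-\welfaredis(0))\ee^{-\pert\harmonic_{\agentnum}}$. Substituting this into the \Cref{lemma:repeated proof} bound gives
\begin{equation*}
    \utip(\contract) \geq \pert\,\bar{\mu}_{\agentnum}\Bigl(1-\bigl(1-(1-\welfaredis(0))\ee^{-\pert\harmonic_{\agentnum}-1}\bigr)^{\agentnum}\Bigr).
\end{equation*}

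Finally, the remaining step is to show $\bar{\mu}_{\agentnum}\ge \utipfb$. This follows from the pointwise comparison $\bar{\welfaredis}(\temp)\le \welfaredis(\temp)$ for all $\temp\geq 0$, which is immediate from $\bar{\welfaredis}(\temp)-\welfaredis(\temp)=\frac{\welfaredis(0)(\welfaredis(\temp)-1)}{1-\welfaredis(0)}\le 0$. Hence $1-\bar{\welfaredis}^{\agentnum}(\temp)\ge 1-\welfaredis^{\agentnum}(\temp)$, and integrating both sides over $[0,\infty)$ yields $\bar{\mu}_{\agentnum}\ge \int_{0}^{\infty}(1-\welfaredis^{\agentnum}(\temp))\dd \temp = \utipfb$. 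Plugging this in gives the claimed inequality. The main subtlety, which is the one place I would be careful, is verifying the preservation of the MHR property under conditioning on $\welfare\ge 0$ (so that \Cref{lemma: MHR concentration} legitimately applies to $\bar{\welfaredis}$), and checking that the parameter range $\pert\in[0,1]$ implicitly required by that concentration lemma is consistent with the regime of contracts $\contract$ under consideration.
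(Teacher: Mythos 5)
Your proposal is correct and follows essentially the same route as the paper's proof: it invokes \Cref{lemma:repeated proof} with $\tempnum=1$, applies the MHR concentration bound of \Cref{lemma: MHR concentration} to the conditional distribution $\bar{\welfaredis}$, translates the tail bound back to $\welfaredis$ via $1-\welfaredis(\temp)=(1-\welfaredis(0))(1-\bar{\welfaredis}(\temp))$, and uses $\bar{\welfaredis}(\temp)\le\welfaredis(\temp)$ to get $\utipfb\le\bar{\mu}_{\agentnum}$. The caveat you flag about the range $\pert\in[0,1]$ required by the concentration lemma is a real (shared) subtlety that the paper's proof also leaves implicit, but it is satisfied for the contract actually chosen when the lemma is applied in \Cref{thm:r_fb asymp lb}.
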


\begin{proof}
    By \Cref{lemma:repeated proof}, we have:
    \begin{align*}
        \utip(\contract) \geq \rewardfun (1 - \contract) \parent{1 -  \parent{1 - \frac{1}{\ee}(1 - \welfaredis(\rewardfun (1 - \contract)))}^{\agentnum}}.
    \end{align*}
    Observe that the probability distribution function satisfies that 
    \begin{equation*}
        \bar{\welfaredis}(\temp) 
        = \frac{\welfaredis(\temp) - \welfaredis(0)}{1 - \welfaredis(0)}
        \leq \welfaredis(\temp)
    \end{equation*}
    for every $\temp \in [0, \infty)$.
    Recall that $\bar{\mu}_{\agentnum}$ is the expected value of the first-order statistic derived from $\agentnum$ i.i.d.\ random variables following the distribution $\bar{\welfaredis}$.
    We have
    \begin{align*}
        \utipfb
        = \expect[\welfareprf \sim \welfaredis^{\otimes \agentnum}]{\plus{\welfaremax}}
        = \int_{0}^{\infty} (1 - \welfaredis^{\agentnum}(\temp)) \cdot \dd \temp
        \leq \int_{0}^{\infty} (1 - \bar{\welfaredis}^{\agentnum}(\temp)) \cdot \dd \temp
        = \bar{\mu_{\agentnum}}.
    \end{align*}
    Substituting the bounds of $\utip(\contract)$ and $\utipfb$, we obtain that
    \begin{align*}
        \frac{\utip(\contract)}{\utipfb}
        \geq{}& \frac{\pert \cdot \bar{\mu}_{\agentnum} \parent{1 - \parent{1 - \frac{1}{\ee}(1 - \welfaredis(\pert \cdot \bar{\mu}_{\agentnum})}^{\agentnum}}}{\bar{\mu}_{\agentnum}} \\
        ={}& \pert \parent{1 - \parent{1 - \frac{1}{\ee}(1 - \welfaredis(\pert \cdot \bar{\mu}_{\agentnum})}^{\agentnum}} \\
        ={}& \pert \cdot \parent{1 - \parent{1 - \frac{1}{\ee}(1 - \welfaredis(0))(1 - \bar{\welfaredis}(\pert \cdot \bar{\mu}_{\agentnum}))}^{\agentnum}}.
    \end{align*}

    Notice that $\bar{\welfaredis}$ is MHR since $\welfaredis$ is MHR.
    According to \Cref{lemma: MHR concentration}, we have
    \begin{equation*}
        \bar{\welfaredis}(\pert \cdot \bar{\mu}_{\agentnum}) \leq 1 - \ee^{-\pert \cdot \harmonic_{\agentnum}}.
    \end{equation*}
    Therefore, we can further obtain that
    \begin{align*}
        \frac{\utip(\contract)}{\utipfb}
        \geq{}& \pert \cdot \parent{1 - \parent{1 - \frac{1}{\ee}(1 - \welfaredis(0)) \ee^{-\pert \cdot \harmonic_{\agentnum}}}^{\agentnum}} \\
        ={}& \pert \cdot \parent{1 - \parent{1 - (1 - \welfaredis(0)) \ee^{-\pert \harmonic_{\agentnum} - 1}}^{\agentnum}},
    \end{align*}
    which completes the proof.
\end{proof}

Now we are ready to prove \Cref{thm:r_fb asymp lb}.

\begin{proof}[Proof of \Cref{thm:r_fb asymp lb}.]
    Consider the design of setting $\contract^{\dag} = 1 - \frac{\bar{\mu}_{\agentnum}}{r} (1 - \frac{\ln \harmonic_{\agentnum}}{\harmonic_{\agentnum}})$.
    According to the definition of $\pert$ in \Cref{lemma: lb of fb ratio}, we have $\pert^{\dag} = 1 - \frac{\ln \harmonic_{\agentnum}}{\harmonic_{\agentnum}}$.
    In addition, using the lower bound of $\frac{\utip(\contract)}{\utipfb}$ in \Cref{lemma: lb of fb ratio}, we obtain that
    \begin{align*}
        \frac{1}{\ratiofb} 
        ={} \frac{\utip(\contract)}{\utipfb}
        \geq{} \pert^{\dag} \cdot \parent{1 - \parent{1 - \frac{1 - \welfaredis(0)}{\ee^{\pert^{\dag} \harmonic_{\agentnum} + 1}}}^{\agentnum}}
        \geq{} \parent{1 - \frac{\ln \harmonic_{\agentnum}}{\harmonic_{\agentnum}}} \parent{1 - \parent{1 - \frac{1 - \welfaredis(0)}{\ee^{\harmonic_{\agentnum} - \ln \harmonic_{\agentnum} + 1}}}^{\agentnum}}.
    \end{align*}

    Notice that $\ln \agentnum \leq \harmonic_{\agentnum} \leq \ln \agentnum + \gamma + \frac{1}{2} \leq \ln \agentnum + 2$, where $\gamma$ is the Euler constant.
    Thus, we have
    \begin{equation*}
         \ee^{\harmonic_{\agentnum} - \ln \harmonic_{\agentnum} + 1}
         \leq \ee^{\ln \agentnum - \ln \ln \agentnum + 2}
         = \frac{\agentnum}{\ln \agentnum} \cdot \ee^{2}.
    \end{equation*}
    
    Consequently, since $\welfaredis(0) < 1$, we can further obtain the asymptotic lower bound as follows.
    \begin{align*}
        \frac{1}{\ratiofb} 
        \geq{}& \parent{1 - \frac{\ln \harmonic_{\agentnum}}{\harmonic_{\agentnum}}} \parent{1 - \parent{1 - \frac{1 - \welfaredis(0)}{\ee^{\harmonic_{\agentnum} - \ln \harmonic_{\agentnum} + 1}}}^{\agentnum}} \\
        \geq{}& \parent{1 - \frac{\ln \harmonic_{\agentnum}}{\harmonic_{\agentnum}}} \parent{1 - \parent{1 - \frac{1 - \welfaredis(0)}{\ee^{2}} \cdot \frac{\ln \agentnum}{\agentnum}}^{\agentnum}} \\
        ={}& \parent{1 - \bigO{\agentnum}{\frac{\log \log \agentnum}{\log \agentnum}}} \parent{1 - \agentnum^{-\bigO{\agentnum}{1}}} \\
        ={}& 1 - \bigO{\agentnum}{\frac{\log \log \agentnum}{\log \agentnum}}.
    \end{align*}
    Finally, since $\ratiosb \leq \ratiofb$, we have
    \begin{equation*}
        \ratiosb 
        \leq \ratiofb
        \leq \parent{1 - \frac{\ln \harmonic_{\agentnum}}{\harmonic_{\agentnum}}}^{-1} \parent{1 - \parent{1 - \frac{1 - \welfaredis(0)}{\ee^{\harmonic_{\agentnum} - \ln \harmonic_{\agentnum} + 1}}}^{\agentnum}}^{-1}
        = 1 + \bigO{\agentnum}{\frac{\log \log \agentnum}{\log \agentnum}},
    \end{equation*}
    which completes the proof.
\end{proof}
 
\subsection{Proof of Theorem~\ref{thm:prin's utility under ap}}
\label{apx:prin's utility under ap proof}

\thmOptContractUnderAP*
\begin{proof}
    For any optimal contract $\ctrctprfopt$, since agents have identical expected reward profiles $\rwdprf$, there exists a linear contract $\ctrctlnropt$ such that 
    \begin{align*}
        \inner{\ctrctprfopt}{\rwdprf} = \inner{\ctrctlnropt \cdot \ctrctone}{\rwdprf},
    \end{align*}
    which implies that 
    \begin{align*}
        \inner{\ctrctone - \ctrctprfopt}{\rwdprf} = \inner{\ctrctone - \ctrctlnropt \cdot \ctrctone}{\rwdprf}.
    \end{align*}

    Therefore, we can obtain that the agents' \cwelfare $\cwf_{\ctrctprfopt} = \cwf_{\ctrctlnropt}$ and the corresponding distribution $\valdis_{\ctrctprfopt} = \valdis_{\ctrctlnropt}$.
    It follows that the principal's utilities are equivalent under these two contracts, i.e.,
    \begin{equation*}
        \inner{\ctrctone - \ctrctprfopt}{\rwdprf} \cdot \parent{1 - \welfaredis^{\agentnum}_{\ctrctprfopt}(\reserve^{(\agentnum)}_{\ctrctprfopt})} = \inner{\ctrctone - \ctrctlnropt \cdot \ctrctone}{\rwdprf} \cdot \parent{1 - \welfaredis^{\agentnum}_{\ctrctlnropt}(\reserve^{(\agentnum)}_{\ctrctlnropt})}
    \end{equation*}
    which completes the proof as desired.
\end{proof} 
\subsection{Proof of Theorem~\ref{thm:regular bounds under ap}}
\label{apx:regular bounds under ap proof}
In this section, we prove \Cref{thm:regular bounds under ap}.

\thmRegularBoundUnderAP*

\begin{proof}
     We separately show the upper bound and the lower bound parts, which are analogously to that of \Cref{thm:regular bounds}.

    \xhdr{Upper bounding \podm and \poa.}
    We first consider the upper bounds.
    Recall that \cref{equation: U* >= phi^(n)(z)pr} in the proof of \Cref{thm:regular bounds} shows that 
    \begin{align*}
        \utip^{*}
        = \max\limits_{\temp\in\supp{\welfaredis}} \virval(\temp)\parent{1 - \welfaredis^{\agentnum}(\temp)}
        \geq \max\limits_{\temp\in\supp{\welfaredis}} \virval^{(\agentnum)}\parent{\temp)(1 - \welfaredis^{\agentnum}(\temp)}
        = \utippost.
    \end{align*}
    Thereafter, the remaining part of this proof is identical to that of \Cref{thm:regular bounds}.
    
    \xhdr{Lower bounding \podm and \poa.}
    Now we analyze the lower bound part of the two ratios, considering \Cref{example:regular bounds} as well.
    Note that for a single agent (i.e, $\agentnum = 1$), we have $\utip^{*} = \utippost$.
    Therefore, \Cref{example:regular bounds} still serves as the example of a lower bound for the two ratios and the analysis is identical to that of \Cref{thm:regular bounds}.
\end{proof} 
\subsection{Proof of Theorem~\ref{thm:mhr bounds under ap}}
\label{apx:mhr bounds under ap proof}

In this section, we prove \Cref{thm:mhr bounds under ap}.

\thmMHRBoundUnderAP*

\begin{proof}
    We first show the upper bound parts and then show the lower bound part by analyzing~\Cref{example:mhr bounds} with a single agent.
    
    \xhdr{Upper bounding \podm.}
    We first show the upper bound of \podm.
    By \Cref{lemma:repeated proof} we have 
    \begin{align*}
        \utip^{*} 
        \geq \max_{\contract\in[0, 1]} \rewardfun (1 - \contract)(1 - \valdis_{\contract}^{\agentnum}(\reserve_{\contract}^{(\agentnum)})))
        = \utippost
        \geq \frac{1}{\ee} \cdot \reserve^{(\agentnum)}(1 - \welfaredis^{\agentnum}(\reserve^{(\agentnum)})).
    \end{align*}

    It follows that 
    \begin{align*}
        \ratiosb 
        {}= \frac{\utipsb}{\utippost}
        {}\leq \ee \cdot \frac{\utipsb}{\reserve^{(\agentnum)}(1 - \welfaredis^{\agentnum}(\reserve^{(\agentnum)}))}
        {}\overset{(a)}{=} \ee \cdot 1.2683
        {}\leq 3.448,
    \end{align*}
    where equality~(a) holds by \Cref{lemma:mhr OPA vs AP}.
    
    \xhdr{Upper bounding \poa.}
    Next, we consider the upper bound of \poa.
    We can also obtain the following from above: 
    \begin{align*}
        \utippost
        \geq \frac{1}{\ee} \reserve^{(\agentnum)}(1 - \welfaredis^{\agentnum}(\reserve^{(\agentnum)}))
        \geq \frac{1}{\ee} \utipfb,
    \end{align*}
    where the analysis is identical to that of \Cref{thm:mhr ratio bounds for n in naturals}.

    \xhdr{Lower bounding \podm and \poa.}
    Now we analyze the lower bound part of the two ratios by considering \Cref{example:mhr bounds} with only a single agent (i.e., $\agentnum = 1$).
    Since for a single agent, we have $\utip^{*} = \utippost$, \Cref{example:mhr bounds} still serves as the example of a lower bound for the two ratios. 
    Therefore, the analysis is identical to that of \Cref{thm:mhr ratio bounds for n in naturals}, where we assign $\agentnum = 1$.
\end{proof} 
\subsection{Proof of Theorem~\ref{thm:robust regular bounds}}
\label{apx:robust regular bounds proof}

In this section, we prove \Cref{thm:robust regular bounds}.

\robustbounds*

 Our analysis relies on the following technical lemmas.

\begin{lemma} \label{lem:opt-ratio-bound}
    Let $\lownum, \upnum \in \naturals$ with $\lownum \leq \upnum$.
    Then, for any regular distribution $\welfaredis$ and contract $\contract\in[0,1]$, we have
    \begin{equation*}
        \frac{\utip(\contract,\lownum)}{\utip(\contract,\upnum)}  
        \geq \frac{\lownum}{\upnum}.
    \end{equation*}
\end{lemma}

\begin{proof}
    Recall that $\utip(\contract,\agentnum) = \virval(\temp) (1 - \welfaredis^{\agentnum}(\temp))$, where $\temp = \rewardfun(1-\contract) + \valreserve_{\contract}$.
    We have
    \begin{align*}
        \frac{\utip(\contract,\lownum)}{\utip(\contract,\upnum)}
        =\frac{\virval(\temp) (1 - \welfaredis^{\lownum}(\temp))}{\virval(\temp) (1 - \welfaredis^{\upnum}(\temp))}
        ={} \frac{1 - \welfaredis^{\lownum}(\temp)}{1 - \welfaredis^{\upnum}(\temp)}.
    \end{align*}

    Consider the function $x \mapsto \frac{1 - x^{\lownum}}{1 - x^{\upnum}}$ for $x \in [0, 1]$.
    One can easily verify that this function is decreasing for $x \in [0, 1]$.
    Thus, we have
    \begin{align*}
        \frac{1 - \welfaredis^{\lownum}(\temp)}{1 - \welfaredis^{\upnum}(\temp)}
        \geq \lim_{x \to 1}\frac{1 - x^{\lownum}}{1 - x^{\upnum}}
        = \lim_{x \to 1} \frac{-\lownum x^{\lownum - 1}}{-\upnum x^{\upnum - 1}}
        = \frac{\lownum}{\upnum},
    \end{align*}
    which gives the desired inequality.
\end{proof}

\begin{lemma} \label{lem:benchmark-ratio-bound}
    Let $\lownum, \upnum \in \naturals$ with $\lownum \leq \upnum$.
    Then, for any regular distribution $\welfaredis$, we have
    \begin{equation*}
        \frac{\utipfb(\lownum)}{\utipfb(\upnum)} 
        \geq \frac{\utipsb(\lownum)}{\utipsb(\upnum)} 
        \geq \frac{\lownum}{\upnum}.
    \end{equation*}
\end{lemma}

\begin{proof}
    We first prove the second inequality.
    For any $\agentnum \in \mathbb{R}^{+}$, we can rewrite $\utipfb(\agentnum)$ as
    \begin{equation*}
        \utipsb(\agentnum) 
        = \int_{0}^{\infty} \virval(\temp) (1 - \welfaredis^{\agentnum}(\temp)) \cdot \dd \temp
        = \int_{0}^{\infty} \virval(\temp) \int_{0}^{\agentnum} \welfaredis^{u}(\temp) (- \ln \welfaredis(\temp)) \cdot \dd u \dd \temp.
    \end{equation*}
    Since the above integration is non-negative and measurable, according to Tonelli's theorem, we can exchange the order of integration as
    \begin{equation*}
        \utipsb(\agentnum) 
        = \int_{0}^{\agentnum} \int_{0}^{\infty} \virval(\temp) \welfaredis^{u}(\temp) (- \ln \welfaredis(\temp)) \cdot \dd \temp \dd u.
    \end{equation*}

    Define the function $g(u) = \int_{0}^{\infty} \virval(\temp) \welfaredis^{u}(\temp) (- \ln \welfaredis(\temp)) \cdot \dd \temp$.
    We know $\utipsb(\lownum) = \int_{0}^{\lownum} g(u) \cdot \dd u$ and $\utipsb(\upnum) = \int_{0}^{\upnum} g(u) \cdot \dd u$.
    Thus, the desired inequality is equivalent to
    \begin{equation*}
        \frac{1}{\lownum} \int_{0}^{\lownum} g(u) \cdot \dd u
        \geq \frac{1}{\upnum} \int_{0}^{\upnum} g(u) \cdot \dd u.
    \end{equation*}
    Therefore, it suffices to show that $g(\cdot)$ is non-increasing according to the monotonicity of the average integral.

    Fix $u_{1}, u_{2} \in \mathbb{R}^{+}$ with $u_{1} \leq u_{2}$.
    To show that $g(u_{2}) \leq g(u_{1})$, we employ the integral Chebyshev inequality: for a measure $\mu$ and two functions $\xi$ and $\zeta$ with opposite monotonicity.
    it holds
    \begin{equation*}
        \int \dd \mu \int \xi \zeta \cdot \dd \mu \leq \int \xi \cdot \dd \mu \int \zeta \cdot \dd \mu.
    \end{equation*}
    Let $\dd \mu(\temp) = \welfaredis^{u_{1}}(\temp) \cdot \dd \temp$.
    Define $\xi(\temp) = \virval(\temp) (- \ln \welfaredis(\temp))$ and $\zeta(\temp) = \welfaredis^{u_{2} - u_{1}}(\temp)$.
    Given that $\welfaredis$ is regular, $\xi(\cdot)$ is non-increasing, while $\zeta(\cdot)$ is non-decreasing.
    Applying the inequality yields
    \begin{align*}
        g(u_{2})
        ={}& \int_{0}^{\infty} \xi(\temp) \zeta(\temp) \cdot \dd \mu(\temp) \\    
        \leq{}& \frac{1}{\int_{0}^{\infty} \dd \mu(\temp)} \int_{0}^{\infty} \xi(\temp) \cdot \dd \mu(\temp) \int_{0}^{\infty} \zeta(\temp) \cdot \dd \mu(\temp) \\
        ={}& g(u_{1}) \cdot \frac{\int_{0}^{\infty} \welfaredis^{u_{2}}(\temp) \cdot \dd \temp}{\int_{0}^{\infty} \welfaredis^{u_{1}}(\temp) \cdot \dd \temp} \\
        \leq{}& g(u_{1}),
    \end{align*}
    where the last inequality holds since $\welfaredis^{u}(\temp)$ is non-increasing in $u$ for any fixed $\temp$.
    Consequently, $g(u)$ is non-increasing for $u \in \mathbb{R}^{+}$, implying the desired inequality.

    Next, we prove the first inequality.
    By definition, it is sufficient to prove that
    \begin{equation*}
        \frac{\int_{0}^{\infty} (1 - \welfaredis^{\lownum}(\temp)) \cdot \dd \temp}{\int_{0}^{\infty} (1 - \welfaredis^{\upnum}(\temp)) \cdot \dd \temp}
        \geq \frac{\int_{0}^{\infty} \virval(\temp) (1 - \welfaredis^{\lownum}(\temp)) \cdot \dd \temp}{\int_{0}^{\infty} \virval(\temp) (1 - \welfaredis^{\upnum}(\temp)) \cdot \dd \temp}.
    \end{equation*}
    We can rearrange the above inequality as
    \begin{equation*}
        \int_{0}^{\infty} \virval(\temp) (1 - \welfaredis^{\upnum}(\temp)) \cdot \dd \temp \int_{0}^{\infty} (1 - \welfaredis^{\lownum}(\temp^{\dag})) \cdot \dd \temp^{\dag}
        \geq \int_{0}^{\infty} \virval(\temp^{\dag}) (1 - \welfaredis^{\lownum}(\temp^{\dag})) \cdot \dd \temp^{\dag} \int_{0}^{\infty} (1 - \welfaredis^{\upnum}(\temp)) \cdot \dd \temp,
    \end{equation*}
    which is equivalent to
    \begin{equation*}
        \int_{0}^{\infty} \int_{0}^{\infty} \parent{\virval(\temp) - \virval(\temp^{\dag})} (1 - \welfaredis^{\upnum}(\temp)) (1 - \welfaredis^{\lownum}(\temp^{\dag})) \cdot \dd \temp \dd \temp^{\dag} \geq 0.
    \end{equation*}
    
    Exchanging the variables $\temp$ and $\temp^{\dag}$ in the double integral and taking the average, we can further derive that
    \begin{align*}
        &\int_{0}^{\infty} \int_{0}^{\infty} \parent{\virval(\temp) - \virval(\temp^{\dag})} (1 - \welfaredis^{\upnum}(\temp)) (1 - \welfaredis^{\lownum}(\temp^{\dag})) \cdot \dd \temp \dd \temp^{\dag} \\
        ={}&\frac{1}{2}\int_{0}^{\infty} \int_{0}^{\infty} \parent{\virval(\temp) - \virval(\temp^{\dag})} \parent{(1 - \welfaredis^{\upnum}(\temp)) (1 - \welfaredis^{\lownum}(\temp^{\dag})) - (1 - \welfaredis^{\upnum}(\temp^{\dag})) (1 - \welfaredis^{\lownum}(\temp))} \cdot \dd \temp \dd \temp^{\dag} \\
        ={}& \int_{0}^{\infty} \int_{\temp^{\dag}}^{\infty} \parent{\virval(\temp) - \virval(\temp^{\dag})} \parent{(1 - \welfaredis^{\upnum}(\temp)) (1 - \welfaredis^{\lownum}(\temp^{\dag})) - (1 - \welfaredis^{\upnum}(\temp^{\dag})) (1 - \welfaredis^{\lownum}(\temp))} \cdot \dd \temp \dd \temp^{\dag}.
    \end{align*}
    Since the integrand is symmetric in $\temp$ and $\temp^{\dag}$, the second equality holds by splitting the integral into two parts where $\temp \geq \temp^{\dag}$ and $\temp < \temp^{\dag}$.

    Therefore, it is sufficient to show
    \begin{equation*}
        \int_{0}^{\infty} \int_{\temp^{\dag}}^{\infty} \parent{\virval(\temp) - \virval(\temp^{\dag})} A(\temp, \temp^{\dag})
        \geq 0,
    \end{equation*}
    where $A(\temp, \temp^{\dag}) = (1 - \welfaredis^{\upnum}(\temp)) (1 - \welfaredis^{\lownum}(\temp^{\dag})) - (1 - \welfaredis^{\upnum}(\temp^{\dag})) (1 - \welfaredis^{\lownum}(\temp))$.
    One can easily verify that the function $x \mapsto \frac{1 - x^{\lownum}}{1 - x^{\upnum}}$ is decreasing for $x \in [0, 1]$.
    Thus, we know $A(\temp, \temp^{\dag}) \geq 0$ for any $\temp \geq \temp^{\dag}$.
    In addition, since $\welfaredis$ is a regular distribution, its virtual value function $\virval(\cdot)$ is non-decreasing.
    Thus, we have $\virval(\temp) - \virval(\temp^{\dag}) \geq 0$ for any $\temp \geq \temp^{\dag}$.
    Consequently, the integrand is always non-negative for any $\temp \geq \temp^{\dag}$, which proves the desired inequality.
\end{proof}

Now, we are ready to prove \Cref{thm:robust regular bounds}.

\begin{proof}[Proof of \Cref{thm:robust regular bounds}.]
    Recall that the regular \contribution distribution $\welfaredis$ has the bounded support $\supp{\welfaredis}=[\lowsupp, \upsupp]$, $(0\leq\lowsupp\leq\upsupp\leq\rewardfun)$, and the ratio of its upper and lower bounds satisfies $1 \leq \upsupp/\lowsupp \leq \ratiolu$.
    
    Consider the design of setting $\contract^{\dag}$ as the optimal contract for some number of agents $\agentnum^{\dag}$, i.e., 
    \begin{align*}
        \contract^{\dag} 
        = \argmax_{\contract\in[0, 1]} \utip(\contract, \agentnum^{\dag})
        = \argmax_{\temp\in\supp{\welfaredis}} \virval(\temp) (1 - \welfaredis^{\agentnum^{\dag}}(\temp)).
    \end{align*}
    Let $n^{*} \in [\lownum: \upnum]$ denote the integer that maximizes $\frac{\utipfb(\agentnum)}{\utip(\contract^{\dag}, \agentnum)}$ under this contract $\contract^{\dag}$, i.e.,
    \begin{align*}
        n^{*} = \argmax_{\agentnum\in[\lownum:\upnum]} \frac{\utipfb(\agentnum)}{\utip(\contract^{\dag}, \agentnum)}.
    \end{align*}
    By definition, we have
    \begin{align*}
        \ratiofb(\lownum,\upnum)
        = \min_{\contract\in[0,1]} \max_{\agentnum\in[\lownum:\upnum]} \frac{\utipfb(\agentnum)}{\utip(\contract, \agentnum)}
        \leq \max_{\agentnum\in[\lownum:\upnum]} \frac{\utipfb(\agentnum)}{\utip(\contract^{\dag}, \agentnum)}
        = \frac{\utipfb(\agentnum^{*})}{\utip(\contract^{\dag}, \agentnum^{*})}.
    \end{align*}
    Using the monotonicity of $\utip(\contract^{\dag}, \cdot)$ and $\utipfb(\cdot)$, we bound this ratio according to the relation between $\agentnum^{\dag}$ and $n^{*}$.
    When $\agentnum^{\dag} \leq n^{*}$, by~\Cref{lem:benchmark-ratio-bound}, we have
    \begin{align*}
        \ratiofb(\lownum,\upnum)
        \leq{} \frac{\utipfb(\agentnum^{*})}{\utip(\contract^{\dag}, \agentnum^{*})}
        ={} \frac{\utipfb(\agentnum^{*})}{\utipfb(\agentnum^{\dag})} \cdot \frac{\utipfb(\agentnum^{\dag})}{\utip(\contract^{\dag}, \agentnum^{*})}
        \leq \frac{\agentnum^{*}}{\agentnum^{\dag}} \cdot \frac{\utipfb(\agentnum^{\dag})}{\utip(\contract^{\dag}, \agentnum^{\dag})}
        \leq \frac{\upnum}{\lownum} \cdot \bigO{\ratiolu}{(\log \ratiolu)^{2}}.
    \end{align*}    
    When $\agentnum^{\dag} > n^{*}$, by~\Cref{lem:opt-ratio-bound}, we have
    \begin{align*}
        \ratiofb(\lownum,\upnum)
        \leq{} \frac{\utipfb(\agentnum^{*})}{\utip(\contract^{\dag}, \agentnum^{*})}
        ={} \frac{\utipfb(\agentnum^{*})}{\utip(\contract^{\dag}, \agentnum^{\dag})} \cdot \frac{\utip(\contract^{\dag}, \agentnum^{\dag})}{\utip(\contract^{\dag}, \agentnum^{*})}
        \leq \frac{\utipfb(\agentnum^{\dag})}{\utip(\contract^{\dag}, \agentnum^{\dag})} \cdot \frac{\agentnum^{\dag}}{\agentnum^{*}}
        \leq \bigO{\ratiolu}{(\log \ratiolu)^{2}} \cdot \frac{\upnum}{\lownum}.
    \end{align*}    
    Similarly, we can also obtain that by~\Cref{lem:opt-ratio-bound} that
    \begin{equation*}
        \ratiosb(\lownum,\upnum)
        \leq \max_{\agentnum\in[\lownum:\upnum]} \frac{\utipsb(\agentnum)}{\utip(\contract^{\dag}, \agentnum)}
        \leq \frac{\upnum}{\lownum} \cdot \bigO{\ratiolu}{\log \ratiolu}.     
    \end{equation*}
    which completes the proof.
\end{proof}
 
\subsection{Proof of Theorem~\ref{thm:robust MHR bounds}}
\label{apx:robust MHR bounds proof}

In this section, we prove \Cref{thm:robust MHR bounds}. 

\mhrrobustbounds*

\begin{lemma}[\citealp{BFFBDS-17}, Lemma 5.3] \label{lem:ratio of ELSO}
    Let $\welfaredis$ be a positive MHR distribution.
    Let $\mu_{\agentnum}$ be the expected value of the first-order static out of $\agentnum$ i.i.d.\ variables drawing from distribution $\welfaredis$.
    For all $\lownum \leq \upnum$, it holds
    \begin{equation*}
        \frac{\mu_{\lownum}}{\mu_{\upnum}} \geq \frac{\harmonic_{\lownum}}{\harmonic_{\upnum}} \geq \frac{\ln \lownum}{\ln \upnum}.
    \end{equation*}
\end{lemma}

Our analysis relies on the following technical lemmas.

\begin{restatable}{lemma}{lemUFB}
\label{lem:U_fb}
    Let $\welfaredis$ be a distribution and $\bar{\welfaredis}$ be the conditional distribution conditioning on the event that the random value following $\welfaredis$ is non-negative \textup{(}i.e., the distribution function satisfies that $\bar{\welfaredis}(\temp) = \frac{\welfaredis(\temp) - \welfaredis(0)}{1 - \welfaredis(0)}$\textup{)}.
    Then we have
    \begin{equation*}
        \utipfb(\agentnum) = \sum_{i = 0}^{\agentnum} \binom{\agentnum}{i} (1 - \welfaredis(0))^{i}\welfaredis(0)^{\agentnum - i} \cdot \bar{\mu}_{i},
    \end{equation*}
    where $\bar{\mu}_{\agentnum}$ denotes the expected value of the first-order statistic out of $\agentnum$ i.i.d.\ variables drawing from distribution $\bar{\welfaredis}$.
    
    Moreover, if $\bar{\welfaredis}$ is the standard exponential distribution, \textup{(}i.e., $\bar{\welfaredis}(\temp) = 1 - \ee^{-\temp}$ for $\temp \geq 0$; and $\bar{\welfaredis}(\temp) = 0$ otherwise\textup{)}, then we have
    \begin{equation*}
        \utipfb(\agentnum) = \int_{0}^{1} \frac{1 - ((1 - \welfaredis(0))\temp + \welfaredis(0))^{\agentnum}}{1 - \temp} \dd \temp.
    \end{equation*}
\end{restatable}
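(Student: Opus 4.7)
The plan is to establish the first identity by conditioning on the number of agents whose contribution is non-negative, and then evaluate the resulting binomial sum for the exponential case via the standard integral representation of the harmonic numbers.

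First I would rewrite $\utipfb(\agentnum) = \expect[\welfareprf \sim \welfaredis^{\otimes \agentnum}]{\plus{\welfaremax}}$ and let $N$ denote the number of indices $j \in \agents$ with $\welfare_j \geq 0$. By independence and the definition of $\welfaredis(0)$, $N$ is binomially distributed with parameters $(\agentnum, 1 - \welfaredis(0))$, so $\prob{N = i} = \binom{\agentnum}{i}(1 - \welfaredis(0))^{i} \welfaredis(0)^{\agentnum - i}$. Conditional on $\{N = i\}$ with $i \geq 1$, the $i$ non-negative values are i.i.d.\ samples from $\bar{\welfaredis}$, while the remaining agents contribute values that are strictly negative and hence irrelevant to $\plus{\welfaremax}$; so $\expect{\plus{\welfaremax} \mid N = i} = \bar{\mu}_{i}$. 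When $N = 0$ all contributions are negative and $\plus{\welfaremax} = 0$, matching the convention $\bar{\mu}_{0} = 0$. Summing over $i$ via the tower rule yields the first identity.

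For the second identity, specialize to $\bar{\welfaredis}(\temp) = 1 - \ee^{-\temp}$, under which the maximum of $i$ i.i.d.\ standard exponentials has expectation $\bar{\mu}_{i} = \harmonic_{i}$. Using the integral representation $\harmonic_{i} = \int_{0}^{1} \frac{1 - x^{i}}{1 - x} \dd x$, substituting into the first identity, and interchanging the sum with the integral gives
\begin{equation*}
\int_{0}^{1} \frac{1}{1-x}\sum_{i=0}^{\agentnum}\binom{\agentnum}{i}(1 - \welfaredis(0))^{i} \welfaredis(0)^{\agentnum - i}(1 - x^{i}) \dd x.
\end{equation*}
Applying the binomial theorem separately to the two summands inside (namely $\sum_i \binom{\agentnum}{i}(1-\welfaredis(0))^i \welfaredis(0)^{\agentnum-i} = 1$ and $\sum_i \binom{\agentnum}{i}((1-\welfaredis(0))x)^i \welfaredis(0)^{\agentnum-i} = ((1-\welfaredis(0))x + \welfaredis(0))^{\agentnum}$) collapses the inner sum to $1 - ((1 - \welfaredis(0))x + \welfaredis(0))^{\agentnum}$, delivering the second identity after renaming $x$ as $\temp$.

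The main concern is handling edge cases rather than any substantive difficulty. The point mass that $\welfaredis$ may carry at $0$ must be allocated consistently (e.g., $\bar{\welfaredis}$ includes it) so that $\{\welfare_j \geq 0\}$ and the conditional distribution $\bar{\welfaredis}$ are defined compatibly and the conditioning becomes exact. One should also check that the apparent $0/0$ behavior of the integrand at $\temp = 1$ is removable: L'Hopital gives a limit of $\agentnum(1 - \welfaredis(0))$, so the integrand is bounded, and the interchange of summation and integration is justified by the fact that the sum is finite. Once these are in place, both identities follow directly from the outlined conditioning and binomial expansion.
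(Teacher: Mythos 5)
Your proof is correct. For the first identity, your conditioning on the number $N$ of agents with non-negative contribution is the probabilistic twin of the paper's algebraic step: the paper writes $\welfaredis(\temp) = (1-\welfaredis(0))\,\bar{\welfaredis}(\temp) + \welfaredis(0)$ for $\temp \ge 0$, expands $\welfaredis^{\agentnum}(\temp)$ by the binomial theorem, and integrates term by term — exactly the decomposition your binomial variable $N$ encodes — so that part is essentially the same argument. For the second identity you take a genuinely different and shorter route: you substitute the integral representation $\harmonic_i = \int_0^1 \frac{1-x^i}{1-x}\,\dd x$ and collapse the sum over $i$ with a single application of the binomial theorem, landing directly on the stated integral. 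The paper instead uses the alternating-sum identity for $\harmonic_i$ in terms of $\binom{i}{j}\frac{(-1)^{j}}{j}$, a double-sum/trinomial-expansion manipulation, and a final change of variables $\temp \mapsto 1-\temp$ to reach the same expression; your version avoids all of that sign bookkeeping and is cleaner. Your edge-case remarks are also apt: under \Cref{asp:welfare regular} the contribution distribution has no atom at $0$ (mass is permitted only at the right endpoint, and $\welfaredis(0)<1$), so the event $\{\welfare_j \ge 0\}$ and the conditional law $\bar{\welfaredis}$ line up exactly, and the removable singularity of the integrand at $\temp = 1$, with limit $\agentnum(1-\welfaredis(0))$, together with the finiteness of the sum, justifies the interchange of sum and integral.
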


\begin{proof}
    Recall that $\bar{\mu}_{\agentnum}$ is the excepted value of the first-order static out of $\agentnum$ i.i.d.\ variables drawing from distribution $\bar{\welfaredis}$.
    We also let $\bar{\mu}_{0} = 0$.
    Define $t = 1 - \welfaredis(0)$ and $B(\agentnum, k) = \binom{\agentnum}{k} t^{k}(1 - t)^{\agentnum - k}$.
    Using the binomial theorem as follows, the first-best benchmark $\utipfb(\agentnum)$ can be represented by $\bar{\mu}_{i}$ ($i \in \agents$) as follows.
    \begin{align*}
        \utipfb(\agentnum)
        ={}& \expect[\welfareprf \sim \welfaredis^{\otimes \agentnum}]{\plus{\welfaremax}} \\
        ={}& \int_{0}^{\infty} \temp \cdot \dd \welfaredis^{\agentnum}(\temp) \\
        ={}& \int_{0}^{\infty} \temp \cdot \dd (t\bar{\welfaredis}(\temp) + (1 - t))^{\agentnum} \\
        ={}& \int_{0}^{\infty} \temp \cdot \dd \sum_{i = 0}^{\agentnum} B(\agentnum, i) \bar{\welfaredis}^{i}(\temp) \\
        ={}& \sum_{i = 0}^{\agentnum} B(\agentnum, i) \cdot \int_{0}^{\infty} \temp \cdot \dd \bar{\welfaredis}^{i}(\temp) \\
        ={}& \sum_{i = 0}^{\agentnum} B(\agentnum, i) \cdot \bar{\mu}_{i}.
    \end{align*}

    When $\bar{\welfaredis}$ is the standard exponential distribution, the expected values of its first statistic order is $\bar{\mu}_{\agentnum} = \harmonic_{\agentnum}$ (See Equation (4.6.6) in \cite{H-93}).
    By using the identical equalities $\harmonic_{\agentnum} = \sum_{j = 1}^{\agentnum} \binom{\agentnum}{j} \frac{(-1)^{j}}{j}$ and $\frac{(-1)^{j}}{j} = \int_{0}^{1} -(-\temp)^{j - 1} \cdot \dd \temp$, we can derive that
    \begin{align*}
        \utipfb(\agentnum)
        ={}& \sum_{i = 0}^{\agentnum} B(\lownum, i) \harmonic_{i} \\
        ={}& \sum_{i = 0}^{\agentnum} \binom{\agentnum}{i}t^{i} (1 - t)^{\agentnum - i} \cdot \sum_{j = 1}^{i} \binom{i}{j} \frac{(-1)^{j}}{j} \\
        ={}& \sum_{i = 0}^{\agentnum} \sum_{j = 1}^{i} \binom{\agentnum}{i} \binom{i}{j} t^{i} (1 - t)^{\agentnum - i} \cdot \int_{0}^{1} -(-\temp)^{j - 1} \cdot \dd \temp \\
        ={}& \int_{0}^{1} -\sum_{i = 0}^{\agentnum} \sum_{j = 1}^{i} \binom{\agentnum}{i} \binom{i}{j}t^{i} (1 - t)^{\agentnum - i}  (-\temp)^{j - 1} \cdot \dd \temp.
    \end{align*}
    Notice that $\sum_{i = 0}^{\agentnum} \sum_{j = 0}^{i} \binom{\agentnum}{i} \binom{i}{j} x^{i}y^{j}z^{\agentnum - i - j} = (x + y + z)^{\agentnum}$.
    We further have
    \begin{align*}
        \utipfb(\agentnum)
        ={}& \int_{0}^{1} -\sum_{i = 0}^{\agentnum} \sum_{j = 1}^{i} \binom{\agentnum}{i} \binom{i}{j}t^{i} (1 - t)^{\agentnum - i}  (-\temp)^{j - 1} \cdot \dd \temp \\
        ={}& \int_{0}^{1} \frac{1}{\temp} \sum_{i = 0}^{\agentnum} \sum_{j = 1}^{i} \binom{\agentnum}{i} \binom{i}{j}t^{i} (-(1 - t)\temp)^{j} (1 - t)^{\agentnum - i - j}   \cdot \dd \temp \\
        ={}& \int_{0}^{1} \frac{1}{\temp} \parent{\sum_{i = 0}^{\agentnum} \sum_{j = 0}^{i} \binom{\agentnum}{i} \binom{i}{j}t^{i} (-(1 - t)\temp)^{j} (1 - t)^{\agentnum - i - j} - \sum_{i = 0}^{\agentnum} \binom{\agentnum}{i} t^{i}(1 - t)^{\agentnum - i}} \cdot \dd \temp \\
        =& \int_{0}^{1} \frac{(t - (1 - t)\temp + 1 - t)^{\agentnum} - 1}{\temp} \cdot \dd \temp \\
        =& \int_{0}^{1} \frac{(1 - (1 - t)\temp)^{\agentnum} - 1}{\temp} \cdot \dd \temp.
    \end{align*}
    By replacing $\temp$ with $1 - \temp^{\dag}$, since $\temp \in [0, 1]$, we finally derive that
    \begin{align*}
        \utipfb(\agentnum)
        =& \int_{0}^{1} \frac{(1 - (1 - t)\temp)^{\agentnum} - 1}{\temp} \cdot \dd \temp. \\
        =& \int_{0}^{1} \frac{(1 - (1 - t)(1 - \temp^{\dag}))^{\agentnum}}{1 - \temp^{\dag}} \cdot \dd (1 - \temp^{\dag}) \\
        =& \int_{0}^{1} \frac{1 - (t\temp^{\dag} + 1 - t)^{\agentnum}}{1 - \temp^{\dag}} \cdot \dd \temp^{\dag}.
    \end{align*}
    which completes the proof.
\end{proof}

\begin{lemma}  
\label{lem:harmonic bound}
    Define $\hat{\harmonic}_{0} = 0$ and
    \begin{equation*}
        \hat{\harmonic}_{\agentnum} = \int_{0}^{1} \frac{1 - (t \temp + 1 - t)^{\agentnum}}{1 - \temp} \dd \temp
    \end{equation*}
    for any positive number $\agentnum\in\naturals^{+}$.
    Then, for any real $t \in [0, 1]$ and $\lownum \leq \upnum$, we have
    \begin{equation*}
        \frac{\hat{\harmonic}_{\lownum}}{\hat{\harmonic}_{\upnum}} 
        \geq \frac{\ln(\frac{1}{2} t\lownum + 1)}{\ln(\frac{1}{2} t\upnum + 1)}.
    \end{equation*}
\end{lemma}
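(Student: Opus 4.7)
The plan is to establish the stronger continuous monotonicity statement: for each fixed $t \in [0,1]$, the ratio $\hat{\harmonic}_{\agentnum}/\ln(1+\tfrac{1}{2}\agentnum t)$ is non-increasing in $\agentnum \geq 0$. First I would rewrite $\hat{\harmonic}_{\agentnum}$, via the substitution $s = 1-\temp$, in the cleaner form $\hat{\harmonic}_{\agentnum} = \int_{0}^{t}\frac{1-(1-v)^{\agentnum}}{v}\,\dd v$, and extend the definition to all real $\agentnum \geq 0$ by the same formula (with $\hat{\harmonic}_0 = 0$). Setting $A(\agentnum) := \hat{\harmonic}_{\agentnum}$ and $B(\agentnum) := \ln(1+\tfrac{1}{2}\agentnum t)$, both vanish at $\agentnum = 0$ and $B$ is strictly increasing, so by the standard monotone L'Hopital rule it suffices to prove $A'(\agentnum)/B'(\agentnum)$ is non-increasing in $\agentnum$.

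After computing $A'(\agentnum) = \int_0^t (1-v)^{\agentnum}\cdot\frac{-\ln(1-v)}{v}\,\dd v$ and $B'(\agentnum) = \frac{t}{2+\agentnum t}$, differentiating $A'/B'$ and collecting terms reduces the required monotonicity to the inequality $\frac{t}{2+\agentnum t}\,\mathbb{E}_{\mu_\agentnum}[\psi(V)] \leq \mathbb{E}_{\mu_\agentnum}[V\psi(V)^2]$, where $\mu_\agentnum$ is the probability measure with density proportional to $(1-v)^\agentnum$ on $[0,t]$ and $\psi(v) := -\ln(1-v)/v$. Since both $V\psi(V) = -\ln(1-V)$ and $\psi(V)$ are non-decreasing in $V$, the FKG/Chebyshev correlation inequality gives $\mathbb{E}_{\mu_\agentnum}[V\psi(V)^2] \geq \mathbb{E}_{\mu_\agentnum}[-\ln(1-V)]\cdot\mathbb{E}_{\mu_\agentnum}[\psi(V)]$, so it suffices to show
\[
\mathbb{E}_{\mu_{\agentnum}}\bigl[-\ln(1-V)\bigr] \;\geq\; \frac{t}{2+\agentnum t}.
\]
Computing the left-hand side explicitly by integration by parts yields $\tfrac{1}{\agentnum+1} + \tfrac{(1-t)^{\agentnum+1}\ln(1-t)}{1-(1-t)^{\agentnum+1}}$, and clearing denominators reduces the claim to the algebraic inequality $(2-t)\bigl(1-(1-t)^{m}\bigr) \geq m\bigl(2+(m-1)t\bigr)(1-t)^{m}(-\ln(1-t))$ with $m := \agentnum+1 \geq 1$.

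Substituting $y := -\ln(1-t) \geq 0$, the target becomes $G(y,m) \geq 0$ for all $y \geq 0$ and $m \geq 1$, where
\[
G(y,m) \;:=\; (1+e^{-y})(e^{my}-1) - my\bigl((m+1)-(m-1)e^{-y}\bigr).
\]
Routine differentiation shows $G(0,m) = G'(0,m) = G''(0,m) = 0$, while
\[
G'''(y,m) \;=\; m^{3}e^{my} + (m-1)^{3}e^{(m-1)y} + e^{-y}\bigl(1 + m(m-1)(3-y)\bigr).
\]
For $y \leq 3$ (and for $m = 1$ at all $y$), the bracketed factor is non-negative, so $G''' \geq 0$ term-by-term; for $y > 3$ and $m \geq 2$, a quick dominance estimate of the form $m^{3}e^{my} \gtrsim e^{-y}\cdot m(m-1)(y-3)$ absorbs the lone negative contribution. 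Three integrations from $0$ then yield $G \geq 0$, closing the chain.

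The hard part will be this last algebraic step: the bound $\mathbb{E}_{\mu_\agentnum}[-\ln(1-V)] \geq t/(2+\agentnum t)$ is tight to second order in $t$ at $t = 0$, so any naive Taylor estimate threaded through the FKG reduction fails to give the correct sign. The third-derivative approach above is the robust fix because it captures the entire $O(t^{3})$ margin $\tfrac{m^{3}}{3}y^{3}$, but verifying the global non-negativity of $G'''$ for $y > 3$ requires a careful (though elementary) exponential-versus-linear dominance bound. All remaining ingredients---the integral form of $\hat{\harmonic}_{\agentnum}$, the monotone L'Hopital principle, the FKG correlation step, and the integration-by-parts computation of $\mathbb{E}_{\mu_\agentnum}[-\ln(1-V)]$---are standard.
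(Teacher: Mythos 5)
Your proof is correct, but it takes a genuinely different route from the paper's. The paper works entirely discretely: it reduces the claim by induction to comparing consecutive difference quotients, exploits the closed form $\hat{\harmonic}_{\agentnum+1}-\hat{\harmonic}_{\agentnum}=\frac{1-(1-t)^{\agentnum+1}}{\agentnum+1}$ (its Lemma on $\Delta_{\agentnum}$), handles the logarithmic side via the monotone map $\temp\mapsto \temp\ln(1/\temp+1)$, and finishes with a single-variable calculus verification that a function $g_t$ is nonnegative. Your argument is the continuous analogue of this architecture: you extend $\hat{\harmonic}_{\agentnum}$ to real $\agentnum$, invoke the monotone L'H\^{o}pital rule in place of the discrete difference-quotient induction, and since $A'(\agentnum)=\int_0^t(1-v)^{\agentnum}\frac{-\ln(1-v)}{v}\,\dd v$ has no clean closed form (unlike the paper's $\Delta_{\agentnum}$), you need the extra Chebyshev/FKG correlation step to decouple $\mathbb{E}[V\psi(V)^2]$ before reducing to the explicit bound $\mathbb{E}_{\mu_{\agentnum}}[-\ln(1-V)]\ge t/(2+\agentnum t)$; the terminal inequality $G(y,m)\ge 0$ via $G(0)=G'(0)=G''(0)=0$ and $G'''\ge 0$ plays the same role as the paper's $g_t\ge 0$ and checks out (I verified the derivative computations and the dominance estimate for $y>3$, $m\ge 2$, which follows from $m^2 e^{(m+1)y}\ge (m-1)(y-3)$). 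What the paper's version buys is brevity—the closed form for the differences makes the correlation inequality unnecessary—while your version buys a statement that is monotone in a continuous parameter $\agentnum\ge 0$ and isolates a reusable probabilistic lemma; the cost is the more delicate endgame you correctly identified, where the margin is third-order in $y$ and a naive expansion would lose the sign.
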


\begin{proof}
    By induction, it is sufficient to show that
    \begin{equation*}
        \frac{\hat{\harmonic}_{\agentnum + 1}}{\hat{\harmonic}_{\agentnum}}
        \leq \frac{\ln(\pert t(\agentnum + 1) + 1)}{\ln(\pert t\agentnum + 1)},
    \end{equation*}
    for any integer $\agentnum \in \mathbb{N}$, where $\pert = \frac{1}{2}$.
    Letting $\Delta_{\agentnum} = \hat{\harmonic}_{\agentnum + 1} - \hat{\harmonic}_{\agentnum}$, it is sufficient to show that
    \begin{equation*}
        \frac{\sum_{i = 0}^{\agentnum} \Delta_{i}}{\sum_{i = 0}^{\agentnum - 1} \Delta_{i}}
        \leq \frac{\ln(\pert t(\agentnum + 1) + 1)}{\ln(\pert t\agentnum + 1)},
    \end{equation*}
    which is equivalent to showing
    \begin{equation*}
        \frac{\Delta_{\agentnum}}{\sum_{i = 0}^{\agentnum - 1} \Delta_{i}}
        \leq \frac{\ln(\pert t(\agentnum + 1) + 1) - \ln(\pert t\agentnum + 1)}{\ln(\pert t\agentnum + 1)}
        = \frac{\ln \parent{ \frac{\pert t}{\pert t\agentnum + 1} + 1}}{\ln(\pert t\agentnum + 1)}.
    \end{equation*}

    We rewrite the above condition as
    \begin{equation*}
        \sum_{i = 0}^{\agentnum - 1} \frac{\Delta_{i}}{\Delta_{\agentnum}}
        \geq \frac{\ln(\pert t\agentnum + 1)}{\ln \parent{ \frac{\pert t}{\pert t\agentnum + 1} + 1}}
        = \sum_{i = 0}^{\agentnum - 1} \frac{\ln \parent{ \frac{\pert t}{\pert ti + 1} + 1}}{\ln \parent{ \frac{\pert t}{\pert t\agentnum + 1} + 1}}.
    \end{equation*}
    Therefore, by induction, it is sufficient to show that
    \begin{equation*}
        \frac{\Delta_{\agentnum - 1}}{\Delta_{\agentnum}}
        \geq \frac{\ln \parent{ \frac{\pert t}{\pert t(\agentnum - 1) + 1} + 1}}{\ln \parent{ \frac{\pert t}{\pert t\agentnum + 1} + 1}},
    \end{equation*}
    for any integer $\agentnum \in \mathbb{N}^{+}$.

    Define a real function $\xi \colon \temp \mapsto \temp \ln\parent{\frac{1}{\temp} + 1}$.
    We know $\xi(\cdot)$ is increasing since 
    \begin{equation*}
        \xi'(\temp) 
        = \ln\parent{\frac{1}{\temp} + 1} - \frac{1}{\temp + 1} 
        = \int_{\temp}^{\infty} \frac{\dd u}{u(u + 1)^{2}}
        > 0,
    \end{equation*}
    Hence, we further have $\xi \parent{\frac{\pert t \agentnum + 1}{\pert t}} \geq \xi \parent{\frac{\pert t (\agentnum - 1) + 1}{\pert t}}$.
    This implies that
    \begin{equation*}
        \frac{\pert t \agentnum + 1}{\pert t} \cdot \ln \parent{ \frac{\pert t}{\pert t \agentnum + 1} + 1}
        \geq \frac{\pert t (\agentnum - 1) + 1}{\pert t} \cdot \ln \parent{ \frac{\pert t}{\pert t(\agentnum - 1) + 1} + 1},
    \end{equation*}
    which is equivalent to the fact that
    \begin{equation*}
        \frac{\pert t \agentnum + 1}{\pert t (\agentnum - 1) + 1} 
        \geq \frac{\ln \parent{ \frac{\pert t}{\pert t(\agentnum - 1) + 1} + 1}}{\ln \parent{ \frac{\pert t}{\pert t\agentnum + 1} + 1}}.
    \end{equation*}
    Finally, according to \Cref{lem:diff Hhat} below, we finally derive that
    \begin{align*}
        \frac{\Delta_{\agentnum - 1}}{\Delta_{\agentnum}} 
        \geq \frac{\pert t \agentnum + 1}{\pert t (\agentnum - 1) + 1}
        \geq \frac{\ln \parent{ \frac{\pert t}{\pert t(\agentnum - 1) + 1} + 1}}{\ln \parent{ \frac{\pert t}{\pert t\agentnum + 1} + 1}}.
    \end{align*}
    which gives us the desired conclusion.
\end{proof}

\begin{lemma} \label{lem:diff Hhat}
    Define $\hat{\harmonic}_{0} = 0$ and
    \begin{equation*}
        \hat{\harmonic}_{\agentnum} = \int_{0}^{1} \frac{1 - (t \temp + 1 - t)^{\agentnum}}{1 - \temp} \cdot \dd \temp,
    \end{equation*}
    for any positive number $\agentnum\in\naturals^{+}$ and real $t \in [0,1]$.
    Then, we have
    \begin{equation*}
        \frac{\hat{\harmonic}_{\agentnum} - \hat{\harmonic}_{\agentnum - 1}}{\hat{\harmonic}_{\agentnum + 1} - \hat{\harmonic}_{\agentnum}} 
        \geq \frac{\frac{1}{2} t \agentnum + 1}{\frac{1}{2} t (\agentnum - 1) + 1}.
    \end{equation*}
\end{lemma}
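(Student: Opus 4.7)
\medskip
\noindent\textbf{Proof proposal for Lemma~\ref{lem:diff Hhat}.}
The plan is to first compute the consecutive differences $\hat{\harmonic}_{\agentnum} - \hat{\harmonic}_{\agentnum - 1}$ in closed form, and then reduce the claimed ratio inequality to a clean polynomial inequality that yields to weighted AM--GM. I would start by observing that
\begin{align*}
\hat{\harmonic}_{\agentnum} - \hat{\harmonic}_{\agentnum - 1}
= \int_{0}^{1} \frac{(t\temp + 1-t)^{\agentnum - 1}\bigl(1 - (t\temp + 1-t)\bigr)}{1-\temp}\,\dd\temp
= \int_{0}^{1} t\,(t\temp + 1-t)^{\agentnum - 1}\,\dd\temp,
\end{align*}
since $1 - (t\temp + 1-t) = t(1-\temp)$. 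The substitution $u = t\temp + 1-t$ turns this into $\int_{1-t}^{1} u^{\agentnum - 1}\,\dd u = (1 - (1-t)^{\agentnum})/\agentnum$. Writing $s := 1 - t \in [0,1]$, the desired inequality thus becomes
\begin{equation*}
\frac{(\agentnum + 1)(1 - s^{\agentnum})}{\agentnum(1 - s^{\agentnum + 1})}
\;\geq\; \frac{tn + 2}{t(n-1) + 2}.
\end{equation*}

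Next I would clear denominators and use $1 - s^{\agentnum + 1} = (1 - s^{\agentnum}) + t s^{\agentnum}$ to cancel the common term $(1 - s^{\agentnum})(\agentnum(tn+2) \text{ part})$. After routine algebra this reduces to
\begin{equation*}
(1 - s^{\agentnum})(1 + s) \;\geq\; \agentnum\,(1-s)\,s^{\agentnum}\bigl(\agentnum(1-s) + 2\bigr),
\end{equation*}
and then, dividing by $(1-s)$ and expanding $(1+s+\dots+s^{\agentnum-1})(1+s)$, to the polynomial inequality
\begin{equation*}
1 + 2s + 2s^{2} + \cdots + 2s^{\agentnum - 1} + \agentnum^{2}\,s^{\agentnum + 1}
\;\geq\; (\agentnum^{2} + 2\agentnum - 1)\,s^{\agentnum}.
\end{equation*}

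The key step is then to recognize the coefficients $1, 2, 2, \ldots, 2, \agentnum^{2}$ on the left as AM--GM weights summing exactly to $W := \agentnum^{2} + 2\agentnum - 1$, matched to the monomials $1, s, s^{2}, \ldots, s^{\agentnum - 1}, s^{\agentnum + 1}$. The weighted exponent computes to
\begin{equation*}
\frac{1\cdot 0 + 2\cdot 1 + 2\cdot 2 + \cdots + 2\cdot(\agentnum - 1) + \agentnum^{2}\cdot(\agentnum + 1)}{W}
= \frac{\agentnum(\agentnum - 1) + \agentnum^{2}(\agentnum + 1)}{\agentnum^{2} + 2\agentnum - 1}
= \agentnum,
\end{equation*}
so the weighted AM--GM inequality yields precisely the polynomial bound above (with equality at $s = 1$, which matches the observation that the original inequality is tight at $s = 1$).

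The hard part is not any individual calculation but spotting the correct weighting that makes AM--GM match the target exponent $s^{\agentnum}$. The tightness at $s=1$ is a useful sanity check: it forces the weights to be proportional to the coefficients of the left-hand side, and the identity $\sum w_i e_i = \agentnum W$ then verifies that this choice aligns with the right-hand exponent. The rest is mechanical.
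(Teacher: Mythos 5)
Your proposal is correct. The opening computation $\hat{\harmonic}_{\agentnum}-\hat{\harmonic}_{\agentnum-1}=\bigl(1-(1-t)^{\agentnum}\bigr)/\agentnum$ coincides with the paper's, and I have checked your algebra: using $1-s^{\agentnum+1}=(1-s^{\agentnum})+ts^{\agentnum}$ the bracket $(\agentnum+1)(t(\agentnum-1)+2)-\agentnum(t\agentnum+2)$ collapses to $2-t=1+s$, the division by $1-s$ is legitimate for $t>0$ (the case $t=0$ is degenerate already in the lemma statement, since all differences vanish), and the weights $1,2,\dots,2,\agentnum^2$ do sum to $\agentnum^2+2\agentnum-1$ with weighted exponent exactly $\agentnum$, so weighted AM--GM delivers $(\agentnum^2+2\agentnum-1)s^{\agentnum}$ on the nose. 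Where you diverge from the paper is in how the reduced inequality is killed: the paper keeps the variable $t$ and the continuous exponent, defines $g_t(x)=(1-\tfrac12 t)-(1-t)^x(\tfrac12 t^2x^2+tx+1-\tfrac12 t)$, and proves $g_t\ge 0$ at positive integers by showing $g_t(0)=0$ and $g_t'\ge 0$ via two derivative/integral-representation computations. Your route trades that calculus for a finite polynomial identity in $s=1-t$ plus a one-line AM--GM certificate; it is more elementary, avoids any monotonicity analysis, and makes the equality case $s=1$ (i.e.\ $t=0$) transparent, at the cost of requiring the (nontrivial but verifiable) observation that the coefficients themselves are the right AM--GM weights. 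Either argument is a complete proof; yours is arguably the cleaner of the two.
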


\begin{proof}
    Let $\Delta_{\agentnum} = \hat{\harmonic}_{\agentnum + 1} - \hat{\harmonic}_{\agentnum}$ and $\delta = \frac{1}{2}$.
    It is sufficient to show
    \begin{equation*}
        \frac{\Delta_{\agentnum - 1}}{\Delta_{\agentnum}} 
        \geq \frac{\pert t \agentnum + 1}{\pert t (\agentnum - 1) + 1}.
    \end{equation*}
    First, by the definition of $\hat{\harmonic}_{\agentnum}$, we have
    \begin{align*}
        \Delta_{\agentnum} 
        ={}& \hat{\harmonic}_{\agentnum + 1} - \hat{\harmonic}_{\agentnum} \\
        ={}& \int_{0}^{1} \parent{\frac{1 - (t \temp + 1 - t)^{\agentnum + 1}}{1 - \temp} - \frac{1 - (t \temp + 1 - t)^{\agentnum}}{1 - \temp}} \cdot \dd \temp \\
        ={}& t \cdot \int_{0}^{1} (t \temp + 1 - t)^{\agentnum} \cdot \dd \temp \\
        ={}& \frac{1 - (1 - t)^{\agentnum + 1}}{\agentnum + 1}.
    \end{align*}
    To show the desired inequality, we construct the following difference:
     \begin{align*} 
         &\Delta_{\agentnum - 1} \cdot (\pert t (\agentnum - 1) + 1) - \Delta_{\agentnum} \cdot (\pert t \agentnum + 1) \\
         ={}& (1 - (1 - t)^{\agentnum}) \cdot \parent{\pert t + \frac{1 - \pert t}{\agentnum}} - (1 - (1 - t)^{\agentnum + 1}) \cdot \parent{\pert t + \frac{1 - \pert t}{\agentnum + 1}} \\
         ={}& \frac{1 - \pert t}{\agentnum (\agentnum + 1)} - (1 - t)^{\agentnum} \cdot \parent{\pert t + \frac{1 - \pert t}{\agentnum}} + (1 - t)^{\agentnum + 1} \cdot \parent{\pert t + \frac{1 - \pert t}{\agentnum + 1}} \\
         ={}& \frac{1 - \pert t}{\agentnum (\agentnum + 1 )} - (1 - t)^{\agentnum} \cdot \parent{\pert t^{2} + \frac{(nt + 1)(1 - \pert t)}{\agentnum(\agentnum + 1)}} \\
         ={}& \frac{1 - \pert t}{\agentnum (\agentnum + 1 )} - (1 - t)^{\agentnum} \cdot \frac{\pert t^{2} \agentnum^{2} + t \agentnum + 1 - \pert t}{\agentnum(\agentnum + 1)}.
     \end{align*}     
     This indicates that we only need to show the function
     \begin{align*}
         g_{t}(\temp) 
={}& (1 - \pert t) - (1 - t)^{\temp} \parent{\pert t^{2} \temp^{2} + t \temp + 1 - \pert t}
     \end{align*}
     is non-negative when $\temp \in \mathbb{N}$. 
     Observe $g_{t}(0) = 0$, it is sufficient to show $g_{t}(\cdot)$ is weakly increasing for any fixed $t \in [0, 1]$.
     
     Notice that the first derivation of the function $g_{t}(\cdot)$ can be calculated as
     \begin{align*}
         g'_{t}(\temp)
         = -(1 - t)^{\temp} \parent{\ln(1 - t) (\pert t^{2} \temp^{2} + t \temp + 1 - \pert t) + (2\pert t^{2} \temp + t)}.
     \end{align*}
     Clearly, we have $g'_{1}(\temp) = 2\pert\temp + 1 \geq 0$, indicating that $g_{1}(\cdot)$ is weakly increasing.
     Now, we assume that $t \in [0, 1)$.
     
     Consider the function $\frac{-g'_{t}(\temp)}{(1 - t)^{\temp}}$ and analysis the sign of this function for any $t \in [0, 1)$.
     On the one hand, the value of this function at the point $\temp = 0$ is
     \begin{align*}
         \left. \frac{-g'_{t}(\temp)}{(1 - t)^{\temp}} \right\vert_{\temp = 0}
         = \ln (1 - t)\parent{1 - \frac{1}{2} t} + t
         = -\frac{1}{2} \int_{1 - t}^{1} \left( \frac{1}{u} + \ln u - 1  \right) \cdot \dd u
         \leq 0.
     \end{align*}
     On the other hand, since $\ln(1 - t) < 0$ for $t \in [0, 1)$, the first derivation of this function can be calculated as
     \begin{align*}
         \DDX{\temp} \parent{\frac{-g'_{t}(\temp)}{(1 - t)^{\temp}}}
         = t \parent{\ln(1 - t)(2\pert t \temp + 1) + 2\pert t}
         = t \parent{\ln(1 - t) (t \temp + 1) + t}
        \leq t(\ln(1 - t) + t) \leq 0,
     \end{align*}
     which implies that the function $-g'_{t}(\temp)/(1 - t)^{\temp}$ is weakly decreasing.
     Combining the above two facts, we derive that for any $\temp \geq 0$, it holds $-g'_{t}(\temp)/(1 - t)^{\temp} \leq 0$, leading to $g'_{t}(\temp) \geq 0$.
     
     As a result, we finally obtain that $g_{t}(\cdot)$ is weakly increasing for $t \in [0, 1)$.
     This completes our proof.    
\end{proof}

\begin{restatable}{lemma}{lemFBRatio}
\label{lem:FB ratio}
    For $\lownum \leq \upnum$, the ratio between $\utipfb(\lownum)$ and $\utipfb(\upnum)$ can be bounded by $\frac{\ln \parent{\frac{1}{2}(1 - \welfaredis(0))\lownum + 1}}{\ln \parent{\frac{1}{2}(1 - \welfaredis(0))\upnum + 1}}$ with an $\smallO{\lownum, \upnum}{1}$ error, i.e.,
    \begin{equation*}
        \frac{\utipfb(\lownum)}{\utipfb(\upnum)} 
        \geq \frac{\ln \parent{\frac{1}{2}(1 - \welfaredis(0))\lownum + 1}}{\ln \parent{\frac{1}{2}(1 - \welfaredis(0))\upnum + 1}} - \smallO{\lownum, \upnum}{1}.
    \end{equation*}
\end{restatable}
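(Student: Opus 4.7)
The plan is to reduce the ratio $\utipfb(\lownum)/\utipfb(\upnum)$ to a purely combinatorial ratio $K_\lownum/K_\upnum$ via a monotone likelihood ratio (MLR) argument exploiting the MHR property, and then to evaluate that ratio asymptotically. I begin with a change of variables: substituting $v = 1-\welfaredis(t)$ in $\utipfb(\agentnum) = \int_0^\infty (1 - \welfaredis^\agentnum(t))\,dt$ and using $\welfaredens(t) = \hr(t)(1-\welfaredis(t))$, I obtain
\begin{align*}
    \utipfb(\agentnum) = \int_0^{p} g(v) \cdot \frac{1 - (1-v)^\agentnum}{v} \,dv,
\end{align*}
where $p = 1 - \welfaredis(0)$ and $g(v) \deq 1/\hr(\welfaredis^{-1}(1-v))$. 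Since $\welfaredis$ is MHR, $\hr$ is non-decreasing in $t$, and because $\welfaredis^{-1}(1-v)$ is non-increasing in $v$, the function $g$ is \emph{non-decreasing} in $v$ on $(0, p]$.

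Next I apply the MLR argument. Define $K_\agentnum \deq \int_0^p (1 - (1-v)^\agentnum)/v \, dv$ and view $w_\agentnum(v) \deq (1-(1-v)^\agentnum)/(v K_\agentnum)$ as probability densities on $(0, p]$. A direct computation gives $w_\upnum(v)/w_\lownum(v) \propto \sum_{k=0}^{\upnum-1}(1-v)^k / \sum_{k=0}^{\lownum-1}(1-v)^k$, and I claim this ratio is non-increasing in $v$ for $\lownum \leq \upnum$. Equivalently, for the discrete distribution $P_N(k) \propto (1-v)^k$ on $\{0,1,\ldots,N-1\}$, its mean is non-decreasing in $N$ for each fixed $v$, since enlarging the support from $\{0,\ldots,\lownum-1\}$ to $\{0,\ldots,\upnum-1\}$ adds mass only at larger $k$. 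By the MLR property, $\expect[w_\upnum]{g} \leq \expect[w_\lownum]{g}$ for every non-decreasing $g$, which yields $\utipfb(\upnum)/K_\upnum \leq \utipfb(\lownum)/K_\lownum$ and hence
\begin{align*}
    \frac{\utipfb(\lownum)}{\utipfb(\upnum)} \geq \frac{K_\lownum}{K_\upnum}.
\end{align*}

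Finally I evaluate $K_\agentnum$ asymptotically. Expanding gives $K_\agentnum = \sum_{j=1}^\agentnum (1-(1-p)^j)/j = \harmonic_\agentnum + \ln p + R_\agentnum$, where $R_\agentnum = \sum_{j > \agentnum} (1-p)^j/j \leq (1-p)^{\agentnum+1}/p$ satisfies $R_\agentnum = o(1)$ as $\agentnum \to \infty$ for any fixed $p > 0$. Combined with $\harmonic_\agentnum = \ln \agentnum + \gamma + o(1)$ and $\ln(\tfrac{1}{2}p\agentnum+1) = \ln(p\agentnum) - \ln 2 + o(1)$, I obtain $K_\agentnum = \ln(\tfrac{1}{2}p\agentnum + 1) + (\gamma + \ln 2) + o(1)$. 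Applying the elementary inequality $(A+C)/(B+C) \geq A/B$ for $0 \leq A \leq B$ and $C \geq 0$ (with $A,B$ the logarithmic terms and $C = \gamma + \ln 2$), the additive $o(1)$ corrections contribute only lower-order perturbations, so $K_\lownum/K_\upnum \geq \ln(\tfrac{1}{2}p\lownum+1)/\ln(\tfrac{1}{2}p\upnum+1) - o_{\lownum,\upnum}(1)$. The main technical obstacle is rigorously verifying the MLR property: while the underlying monotonicity-of-truncated-mean fact is intuitive, translating it into a precise polynomial inequality in $v$ requires either a direct differentiation argument on the polynomial ratio or a Chebyshev-type sum inequality on the joint moments of $k$ and $(1-v)^k$.
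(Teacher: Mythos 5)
Your proposal is correct, and it takes a genuinely different route from the paper's. The paper decomposes $\utipfb(\agentnum)$ discretely, conditioning on the number of agents with non-negative \contribution (its Lemma~\ref{lem:U_fb}), replaces the order-statistic expectations $\bar{\mu}_i$ by harmonic numbers via a cited ratio bound for MHR order statistics (Lemma~\ref{lem:ratio of ELSO}) combined with a threshold-splitting and Chernoff-bound argument to control the swap, and then closes with an \emph{exact} combinatorial inequality $\hat{\harmonic}_{\lownum}/\hat{\harmonic}_{\upnum} \geq \ln(\tfrac12 t\lownum+1)/\ln(\tfrac12 t\upnum+1)$ proved by a delicate induction (Lemmas~\ref{lem:harmonic bound} and~\ref{lem:diff Hhat}). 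You instead work in quantile space and use the MHR hypothesis directly: the weight $g(v)=1/\hr(\welfaredis^{-1}(1-v))$ is non-decreasing, and your likelihood-ratio comparison yields the \emph{exact} intermediate inequality $\utipfb(\lownum)/\utipfb(\upnum)\geq K_{\lownum}/K_{\upnum}$ with no error term and no external citation --- note that after the substitution $v=t y$ your $K_{\agentnum}$ is precisely the paper's $\hat{\harmonic}_{\agentnum}$, so the two proofs meet at the same combinatorial quantity. The trade-off sits in the last step: the paper's bound on $\hat{\harmonic}_{\lownum}/\hat{\harmonic}_{\upnum}$ is exact, whereas your asymptotic evaluation $K_{\agentnum}=\ln(\tfrac12 p\agentnum+1)+\gamma+\ln 2+\smallO{\agentnum}{1}$ plus the mediant inequality introduces the $\smallO{\lownum,\upnum}{1}$ there instead of in the earlier reduction; both suffice for the lemma as stated, though your error requires $\lownum\to\infty$ (via the $R_{\lownum}$ and $\harmonic_{\lownum}$ remainders) while the paper's vanishes whenever $\ln\upnum\cdot e^{-t\sqrt{\lownum\upnum}}\to 0$. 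The step you flag as the main obstacle is in fact routine: the monotonicity of $\sum_{k=0}^{\upnum-1}u^k/\sum_{k=0}^{\lownum-1}u^k$ in $u$ reduces, after taking the logarithmic derivative, to the statement that the mean of the distribution proportional to $u^k$ on $\{0,\dots,N-1\}$ is non-decreasing in $N$, which follows from the convex-combination argument you sketch. One small point of care: if $\welfaredis$ has an atom at the right endpoint of its support, the $v$-integral should start at that atom's mass rather than at $0$; extending $g$ by zero below that point preserves monotonicity, so the argument is unaffected.
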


\begin{proof}
    Let $\bar{\welfaredis}$ be the conditional distribution conditioning on the event that the random value following $\welfaredis$ is non-negative.
    Denote by $\bar{\mu}_{\agentnum}$ the expected value of the first-order statistic out of $\agentnum$ i.i.d.\ variables drawing from distribution $\bar{\welfaredis}$.
    Define $t = 1 - \welfaredis(0)$ and $B(\agentnum, k) = \binom{\agentnum}{k} t^{k}(1 - t)^{\agentnum - k}$.
    According to \Cref{lem:U_fb}, we know
    \begin{equation*}
        \utipfb(\agentnum) = \sum_{i = 0}^{\agentnum} \binom{\agentnum}{i} t^{i}(1 - t)^{\agentnum - i} \cdot \bar{\mu}_{i}.
    \end{equation*}

    For given $\lownum$ and $\upnum$ with $\lownum \leq \upnum$, we define a positive real $k = \sqrt{\lownum \upnum} t$, so that $\lownum t\leq k \leq \upnum t$.    
    Based on \Cref{lem:ratio of ELSO}, we can obtain a lower bound of $\frac{\utipfb(\lownum)}{\utipfb(\upnum)}$ by replacing $\bar{\mu}_{i}$ with $\harmonic_{i}$ as follows.
    \begin{align*}
        \frac{\utipfb(\lownum)}{\utipfb(\upnum)}
        \geq{} \frac{\sum_{i = 0}^{\floor{k}} B(\lownum, i) \bar{\mu}_{i} + \sum_{i = \floor{k}}^{\lownum} B(\lownum, i) \bar{\mu}_{k}}{\sum_{i = 0}^{\ceil{k}} B(\upnum, i) \bar{\mu}_{k} + \sum_{i = \ceil{k}}^{\upnum} B(\upnum, i) \bar{\mu}_{i}}
        \geq{} \frac{\sum_{i = 0}^{\floor{k}} B(\lownum, i) \harmonic_{i} + \sum_{i = \floor{k}}^{\lownum} B(\lownum, i) \harmonic_{k}}{\sum_{i = 0}^{\ceil{k}} B(\upnum, i) \harmonic_{k} + \sum_{i = \ceil{k}}^{\upnum} B(\upnum, i) \harmonic_{i}}.
    \end{align*}
    
    In addition, by using the Chernoff inequalities of $B(\cdot, \cdot)$ and \Cref{lem:ratio of ELSO}, we further obtain that
    \begin{align*}
        \frac{\utipfb(\lownum)}{\utipfb(\upnum)}
        \geq{}& \frac{\sum_{i = 0}^{\floor{k}} B(\lownum, i) \harmonic_{i} + \sum_{i = \floor{k}}^{\lownum} B(\lownum, i) \harmonic_{k}}{\sum_{i = 0}^{\ceil{k}} B(\upnum, i) \harmonic_{k} + \sum_{i = \ceil{k}}^{\upnum} B(\upnum, i) \harmonic_{i}} \\
        \geq{}& \frac{\sum_{i = 0}^{\lownum} B(\lownum, i) \harmonic_{i} - (\harmonic_{\lownum} - \harmonic_{k})\sum_{i = \floor{k}}^{\lownum} B(\lownum, i)}{\sum_{i = 0}^{\upnum} B(\upnum, i) \harmonic_{i} + \harmonic_{k} \sum_{i = 0}^{\ceil{k}} B(\upnum, i)}  \\
        \geq{}& \frac{\sum_{i = 0}^{\lownum} B(\lownum, i) \harmonic_{i} - (\harmonic_{\lownum} - \harmonic_{k})\ee^{-2\lownum(1 - t - (1 - \floor{k}/\agentnum))^{2})}}{\sum_{i = 0}^{\upnum} B(\upnum, i) \harmonic_{i} + \harmonic_{k} \ee^{-2\upnum(t - \ceil{k}/\agentnum)^{2})}}  \\
        \geq{}& \frac{\sum_{i = 0}^{\lownum} B(\lownum, i) \harmonic_{i}}{\sum_{i = 0}^{\upnum} B(\upnum, i) \harmonic_{i}} - \bigO{\lownum, \upnum}{\ln \upnum \cdot \ee^{-t\sqrt{\lownum \upnum}}}.
    \end{align*}

    Recall that $\hat{\harmonic}_{0} = 0$ and
    \begin{equation*}
        \hat{\harmonic}_{\agentnum} = \int_{0}^{1} \frac{1 - (t \temp + 1 - t)^{\agentnum}}{1 - \temp} \cdot \dd \temp,
    \end{equation*}
    for any positive number $\agentnum\in\naturals^{+}$ and real $t \in [0,1]$.
    Combining \cref{lem:U_fb,lem:harmonic bound}, it directly follows that
    \begin{align*}
        \frac{\utipfb(\lownum)}{\utipfb(\upnum)}
        \geq \frac{\sum_{i = 0}^{\lownum} B(\lownum, i) \harmonic_{i}}{\sum_{i = 0}^{\upnum} B(\upnum, i) \harmonic_{i}} - \bigO{\lownum, \upnum}{\ln \upnum \cdot \ee^{-t\sqrt{\lownum \upnum}}}
        =\frac{\hat{\harmonic}_{\lownum}}{\hat{\harmonic}_{\upnum}} - \smallO{\lownum, \upnum}{1}
        \geq \frac{\ln \parent{\frac{1}{2}t\lownum + 1}}{\ln \parent{\frac{1}{2}t\upnum + 1}} - \smallO{\lownum, \upnum}{1},
    \end{align*}
    which completes the proof.
\end{proof}

Now, we are ready to prove \Cref{thm:robust MHR bounds}.

\begin{proof}[Proof of \Cref{thm:robust MHR bounds}.]
    Suppose that for each agent $i \in \agents$, the \contribution $\welfare_{i}$ satisfies an MHR distribution $\welfaredis$.
    According to \Cref{lem:FB ratio}, we have
    \begin{align*}
        \frac{\utipfb(\lownum)}{\utipfb(\upnum)} 
        \geq \frac{\ln \parent{\frac{1}{2} (1 - \welfaredis(0)) \lownum + 1}}{\ln \parent{\frac{1}{2} (1 - \welfaredis(0)) \upnum + 1}} - \smallO{\lownum, \upnum}{1}.
    \end{align*}   
    Consider the design of setting $\contract^{\dag}$ as the optimal contract for the number of agents being $\lownum$, i.e., 
    \begin{align*}
        \contract^{\dag} = \argmax_{\contract\in[0, 1]}\rewardfun(1 - \contract)(1 - \valdis_{\contract}^{\lownum}(\reserve_{\contract})).
    \end{align*}
    Exploiting the monotonicity of $\utip(\contract^{\dag}, \cdot)$ and $\utipfb(\cdot)$, $\ratiofb(\lownum,\upnum)$ can be upper bounded by
    \begin{align*}
        \ratiofb(\lownum,\upnum)
        \leq \max_{\agentnum\in[\lownum:\upnum]} \frac{\utipfb(n)}{\utip(\contract^{\dag}, \agentnum)}
        \leq \frac{\utipfb(\upnum)}{\utip\parent{\contract^{\dag}, \lownum}}
        \leq \frac{\utipfb(\upnum)}{\utipfb(\lownum)} \cdot \frac{\utipfb(\lownum)}{\utip\parent{\contract^{\dag}, \lownum}}
        \leq \frac{\ln \parent{\frac{1}{2} (1 - \welfaredis(0)) \upnum + 1}}{\ln \parent{\frac{1}{2} (1 - \welfaredis(0)) \lownum + 1}} \cdot \ee^{2} + \smallO{\lownum, \upnum}{1},
    \end{align*}
    where the last inequality holds by the definition of $\contract^{\dag}$ and \Cref{thm:mhr ratio bounds for n in naturals}.    
    Moreover, by \Cref{thm:r_fb asymp lb}, we can also obtain that
    \begin{equation*}
        \ratiosb(\lownum, \upnum)
        \leq \ratiofb(\lownum, \upnum) 
        =  \frac{\ln \parent{\frac{1}{2} (1 - \welfaredis(0)) \upnum + 1}}{\ln \parent{\frac{1}{2} (1 - \welfaredis(0)) \lownum + 1}} \cdot \ee^{2} + \bigO{\lownum,\upnum}{\frac{\log \log \lownum \cdot \log \upnum}{(\log \lownum)^{2}}},
    \end{equation*}
    which completes the proof.
\end{proof}
 
\subsection{Proof of Theorem~\ref{thm:robust negative}}
\label{apx:robust negative proof}

In this section, we will prove \Cref{thm:robust negative}, which provide negative results for the case that the principal has little information about the market size.

\thmRobustNegative*
\begin{proof}
    Suppose that $\welfare_{i}$ follows the conditional exponential distribution conditioning on the event that the random variable is no more than $\rewardfun$.
    That is
    \begin{equation*}
        \welfaredis_{\rewardfun}(\temp) = \left\{
            \begin{aligned}
                &0,&&\temp \in (-\infty, 0); \\
                &\frac{1 - \ee^{-\temp}}{1 - \ee^{-\rewardfun}},&&\temp \in [0, \rewardfun]; \\
                &1,&&\temp \in (\rewardfun, +\infty).
            \end{aligned}
        \right.
    \end{equation*}
    We know that $\welfaredis_{\rewardfun}$ is MHR for each $\rewardfun > 0$.
    The PDF of $\welfaredis_{\rewardfun}$ is
    \begin{equation*}
        \welfaredens_{\rewardfun}(\temp) = \frac{\ee^{-\temp}}{1 - \ee^{-\rewardfun}},
    \end{equation*}
    and the virtual price is
    \begin{equation*}
        \virval(\temp) = \temp - \frac{1 - \welfaredis_{\rewardfun}(\temp)}{\welfaredens_{\rewardfun}(\temp)} = \temp - \frac{\ee^{-\temp} - \ee^{-\rewardfun}}{\ee^{-\temp}} = \temp - 1 + \ee^{\temp - \rewardfun}.
    \end{equation*}
    
    We will prove a stronger statement: for any positive real $\varepsilon > 0$ and contract $\contract \in [0, 1]$, there exists large enough reward $\rewardfun$ and $\upnum$, such that it always holds that $\ratiofb(\contract) \leq \varepsilon$.
    Formally, we let $\rewardfun$ and $\upnum$ satisfy the following relations
    \begin{align*}
        (\rewardfun + 1)\ee^{-\rewardfun} \leq \varepsilon,\quad
        \frac{\ln \rewardfun}{\rewardfun (\ee^{-1} - \varepsilon)} \leq \varepsilon,\quad
        \frac{\ln \rewardfun}{\rewardfun - 1 - \varepsilon} \leq \varepsilon,\quad
        (\rewardfun - 1) \ee^{-\upnum} \leq \varepsilon.
    \end{align*}
    
    Notice that the monopoly reserve price $\reserve_{\contract}$ is positive.
    By definition, given any $\contract \in [0, 1]$, we have
    \begin{align*}
        \utip(\contract, \agentnum) 
        ={}& \rewardfun (1 - \contract) \parent{1 - \welfaredis_{\rewardfun}^{\agentnum}(\rewardfun (1 - \contract) + \reserve_{\contract})} \\
        \leq{}& \rewardfun (1 - \contract) \parent{1 - \welfaredis_{\rewardfun}^{\agentnum}(\rewardfun (1 - \contract))} \\
={}& \rewardfun (1 - \contract) \left(1 - \left( \frac{1 - \ee^{-\rewardfun (1 - \contract)}}{1 - \ee^{-\rewardfun}} \right)^{\agentnum} \right) \\
        \leq{}& \rewardfun (1 - \contract) \left(1 - \left(1 - \ee^{-\rewardfun (1 - \contract)} \right)^{\agentnum} \right),
    \end{align*}
    and
    \begin{align*}
        \utipsb(1)
        = \max_{\temp \in [0, \rewardfun]} \temp \left(1 - \frac{1 - \ee^{-\temp}}{1 - \ee^{-\rewardfun}} \right)
        \geq \frac{\ee^{-1} - \ee^{-\rewardfun}}{1 - \ee^{-\rewardfun}}
        \geq \ee^{-1} - \varepsilon, 
    \end{align*}
and
    \begin{align*}
        \utipsb(\upnum)
        ={}& \int_{0}^{\rewardfun} \max \{\virval(\temp), 0\} \cdot \dd \welfaredis_{\rewardfun}^{\upnum}(\temp) \\
        \geq{}& \int_{1}^{\rewardfun} (\temp - 1) \cdot \dd \welfaredis_{\rewardfun}^{\upnum}(\temp) \\
        ={}& \rewardfun - 1 - \int_{0}^{\rewardfun - 1} \welfaredis_{\rewardfun}^{\upnum}(\temp + 1) \cdot \dd \temp \\
        ={}& \rewardfun - 1 - \int_{0}^{\rewardfun - 1} \left(1 - \frac{1 - \ee^{-\temp - 1}}{1 - \ee^{-\rewardfun}} \right)^{\upnum} \cdot \dd \temp \\
        \geq{}& \rewardfun - 1 - (\rewardfun - 1) \left(1 - \frac{1 - \ee^{-1}}{1 - \ee^{-\rewardfun}} \right)^{\upnum} \\
        \geq{}& \rewardfun - 1 - (\rewardfun - 1) \ee^{-\upnum} \\
        \geq{}& \rewardfun - 1 - \varepsilon.
    \end{align*}
    
    Next, we bound the ratio when $\agentnum = 1$ and $\agentnum = b$, respectively.
    \begin{align*}
        \frac{\utipsb(1)}{\utip(\contract, 1)}
        \geq \frac{\ee^{-1} - \varepsilon}{\rewardfun (1 - \contract) \left(1 - (1 - \ee^{-\rewardfun (1 - \contract)}) \right)}
        = \frac{\ee^{-1} - \varepsilon}{\rewardfun (1 - \contract) \ee^{-\rewardfun (1 - \contract)}},
    \end{align*}
    and
    \begin{equation*}
        \frac{\utipsb(\upnum)}{\utip(\contract, \upnum)}
        \geq \frac{(\rewardfun - \varepsilon)(1 - \varepsilon)}{\rewardfun (1 - \contract) \left(1 - (1 - \ee^{-\rewardfun (1 - \contract)})^{\upnum} \right)}
        \geq \frac{\rewardfun - 1 - \varepsilon}{\rewardfun (1 - \contract)}.
    \end{equation*}
    
    Now, we are ready to bound the upper bound of $\ratiosb$.
    Define $\contract^{*} = 1 - (\ln \rewardfun)/\rewardfun > 0$.
    Then, we know that $\rewardfun (1 - \contract^{*}) = \ln \rewardfun$.
    By dividing the $[0, 1]$ into two subintervals, we have
    \begin{align*}
        \ratiosb(1,\upnum)
        ={}& \min_{\contract \in [0, 1]} \max_{\agentnum\in [\upnum]} \frac{\utipsb(\agentnum)}{\utip(\contract, \agentnum)} \\
        \geq{}& \min_{\contract\in[0, 1]} \max \left\{ \frac{\utipsb(1)}{\utip(\contract, 1)}, \frac{\utipsb(\upnum)}{\utip(\contract, \upnum)} \right\} \\
        ={}& \min \set{\min_{\contract \in [0, \contract^{*}]} \max \left\{ \frac{\utipsb(1)}{\utip(\contract, 1)}, \frac{\utipsb(\upnum)}{\utip(\contract, \upnum)} \right\}, \min_{\contract \in [\contract^{*}, 1]} \max \left\{ \frac{\utipsb(1)}{\utip(\contract, 1)}, \frac{\utipsb(\upnum)}{\utip(\contract, \upnum)} \right\}} \\
        \geq{}& \min \left\{ \min_{\contract \in [0, \contract^{*}]} \frac{\utipsb(1)}{\utip(\contract, 1)}, \min_{\contract \in [\contract^{*}, 1]} \frac{\utipsb(\upnum)}{\utip(\contract, \upnum)} \right\}.
    \end{align*}
    On the one hand,
    \begin{align*}
        \min_{\contract \in [0, \contract^{*}]} \frac{\utipsb(1)}{\utip(\contract, 1)}
        \geq \min_{\contract \in [0, \contract^{*}]} \frac{\ee^{-1} - \varepsilon}{\rewardfun (1 - \contract) \ee^{-\rewardfun (1 - \contract)}}
        \geq \frac{\rewardfun(\ee^{-1} - \varepsilon)}{\ln \rewardfun}
        \geq \frac{1}{\varepsilon}.
    \end{align*}
    On the other hand,
    \begin{align*}
        \min_{\contract \in [\contract^{*}, 1]} \frac{\utipsb(\upnum)}{\utip(\contract, \upnum)}
        \geq \min_{\contract \in [\contract^{*}, 1]} \frac{\rewardfun - 1 - \varepsilon}{\rewardfun (1 - \contract)} 
        \geq \frac{\rewardfun - 1 - \varepsilon}{\rewardfun (1 - \contract^{*})}
        \geq \frac{\rewardfun - 1 - \varepsilon}{\ln \rewardfun}
        \geq \frac{1}{\varepsilon}.
    \end{align*}
    As a result, we finally derive that $\ratiofb(1,\upnum) \geq \ratiosb(1,\upnum) \geq \frac{1}{\varepsilon}$.
\end{proof}

\end{document}